\documentclass[onecolumn, draftcls,10pt]{IEEEtran}


\usepackage{amsfonts}                                                          
\usepackage{epsfig}
\usepackage{amsthm}
\usepackage{graphicx}        
\usepackage{latexsym}
\usepackage{amssymb}
\usepackage{amsmath}
\usepackage{parskip}
\usepackage{multirow}
\usepackage{cite}
\usepackage{mathtools}
\usepackage{epstopdf}
\usepackage{etoolbox}
\usepackage{array}
\usepackage{mathptmx}
\usepackage[colorlinks=true]{hyperref}
\usepackage[bottom]{footmisc}
\usepackage{tikz}

\usetikzlibrary{decorations.pathreplacing}

\usepackage{verbatim}
\usepackage{calrsfs}

\DeclareMathAlphabet{\pazocal}{OMS}{zplm}{m}{n}

\newcommand{\mb}{\mathbb}

\let\bbordermatrix\bordermatrix
\patchcmd{\bbordermatrix}{8.75}{4.75}{}{}
\patchcmd{\bbordermatrix}{\left(}{\left[}{}{}
\patchcmd{\bbordermatrix}{\right)}{\right]}{}{}

\newcommand{\sr}{\stackrel}

\newcommand{\rar}{\rightarrow}

\newcommand{\tri}{\sr{\triangle}{=}}

\newcommand{\be}{\begin{equation}}
\newcommand{\ee}{\end{equation}}
\newcommand{\bea}{\begin{eqnarray}}
\newcommand{\eea}{\end{eqnarray}}
\newcommand{\bes}{\begin{eqnarray*}}
\newcommand{\ees}{\end{eqnarray*}}
\newcommand{\bce}{\begin{center}}
\newcommand{\ece}{\end{center}}
\newcommand{\beae}{\begin{IEEEeqnarray}{rCl}}
\newcommand{\eeae}{\end{IEEEeqnarray}}

\def\VR{\kern-\arraycolsep\strut\vrule &\kern-\arraycolsep}
\def\vr{\kern-\arraycolsep & \kern-\arraycolsep}

\newcommand{\ben}{\begin{enumerate}}
\newcommand{\een}{\end{enumerate}}


\newcommand{\hso}{\hspace{.1in}}
\newcommand{\hst}{\hspace{.2in}}

\newcommand{\noi}{\noindent}



\newtheorem{theorem}{Theorem}[section]

\newtheorem{remark}{Remark}[section]
\newtheorem{corollary}{Corollary}[section]

\newtheorem{assumptions}{Assumptions}[section]
\newtheorem{definition}{Definition}[section]
\newtheorem{lemma}{Lemma}[section]
\newtheorem{example}{Example}[section]

%














\begin{document}


\title{Capacity Achieving Distributions \& Information Lossless Randomized Strategies  for Feedback Channels with Memory: The LQG Theory of Directed Information-Part II}



 \author{
   \IEEEauthorblockN{
     Charalambos D. Charalambous\IEEEauthorrefmark{1},
     Christos Kourtellaris\IEEEauthorrefmark{1},
     Sergey Loyka\IEEEauthorrefmark{2}\\}
   \IEEEauthorblockA{
     \IEEEauthorrefmark{1}Department of Electrical and Computer Engineering\\University of Cyprus, 75 Kallipoleos Avenue, P.O. Box 20537, Nicosia, 1678, Cyprus
     \\
     Email: chadcha@ucy.ac.cy, kourtellaris.christos@ucy.ac.cy}\\
   \IEEEauthorblockA{
     \IEEEauthorrefmark{2}School of Electrical Engineering and Computer Science\\ University of Ottawa, Ontarion, Canada\\
     Email: sergey,loyka@ieee.org}
 }

\maketitle

\begin{abstract}
A methodology is developed to realized optimal channel input conditional distributions, which maximize the finite-time horizon directed information,  for channels with memory and feedback,  by information lossless randomized strategies.
The methodology is applied to  general Time-Varying Multiple Input Multiple Output (MIMO) Gaussian Linear Channel Models (G-LCMs) with memory, subject to average transmission cost constraints of quadratic form.
The realizations of optimal distributions by randomized strategies are shown to exhibit a decomposion into a deterministic part and a random part. The decomposition   reveals the dual role of randomized strategies,   to control the channel output process and to transmit new information over the channels. Moreover, a separation principle is shown  between the computation of the optimal deterministic part and the random part of the randomized strategies. The dual role of randomized strategies generalizes the  Linear-Quadratic-Gaussian (LQG) stochastic optimal control theory to directed information pay-offs. 

The characterizations of feedback capacity  are obtained  from the per unit time limits of finite-time horizon directed information, without imposing \'a priori  assumptions, such as, stability of channel models or ergodicity of channel input and output processes. For time-invariant MIMO G-LCMs with memory, it is shown that whether feedback increases capacity, is directly related to the channel parameters and the transmission cost function, through the solutions of Riccati matrix equations, and moreover
for unstable channels, feedback capacity is non-zero, provided the
power exceeds a critical level.
\end{abstract}

\newpage 

\tableofcontents
\newpage

\section{Introduction}
\label{intro}
Stochastic optimal control theory and a variational equality of directed information are applied in \cite{kourtellaris-charalambousIT2015_Part_1}, to identify the  information structures of optimal channel input conditional distributions, 
 ${\cal P}_{[0,n]} \tri \big\{ {\bf P}_{A_i|A^{i-1}, B^{i-1}}:i=0, 1, \ldots, n\big\} \subset {\cal P}_{[0,n]}(\kappa)$, which maximize the finite-time horizon directed information $I(A^n \rar B^n)$,  for channels with memory and feedback,  defined by
 \begin{align}
C_{A^n \rar B^n}^{FB}(\kappa) 
= \sup_{{\cal P}_{[0,n]}(\kappa)}I(A^n \rar B^n), \hst \hst I(A^n \rar B^n) \tri \sum_{i=0}^n I(A^i; B_i|B^{i-1}) \label{FTFI_CC}
\end{align}
where  $A^n\tri \{A_0, A_1, \ldots, A_n\}$ and $B^n \tri \{B_0, B_1, \ldots, B_n\}$, are channel input and output RVs, respectively, and
${\cal P}_{[0,n]}(\kappa) \subset {\cal P}_{[0,n]}$ denotes a subset of distributions, which satisfy a transmission cost constraint and $\kappa$ is the power. Specifically, given a channel distribution and a transmission cost constraint, the optimal channel input conditional  distributions, which maximize $I(A^n \rar B^n)$, are characterized by conditional independence properties,  called  the  ``information structures''. The information structures  of optimal distributions  simplify the resulting finite-time horizon optimization problem called the  ``characterization of Finite Transmission Feedback Information (FTFI) capacity''. In principle, the information structures of optimal distributions and corresponding   characterizations of FTFI capacity, are analogous to those  of memoryless channels without feedback, which are established via  the well-known upper bounds  
\bea
C_{A^n ; B^n} \tri \max_{{\bf P}_{A^n}} I(A^n; B^n)\leq   \max_{{\bf P}_{A_i}: i=0, \ldots, n}\sum_{i=0}^n  I(A_i; B_i) \leq (n+1) \max_{{\bf P}_A}I(A; B) \equiv (n+1) C \label{DMC}
\eea
 which are  achievable if ${\bf  P}_{A_i|A^{i-1}}={\bf P}_{A_i}, i=0, \ldots, n$ and identically distributed, which implies the joint process $\{(A_i, B_i):i=0, \ldots, n\}$ is identically distributed, and hence ergodic. For memoryless channels  with feedback, (\ref{DMC}) remains valid, provided  it is shown, via the converse to the coding theorem, that feedback does not give better upper bounds \cite{cover-thomas2006}. For memoryless channels  the above bounds are applied in the converse part of the coding theorem, to obtain a tight bound on any achievable rate, while for the direct part,  codes  are generated independently according to ${\bf P}_{A^n}^*(da^n)\tri \otimes_{i=0}^n {\bf P}_{A}^*(da_i)$, where ${\bf P}_A^*$ is the one which corresponds to $C$.  \\
However, to make the  transition from memoryless channels to channels with memory,  without any  \'a priori assumptions, such as, stationarity, ergodicity or information stability, it is necessary to investigate the  characterizations of FTFI capacity $C_{A^n \rar B^n}^{FB}(\kappa)$, and its asymptotic properties via the per unit time limiting versions
\bea
C_{A^\infty \rar B^\infty}^{FB}(\kappa)\tri \liminf_{n \longrightarrow \infty} \frac{1}{n+1} C_{A^n \rar B^n}^{FB}(\kappa) 
\eea
Specifically, it is illustrated in this paper that    \\
1) the information structures of optimal channel input distributions and corresponding characterizations of FTFI capacity,  translate into corresponding information structures  for $C_{A^\infty \rar B^\infty}^{FB}(\kappa)$, and moreover  via the converse coding theorem,  tight bounds on any achievable code rate (of feedback codes) can be obtained, while the direct part of the coding theorem can be investigated,  without unnecessary \'a priori assumptions on  the channel, such as, stationarity, ergodicity, or information stability of the joint process $\{(A_i, B_i): i=0, 1, \ldots\}$;  \\
2) the characterizations of the FTFI capacity reveal several hidden properties of the role of feedback   to affect the channel output process, including fundamental properties of  optimal channel input conditional distributions,  which  achieve $C_{A^\infty \rar B^\infty}^{FB}(\kappa)$,  and properties of channel parameters so that $C_{A^\infty \rar B^\infty}^{FB}(\kappa)$  corresponds to  feedback capacity.  

Application examples are developed to unfold  the role of feedback,  in communication systems with memory. 
\subsection{Contributions}

The main contributions of this second part of the two part investigation, are the following.
\begin{description}
\item[i)] Develop a  methodology to realize  optimal channel input conditional distributions,  by information lossless (one-to-one and onto maps) randomized strategies (deterministic functions)  driven by uniform Random Variables (RVs), and to derive   alternative equivalent characterizations of FTFI capacity. In specific application examples, such as, Gaussian channels with memory, in which the capacity achieving optimal channel input conditional distributions are Gaussian,  the uniform RVs transform the randomized strategies into capacity achieving randomized strategies, driven by Gaussian random processes; 


\item[ii)] identify a dual role (and hidden features)  of optimal channel input conditional distributions and  information lossless randomized strategies, to  control the channel output process,  and to transmit new information over the channels, while they achieve the characterizations of FTFI capacity and feedback capacity;

\item[iii)] identify sufficient conditions, in terms of channel parameters, which define the channel conditional distributions and transmission cost functions (if constraints are imposed on the channel input conditional distributions),  for the quantity   $C_{A^\infty \rar B^\infty}^{FB}(\kappa)$ to correspond to feedback capacity.
\end{description}

The channel conditional distributions considered in this paper, are either one of the following two classes.
\begin{align}
&\mbox{Channel Distributions Class A.}  \hso {\bf P}_{B_i|B^{i-1}, A^{i}}(db_i|b^{i-1}, a^{i}) = {\bf P}_{B_i|B^{i-1}, A_i}(db_i|b^{i-1}, a_i),  \label{CD_C1} \\
&\mbox{Channel Distributions Class B.} \hso {\bf P}_{B_i|B^{i-1}, A^{i}}(db_i|b^{i-1}, a^{i})     = {\bf P}_{B_i|B_{i-M}^{i-1}, A_i}(db_i|b_{i-M}^{i-1}, a_i), \hso
    i=0, \ldots,n \label{CD_C4}
\end{align}
where $M$ is a nonnegative integer. The convention  for $M=0$,  is   ${\bf P}_{B_i|B_{i-M}^{i-1}, A_i}(db_i|b_{i-M}^{i-1}, a_i)= {\bf P}_{B_i| A_i}(db_i| a_i), \hso i=0,1, \ldots, n$, that is, the channel degenerates to a memoryless channel.
  \\
The average  transmission cost constraint is of the form 
\begin{align}
{\cal P}_{[0,n]}(\kappa)\tri & \Big\{ {\bf P}_{A_i|A^{i-1}, B^{i-1}},  i=0, \ldots, n:  \frac{1}{n+1} {\bf E}\Big(\sum_{i=0}^n \gamma_i(T^iA^n, T^iB^{n})\Big) \leq \kappa\Big\} \label{cap_fb_3}
\end{align}
where the  transmission cost functions are either one of the following two classes\footnote{There is no loss of generality in considering $\gamma_i^B(a_i, b_{i-K}^i)$, because by the function restriction, they include  $\gamma_i^B(a_i, T^ib^n)=\gamma_i(a_i, b_{i-K}^{i-L})$ and $\gamma_i(a_i)$, for any nonnegative integers $L\geq K$, and similarly for $\gamma_i^A(a_i, b^i)$.}
\begin{align}        
&\mbox{Transmission Cost Functions  Class A.} \hso  \gamma_i(T^i a^n, T^i b^{n}) ={\gamma}_i^{A}(a_i, b^{i}), \\
&\mbox{Transmission Cost Functions  Class B.} \hso \gamma_i(T^i a^n, T^i b^{n}) ={\gamma}_i^{B}(a_i, b_{i-K}^{i}), \hst i=0, \ldots, n.
\end{align}
where $K$ is nonegative and finite. \\
In this paper, the dual role of optimal channel input distributions and randomized strategies, are  only illustrated for  Multiple-Input Multiple Output (MIMO) Gaussian Linear Channel Models  (G-LCMs) with memory, via a provocative  direct connection to  the Linear-Quadratic-Gaussian (LQG) stochastic optimal control theory,  stability of linear stochastic controlled systems, and  Lyapunov and Riccati matrix equations. Indeed, the LQG stochastic optimal control theory generalizes in a natural way to directed information pey-off functionals. For the readers convenience a short summary of these concepts is given in Appendix~\ref{appendix_B}. These tools are necessary to to treat processes $\{(A_i, B_i): i=0,1, \ldots, \}$, which are not assumed \'a priori to be stationary, ergodic or information stable.  Rather, it is shown via these mathematical concepts that the optimal channel input conditional distribution induces asymptotic ergodicity of the joint process $\{(A_i, B_i): i=0,1, \ldots\}$.\\
For these application examples, it is further shown that the optimal randomized strategies, which achieve FTFI capacity decompose into a deterministic part and a random part. Through this decomposition, a separation principle is shown, between the role of randomized strategies to control the channel output process and to transmit new information over the channel. It is also shown that the  deterministic part  corresponds to the optimal solution of the LQG stochastic optimal control problem, while the  random part  is determined from   water-filling  type equations. There is however, a fundamental difference; in  LQG stochastic optimal control theory, randomized strategies do not incur a better performance than deterministic strategies.

One of the important conclusions of this paper, is that   if the channel input process is the control process, the channel output is the controlled process, the pay-off of LQG stochastic optimal control theory \cite{kumar-varayia1986} is replaced by the  directed information from the control process to the controlled process, then randomized control strategies have an operation meaning in terms of conveying  information from the control process to the controlled process, and exhibit all properties of LQG theory.   \\
In  Section~\ref{mo-dr},  a short summary of the main concepts, methods, and  results obtained  in this paper are presented.

\subsection{Literature Review}
Cover and Pombra \cite{cover-pombra1989} investigated the  non-stationary non-ergodic Additive Gaussian Noise (AGN) channel  with memory, defined by 
\begin{align}
 B_i=A_i+V_i, \hst i=0,1, \ldots, n, \hst \frac{1}{n+1} \sum_{i=0}^{n} {\bf E} \big\{ |A_i|^2\big\} \leq \kappa, \hso \kappa \in [0,\infty) \label{c-p1989}    
\end{align}
where $\{V_i : i=0,1, \ldots, n\}$ is a real-valued (scalar) jointly non-stationary Gaussian process, with covariance   $K_{V^n}$, and ``$A^n$ is causally related to $V^n$'' \footnote{See \cite{cover-pombra1989}, page 39, above Lemma~5, which states   ${\bf P}_{A^n, V^n}(da^n, dv^n)= \otimes_{i=0}^n  {\bf P}_{A_i|A^{i-1}, V^{i-1}}(da_i|a^{i-1}, v^{i-1}) \otimes   {\bf P}_{V^n}(dv^n)$.}. The authors in \cite{cover-pombra1989}, applied the maximizing entropy property of Gaussian distributions (and converse coding arguments) to characterize feedback capacity $C_{W; B^\infty}^{FB, CP}(\kappa) \tri \lim_{n \longrightarrow \infty} \frac{1}{n+1} C_{W; B^n}^{FB, CP}(\kappa)$, via the following characterization of FTFI capacity\footnote{In  \cite{cover-pombra1989}, characterization  (\ref{cp1989}) is obtained via the converse to the coding theorem, by showing that $C_{W; B^\infty}^{FB, CP}(\kappa)$ is an achievable upper bound on the mutual information between uniformly distributed source messages $W$ and channel outputs $B^n$, i.e., on $I(W; B^n)$.}.
\begin{align}
 C_{W; B^n}^{FB, CP}(\kappa) 
  = \max_{  \Big\{ (\overline{\Gamma}^n, K_{\overline{Z}^n}): \frac{1}{n+1} {\bf E} \big\{tr\big(A^n (A^n)^T\big)\big\}\leq \kappa, \hso A^n=\overline{\Gamma}^n V^n + \overline{Z}^n \Big\} }  H(B^n) - H(V^n ) \label{cp1989}
  \end{align}
 where    $\overline{Z}^n \tri \{\overline{Z}_i: i=0, 1, \ldots,n\}$ is a correlated Gaussian process $N( { 0}, K_{\overline{Z}^n})$,  orthogonal to $V^{n}\tri \{V_i: i=0, \ldots, n\}$, and  $\overline{\Gamma}^n$ is lower diagonal time-varying matrix with  deterministic entries (i.e., $A_i=\sum_{j=0}^{i-1}\overline{\gamma}_{i,j}V_j +\overline{Z}_i, i=0, \ldots, n$). Although, the authors in \cite{cover-pombra1989} call $\{\overline{Z}_i: i=0, \ldots, n\}$ an innovations like process, this is not to be confused with the standard definition of an {\it Innovations Process}, which is an orthogonal process.  
Yang Kavcic and Tatikonda  \cite{yang-kavcic-tatikonda2007} considered stationary AGN channels with memory, including the Cover and Pombra AGN channel, and applied dynamic programming as a means of  computing   optimal channel input conditional distributions, which maximize directed information.    Kim \cite{kim2010} investigated    the  stationary  version of  the Cover and Pombra AGN channel, under the assumption that  the noise power spectral density corresponds to   a stationary Gaussian autoregressive moving-average model of order $K$, among other things.
  The AGN channel with memory investigated in \cite{cover-pombra1989,kim2010}, dates back to early investigations by  Butman \cite{butman1969,butman1976}.\\
Recently, feedback capacity problems for  finite alphabet channels, without  transmission cost constraints, are investigated, for   the trapdoor channel  by Permuter,  Cuff, Van Roy and Tsachy \cite{permuter-cuff-roy-weissman2010}, for the  the Ising Channel  by Elishco and Permuter \cite{elishco-permuter2014}, for the the Post$(a,b)$ channel by  Permuter, Asnani and Tsachy \cite{permuter-asnani-weissman2013}, for channels in which the input to the channel and the  channel state are related by  a one-to-one mapping by  Tatikonda, Yang and Kavcic \cite{yang-kavcic-tatikonda2005}, and for the Unit Memory Channel Output (UMCO) channel (i.e., $\big\{{\bf P}_{B_i|B_{i-1}, A_i}: i=0, \ldots, n\}$) by Chen and Berger  \cite{chen-berger2005} (under the assumption the optimal channel input distribution is $\big\{{\bf P}_{A_i|B_{i-1}}: i=0, \ldots, n\big\}$). In  \cite{kourtellaris-charalambous2015}, the BSSC($\alpha,\beta)$ is investigated,  with and without feedback and transmission cost. 
%
Coding theorems for  channels with memory with and without feedback, are developed extensively over the years, for example, in    \cite{dobrushin1959,pinsker1964,gallager1968,blahut1987,cover-thomas2006,ihara1993,verdu-han1994,kramer1998,han2003,kramer2003,tatikonda2000,tatikonda-mitter2009,permuter-weissman-goldsmith2009,gamal-kim2011}.


\subsection{Discussion of  Methodology and Main  Results}
\label{mo-dr} 
The methodology and results  listed under i)-iii) (and discussion under  1), 2)),  are obtained by applying 
 the analogy to stochastic optimal control theory depicted in Fig.~\ref{fig3}, which  states the following. The  information measure $I(A^n \rar B^n)$ is the pay-off, the channel output process $\{B_i: i=0, 1, \ldots, n\}$ is the controlled process, and  the channel input process $\{A_i: i=0,1, \ldots, n\}$ is the control process. \\
\begin{figure}
  \centering
    \includegraphics[width=1.03\textwidth]{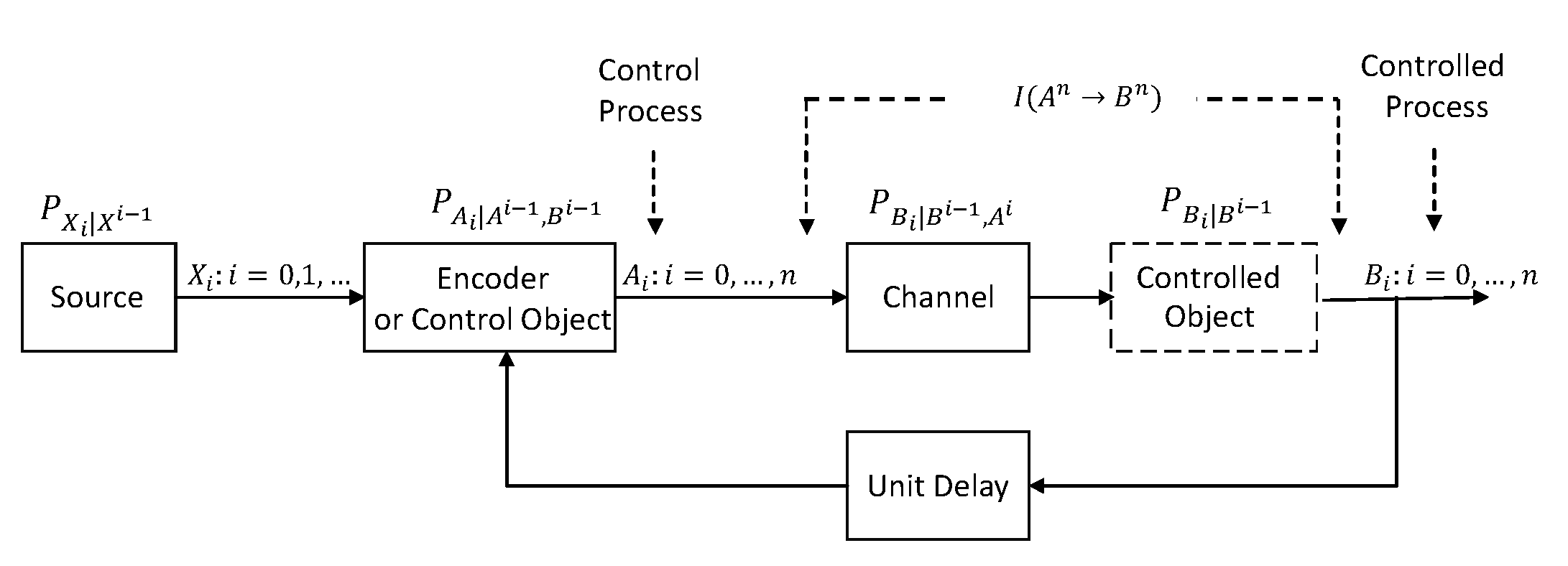}
    \label{fig3}
      \caption{Communication block diagram and its analogy to stochastic optimal control.}
\end{figure}
The main results of the paper are summarized and discussed  below.

\subsubsection{\bf Information Lossless Randomized Strategies and Characterizations of FTFI capacity} Section~\ref{ILE}, describes the methodology  applied in i), to derive alternative equivalent characterizations of FTFI capacity, by realizing optimal channel input conditional distributions, which maximize directed information, using  randomized strategies,  driven by uniform Random Variables.
\\
An alternative equivalent characterization of FTFI capacity is illustrated below. 

{\it Equivalent Characterizations of FTFI Capacity for Class B channels and  transmission cost functions.} Consider  $\Big\{{\bf P}_{A_i|B_{i-M}^{i-1}}, \gamma_i(T^ia^n, T^ib^n)=\gamma_i^B(a_i, b_{i-M}^{i-1}): i=0, \ldots, n\Big\}$, $M \geq 1$.
  In \cite{kourtellaris-charalambousIT2015_Part_1}, it is shown that the maximization of $I(A^n \rar B^n)$ over all distributions ${\bf P}_{A_i|A^{i-1}, B^{i-1}}: i=0, \ldots, n\}$, which satisfy the constraint $\frac{1}{n+1}{\bf E}\big\{ {\gamma}_i^B(A_i, B_{i-M}^{i-1})\big\} \leq \kappa$, satisfy conditional independence
  \bea
   {\bf P}_{A_i|A^{i-1}, B^{i-1}}={\bf P}_{A_i|B_{i-M}^{i-1}}, \hso   i=0, \ldots, n \label{ISI_1}
   \eea
which implies the information structure of the optimal channel input distribution is $B_{i-M}^{i-1}$ for $i=0, 1, \ldots, n$ and  the corresponding 
\begin{align}
\mbox{joint process $\big\{(A_i, B_i): i=0, \ldots, n\big\}$ and output process $\big\{B_i: i=0,\ldots, n\big\}$ are $M-$order Markov}
\end{align}
  and the characterization of the FTFI capacity    is given by the following expression. 
\begin{align}
{C}_{A^n \rar B^n}^{FB,B.M}(\kappa) 
= \sup_{{\bf P}_{A_i|B_{i-M}^{i-1}}, i=0,\ldots,n: \frac{1}{n+1} \sum_{i=0}^n {\bf E}\big\{ {\gamma}_i^B(A_i, B_{i-M}^{i-1})\big\} \leq \kappa    } \sum_{i=0}^n {\bf E}\Big\{
\log\Big(\frac{d {\bf P}_{B_i|B_{i-M}^{i-1}, A_i} (\cdot|B_{i-M}^{i-1},A_i)}{d {\bf P}_{B_i|B_{i-M}^{i-1}}(\cdot|B_{i-M}^{i-1})}(B_i)\Big)
\Big\} . \label{cor-ISR_B.2_intro}
\end{align}
In Theorem~\ref{thm-AC-2}, by  utilizing the characterization of FTFI capacity (\ref{cor-ISR_B.2_intro}), the following are shown.\\
(i) The class of optimal channel input distributions satisfying the average transmission cost constraint,  are realized  by  information lossless randomized  strategies defined by
\begin{align}
&{\cal E}_{[0,n]}^{IL-B.M} (\kappa) \tri  \Big\{ e_i: {\mathbb B}_{i-M}^{i-1} \times [0,1]\longmapsto {\mathbb A}_i,\; a_i=e_i(b_{i-M}^{i-1},u_i), \; i=0, \ldots, n: \hso \mbox{for a fixed $b_{i-M}^{i-1}$ the map } \;  \:   \nonumber \\
&e_i(b_{i-M}^{i-1}, \cdot) \hso \mbox{is one-to-one and onto ${\mb A}_i,$ for $i=0,\ldots, n$}, \hso  \frac{1}{n+1}\sum_{i=0}^n {\bf E}^{e} \Big(\gamma_i^{B}(e_i(B_{i-M}^{i-1}, U_i), B_{i-M}^{i-1})\Big)    \leq \kappa         \Big\} \label{alt_A.1_NCM_A_B}
\end{align}
where $\big\{U_i: i=0, \ldots, n\big\}$ are uniform RVs.\\
(ii) An alternative equivalent characterization of the FTFI capacity (\ref{cor-ISR_B.2_intro}), is given by\footnote{The subscript notation on conditional distributions is suppressed, while superscript notation indicates dependence on the strategies.} 
\begin{align}
{C}_{A^n \rar B^n}^{FB,B.M}(\kappa) = & \sup_{  \big\{e_i(\cdot, \cdot): i=0, \ldots, n\big\} \in {\cal E}_{0,n}^{IL-B.M}(\kappa)        }  \sum_{i=0}^n   {\bf E}^{e}\Big\{\log\Big(\frac{{\bf P}(\cdot| B_{i-M}^{i-1}, e_i(B_{i-M}^{i-1}, U_i))}
  {{\bf P}^{e}(\cdot| B_{i-M}^{i-1})}(B_i)\Big)\Big\}  \label{CMNon-A.1_10_new_Intro} \\
\equiv &   \sup_{ \big\{e_i(b_{i-M}^{i-1},u_i): i=0, \ldots, n\big\} \in {\cal E}_{[0,n]}^{IL-B.M}        }    \sum_{i=0}^n I^{e}(U_i; B_i|B_{i-M}^{i-1})
\end{align}
where the channel output transition probabilities are given by 
\begin{align}
{\bf P}^{e}(db_i|b_{i-M}^{i-1})= \int_{{\mathbb U}_i} {\bf P}(dB_i| B_{i-M}^{i-1}, e_i(b_{i-M}^{i-1},  u_i))   {\bf P}_{U_i}(du_i),  \hso  i=0,1, \ldots, n. \label{LCMNon-C.1_4_B_intro} 
\end{align}
In application examples, the   maximizing strategy, $\big\{e_i^{*}(\cdot, \cdot): i=0, \ldots, n\big\} \in {\cal E}_{[0,n]}^{IL-B.M}(\kappa)$, is transformed to  $e_i^{*}(b_{i-M}^{i-1}, \cdot)=\overline{e}_i^*(b_{i-M}^{i-1}, g_i(\cdot)), g_i: [0,1] \longmapsto {\mathbb  Z}_i, z_i=g_i(u_i)$, where $\big\{Z_i=g_i(U_i): i=0, \ldots, n\big\}$ is a specific random process, which induces the maximizing channel input conditional distribution of the characterization of FTFI capacity (\ref{cor-ISR_B.2_intro}).\\
Further, the following are illustrated, via application examples of Multiple-Input Multiple-Output (MIMO) Gaussian-Linear Channel Models (G-LCM).

\subsubsection{\bf Dual Role of Randomized Information Lossless Strategies \& LQG Theory.} By  utilizing  (\ref{CMNon-A.1_10_new_Intro}), it is shown that,  in general, information lossless randomized strategies have a {\it Dual Role}, specifically,  to
\begin{description}
\item[(a)] optimally control the channel output process $\{B_i: i=0, 1, \ldots, n\}$,  and to

\item[(b)] communicate information optimally, by ensuring no information loss incurs, when randomized strategies are used to realize optimal channel input distributions. 
\end{description}
In Theorem~\ref{DP-UMCO} (Section~\ref{G-LCM-A-LQG})), the dual role of information lossless randomized strategies (and several properties),  are illustrated for  MIMO Time-Varying Gaussian Linear Channel Models G-LCMs with memory.
The following application example illustrates this dual role.  \\
{\it Alternative  Characterization of FTFI Capacity for G-LCM-B.1 and The LQG Theory.} Consider the MIMO G-LCM-B.1, corresponding to channel Class B and transmission cost Class B,  defined by\footnote{The fundamental difference between $Q_{i,i-1} \neq 0$ versus $Q_{i,i-1}=0, i=0, \ldots, n$ and its implications on the maximum rate of transmitting information over this channel,  is discussed shortly. The subsequent statements are derived in Theorem~\ref{DP-UMCO}.} 
\begin{align}
&B_i   = C_{i,i-1} \; B_{i-1} +D_{i,i} \; A_i + V_{i},\hso B_{-1}=b_{-1},  \hso i= 0, \ldots, n, \label{LCM-A.1_a_Intr} \\
 &\frac{1}{n+1} \sum_{i=0}^n {\bf E} \Big\{ \langle A_i, R_i A_i \rangle + \langle B_{i-1}, Q_{i,i-1} B_{i-1} \rangle \Big\}\leq \kappa,  \\
& {\bf P}_{V_i|V^{i-1}, A^i}(dv_i|v^{i-1}, a^i)={\bf P}_{V_i}(dv_i),  \hso  V_i \sim N(0, K_{V_i}), \hso  i=0, \ldots,n, \\
& (C_{i,i-1}, D_{i,i}) \in {\mb R}^{p\times p} \times {\mb R}^{p\times q}, \; (R_i, Q_{i,i-1})\in {\mb R}^{q\times q} \times {\mb R}^{p\times p}, \; R_i=R_i^T \succ 0, Q_{i,i-1}=Q_{i,i-1}^T \succeq 0, \; i=0, \ldots, n
\end{align}
where  $\langle \cdot, \cdot\rangle $ denotes inner product of elements of linear spaces.\\
In Section~\ref{ex-LCM-B.1},  the following are shown (by using (\ref{CMNon-A.1_10_new_Intro}), with $M=1$).\\
(iii)  The optimal conditional channel input conditional distribution is Gaussian of the form $\big\{{\bf P}_{A_i|B_{i-1}}^g(da_i|b_{i-1}): i=0, \ldots, n\big\}$ and satisfies the average transmission cost constraint, and such distributions are realized  by linear and Gaussian  information lossless randomized strategies $e(\cdot) \in {\cal E}_{0,n}^{IL-B.1}(\kappa)$, defined by   the set 
 \begin{align}
&{\cal E}_{[0,n]}^{IL-B.1}(\kappa)   \tri \Big\{ A_i^g=g_i^{B.1}(B_{i-1}^g)+Z_i = \Gamma_{i,i-1} B_{i-1}^g+ Z_i, \hso Z_i \perp B^{g,i-1},\;  \{Z_i: i=0, \ldots, n\} \hso \mbox{independent  process},\nonumber \\
& Z_i \sim N(0, K_{{Z}_i}), \: K_{Z_i} \succeq 0, \hso i=0,\ldots, n: \frac{1}{n+1}{\bf E}^{g^{B.1}} \Big\{ \sum_{i=0}^n     \Big[\langle A_i^g, R_{i,i} A_i^g \rangle + \langle B_{i-1}^g, Q_{i,i-1} B_{i-1}^g \rangle \Big] \Big\} \leq \kappa \Big\} \label{Rand_Stra}
\end{align}
where $ \cdot\perp \cdot$ means the processes are independent. Thus,  information lossless randomized strategies in   ${\cal E}_{[0,n]}^{IL-B.1}(\kappa)$ are decomposed  into two orthogonal parts, one of which is an innovations process (i.e., independent process). \\
(iv) The characterization of FTFI capacity  of the MIMO-G-LCM.B.1 is given by the following expression. 
\begin{align}
{C}_{A^n \rar B^n}^{FB,B.1} (\kappa)= \sup_{\big\{\big(\Gamma_{i,i-1}, Z_i\big),   i=0,\ldots,n   \big\} \in  {\cal E}_{[0,n]}^{IL-B.1}(\kappa)   }  \frac{1}{2} \sum_{i=0}^n  \ln \frac{ | D_{i,i} K_{Z_i} D_{i,i}^T+K_{V_i}|}{|K_{V_i}|} .  \label{LCM_B.1_3_Intr_1}
\end{align}


The decomposition 
\bea
A_i^g =\Gamma_{i,i-1}B_{i-1}^g +Z_i\equiv g_i^{B.1}(B_{i-1}^g)+Z_i,\hso  i=0, \ldots, n
\eea
 implies that  the feedback function $\{g_i^{B.1}\equiv \Gamma_{i,i-1}: i=0, \ldots, n\}$ is  the feedback control law or strategy, which controls the output process $\{B_i^g: i=0, \ldots, n\}$, while the orthogonal innovations process $\{Z_i: i=0, \ldots, n\}$ is responsible to convey new information to the output process, both chosen to maximize (\ref{LCM_B.1_3_Intr_1}). 


In Theorem~\ref{DP-UMCO}, this interpretation is utilized to solve  the  extremum problem (\ref{LCM_B.1_3_Intr_1}), via its relation to the Linear-Quadratic-Gaussian (LQG) stochastic optimal control theory (with randomized controls). The following are shown. \\
(v) The characterization of FTFI capacity is  given by 
\begin{align}
C_{A^n\rar B^n}^{FB, B.1}(\kappa) = \inf_{s \geq 0} \Big\{ -\int_{{\mathbb R}^p} \langle b_{-1}, P(0) b_{-1}\rangle {\bf P}_{B_{-1}}(db_{-1}) +  r(0)\Big\}  \label{FT_IC}
\end{align}
where  $\big\{P(i): i=0, \ldots,n\big\}$ is a solution of the Riccati difference matrix equation
\begin{align}
P(i) &=C_{i,i-1}^T P(i+1) C_{i,i-1}+s Q_{i,i-1} \nonumber \\
&-C_{i,i-1}^T P(i+1) D_{i,i}\Big(D_{i,i}^T P(i+1) D_{i,i}+s R_{i,i}\Big)^{-1} \Big(C_{i,i-1}^T P(i+1) D_{i,i}\Big)^T, \hso i=0, \ldots, n-1, \hso P(n)=s Q_{n,n-1} \label{DP-UMCO_C12_aa_intro}
\end{align}
$s\geq 0$ is the Lagrange multiplier associated with the transmission cost constraint,    
the optimal random part of the strategy $\{K_{Z_i}^*: i=0, \ldots, n\}$ (covariance of innovations  process) is found from a sequential water filing problem 
\begin{align}
r(i)= & r(i+1) + \sup_{ K_{Z_i}\in {\mb  S}_+^{q\times q}}\Big\{ \frac{1}{2} \log \frac{ | D_{i,i} K_{Z_i} D_{i,i}^T+K_{V_i}|}{|K_{V_i}|} - tr\Big(s\; R_{i,i} K_{Z_i}\Big) 
- tr \Big(P(i+1) \Big[D_{i,i} K_{Z_i}D_{i,i}^T+ K_{V_i}\Big]\Big)\Big\}, \hso i=0, \ldots, n-1,  \label{DP-UMCO_C13_NN}  \\
r(n)=&\sup_{ K_{Z_n}\in  {\mb S}_+^{q\times q}} \Big\{\frac{1}{2} \log \frac{ | D_{n,n} K_{Z_n} D_{n,n}^T+K_{V_n}|}{|K_{V_n}|}+s (n+1)\kappa- tr\Big( s R_{n,n} K_{Z_n}\Big)\Big\} \label{DP-UMCO_C20_NN}
\end{align}
and the  optimal deterministic part of the randomized strategy, $\{g_i^{B.1,*}(\cdot): i=0, \ldots, n\}$, is given by 
\begin{align}
g_i^{B.1,*}(b_{i-1})=& -\Big(D^T P(i+1) D + s R\Big)^{-1} D^T P(i+1) C \: b_{i-1}\equiv \Gamma_{i,i-1}^* b_{i-1}, \hso i=0, \ldots, n-1, \hso g_n^{B.1,*}(b_{n-1})=0. \label{STR_1}  
\end{align}
The above solution illustrates the separation principle, between the computation of the deterministic part $\{g_i^{B.1}(B_{i-1}): i=0, \ldots, n\}$ and random part $\{Z_i\sim K_{Z_i}: i=0, \ldots, n\}$ of the randomized strategy,
in  that, the latter  can be found, by first computing the former. Moreover, the properties of solutions $\big\{P(i): i=0, \ldots,n\big\}$ to the Riccati equation, such as, $P(i)\succ 0$ or $P(i) \succeq 0,  i=0, \ldots, n$ (positive definite or positive semi definite),  depend on ``s'' and the properties of the parameters of the channel and the transmission cost function, $\Big\{ C_{i,i-1}, D_{i,i}, R_{i,i}, Q_{i,i-1}: i=0, \ldots, n\Big\}$. \\
(vi) The optimal strategy (\ref{STR_1}) is precisely  the solution of the following  LQG stochastic optimal control problem \cite{kumar-varayia1986}. This connection is made explicit in Remark~\ref{rem-LQG}.
(vii) If the channel is time-invariant  with $\Big\{ C_{i,i-1}=C, D_{i,i}=D, K_{V_i}=K_V, R_{i,i}=R, i=0, \ldots, n, Q_{i,i-1}=Q, i=0, \ldots, n-1, Q_{n,n-1}=M\Big\}$, from (\ref{FT_IC}),  whether $C_{A^\infty \rar B^\infty}^{FB, B.1}(\kappa)\tri \lim_{\longrightarrow \infty} \frac{1}{n+1} C_{A^n \rar B^n}^{FB, B.1}(\kappa)$ exists and corresponds to feedback capacity is determined from the properties of solutions to the following algebraic matrix Riccati   equation.
\begin{align}
P=C^T P C+s Q -C^T P D\Big(D^T P D+s R\Big)^{-1} \Big(C^T P D\Big)^T. \label{DP-UMCO_C12_aa_intro_alg}
\end{align} 
Moreover, whether feedback increases capacity is  determined from the solutions of matrix Riccati   equation (\ref{DP-UMCO_C12_aa_intro_alg}).

\subsubsection{\bf Application Example: Feedback versus No Feedback \& The Infinite Horizon LQG Theory.} 
In Sections~\ref{putl-ex},  the per unit time limit of the characterizations of FTFI capacity of  Time-Invariant G-LCM-Bs are investigated. \\It is shown that whether feedback increases capacity, is determined from the unique solution  of the Riccati algebraic matrix equation (\ref{DP-UMCO_C12_aa_intro_alg}). This is established via  direct connections to the infinite-horizon LQG stochastic optimal control theory and stability of linear stochastic controlled systems, and associated Lyapunov  equations and Riccati matrix equations. 
Indeed, even if the channel defined by (\ref{LCM-A.1_a_Intr}) is unstable (i.e., any of the eigenvalues of matrix $C$ is greater than one), under certain conditions, which are specified by  $(C, D, R, Q, K_V)$, the optimal deterministic part of the randomized strategy stabilizes the channel via feedback,  ensures existence of a unique invariant joint distribution of the joint process $\{(A_i, B_i): i =0, \ldots, n\}$,  marginal distribution of the channel output process, and ensures that $C_{A^\infty \rar B^\infty}^{FB}(\kappa)$ exists and corresponds to feedback capacity. \\
The following simple example illustrates several hidden properties of optimal channel input distributions, and that  feedback  capacity and capacity without feedback  are determined from the properties of the solutions to the algebraic matrix Riccati equation (\ref{DP-UMCO_C12_aa_intro_alg}). 

{\it Special Case-the Time-Invariant Scalar Channel with $p=q=1, R=1,  Q=0$ and $(C,D)$ arbitrary.} For these choices of parameters the following are  shown (and independently in Remark~\ref{rem_G-LCM-B.1-LSCS}). The  steady state solutions of Riccati (quadratic) equation   (\ref{DP-UMCO_C12_aa_intro_alg}),  and corresponding optimal determinist part of the randomized  strategy are given by the following equations. 
\begin{align}
& P\Big(D^2 P + s \Big[1-C^2\Big]\Big)=0 \; \Longrightarrow \; P_1=  0, \;  P_2=s \frac{C^2-1}{D^2}, \label{Exact_Sol_1}\\
& g^{B.1,*}(b) = \Gamma^* b, \hst \Gamma^*= - \Big(D^2 P +s\Big)^{-1} D P C= \left\{ \begin{array}{lll} 0 & \mbox{if}\hso  P=P_1 \\
-\frac{C^2-1}{CD} & \mbox{if} \hso P=P_2.  \end{array} \right.
\end{align}
where  $P=P_1$ implies the optimal channel input distribution  does not use feedback. \\
(vii) {\it The Feedback Capacity}.  The optimal strategy which achieves feedback capacity ${C}_{A^\infty \rar B^\infty}^{FB}(\kappa)$ is given by
\bea
\Big(\Gamma^*, K_Z^*\Big)= \left\{ \begin{array}{llll} (0, \kappa), & \kappa \in [0, \infty) & \mbox{if}\hso  |C|<1 \\
\Big(-\frac{C^2-1}{CD},\frac{D^2 \kappa + K_V (1-C^2)}{C^2 D^2}\Big), & \kappa \in [\kappa_{min}, \infty), \hso \kappa_{min} \tri   \frac{(C^2-1)K_V}{D^2} & \mbox{if} \hso |C|>1 \\
\Big(-\frac{C^2-1}{CD},0\Big), & \kappa \in [0,\kappa_{min}], \hso  & \mbox{if} \hso |C|>1
 \end{array} \right. \label{dual_CAP_CI}
\eea
and the corresponding feedback capacity is given by the following expression.
\begin{align}
{C}_{A^\infty \rar B^\infty}^{FB}  (\kappa)= \left\{ \begin{array}{llll}  \frac{1}{2} \ln \frac{  D^2 \; \kappa +K_{V}}{K_{V}} & \mbox{if}  & |C|<1, & i.e., \:  K_Z^*=\kappa  \\
 \frac{1}{2} \ln \frac{ D^2 K_Z^* +K_{V}}{K_{V}} & \mbox{if}  & |C|>1, &  \kappa \in [\kappa_{min}, \infty) \\
 0 & \mbox{if}  & |C|>1, &  \kappa \in [0, \kappa_{min}]. 
\end{array} \right.  \label{dual_CAP}
\end{align}
The feedback capacity expression (\ref{dual_CAP}), illustrates that there are multiple regimes, depending on whether the channel is stable, that is, $|C|<1$   or unstable $|C|>  1$. Moreover, for unstable channels $|C|>1$, feedback capacity is zero, unless the power $\kappa$ allocated for transmission,  exceeds the critical level $\kappa_{min}$. From the above expressions, it follows that the capacity achieving input distribution satisfies the  
conditional independence 
\begin{align}
{\bf P}_{A_i|B_{i-1}}^*(da_i| b^{i-1})=   \left\{ \begin{array}{lll}   {\bf P}_{A_i}^{g,*}(da_i) \sim N(0, \kappa), \hso \kappa \in [0, \infty) & \mbox{if} & |C|<1  \\
{\bf P}_{A_i|B_{i-1}}^{g,*}(da_i|b_{i-1}) \sim N(\Gamma^*, K_{Z}^*), \hso \kappa \in [\kappa_{min}, \infty) & \mbox{if} & |C|>1 .
\end{array} \right.
\end{align}
This shows that if the channel is stable, $|C|<1$, then feedback does not increase capacity, for the following reasons. As far as the limit ${C}_{A^\infty \rar B^\infty}^{FB}  (\kappa)$ is concerned,  the joint and output processes asymptotically converges to   ergodic processes, and   there is no  incentive to apply feedback, since the controlled process-the channel output process $\{B_i: i=0, \ldots, n\}$, does not appear, neither in the transmission cost constraint nor in the characterization of the FTFI capacity expression given by (\ref{LCM_B.1_3_Intr_1}). However, if $Q\neq 0$, then the controlled process $\{B_i: i=0, \ldots, n\}$ is represented in the pay-off, and hence there is an incentive to apply feedback.   \\  
(ix) {\it Capacity Without Feedback. } The capacity of channel (\ref{LCM-A.1_a_Intr}) without feedback, denoted by $C_{A^\infty; B^\infty}^{noFB}(\kappa)$, is  obtained directly from the characterization of FTFI capacity (\ref{dual_CAP_CI}), (\ref{dual_CAP}).
Clearly, for stable channels, i.e., $|C|<1$, the capacity without feedback $C_{A^\infty; B^\infty}^{noFB}(\kappa)$, is precisely that of a memoryless channel that corresponds to  (\ref{LCM-A.1_a_Intr}) with $C=0$, i.e., $B_i=D A_i+ V_i, i=0, \ldots$, i.e.,   the memory of the channel (\ref{LCM-A.1_a_Intr}), does not increase  capacity.  Moreover,  if channel (\ref{LCM-A.1_a_Intr})  is unstable, i.e., $|C|>1$, then $C_{A^\infty; B^\infty}^{noFB}(\kappa)=0$, for any power $\kappa \in [0, \infty)$.\\
In conclusion, the expressions of capacity without feedback and feedback capacity, coincide for the case of stable channels, i.e., $|C|<1$. This is attributed to the dual role of randomized strategies, specifically, the  role of the deterministic part to control the channel output process. Since in this case, the channel is  stable and $Q=0$, no role is assigned to  the randomized strategy, except to transmit information. However, if $Q\neq 0$ but the channel is stable, i.e., $|C|<1$, the above observation may not hold. \\
(x) For unstable channels,  there is a universal lower bound on the feedback capacity, which is obtained by evaluating ${C}_{A^\infty \rar B^\infty}^{FB}  (\kappa) $ at $\kappa=\kappa_{min}$, expressed  in terms of  the  logarithm of the unstable eigenvalues of the channel, as follows.
\begin{align}
\mbox{If} \hso |C| >1 \hso \mbox{then} \hso {C}_{A^\infty \rar B^\infty}^{FB, B.1}  (\kappa) \geq \ln |C|, \hst  \forall \kappa \in [\kappa_{min}, \infty). \label{Exact_Sol_2}
\end{align}
The lower bound can be interpreted as the least number of bits which should be communicated to the encoder to ensure a non-zero rate is feasible. 
 
\subsubsection{\bf Generalizations to Gaussian Channels with Arbitrary Memory.} All properties discussed above are shown to hold,    for general  MIMO G-LCM-B.1 and  G-LCM-B; they are obtained  by invoking  properties of matrix algebraic Riccati  equations.
These  properties  illustrate fundamental connections between capacity of channels with feedback, without feedback  and linear stochastic controlled system theory, and the LQG stochastic optimal control theory. 

\subsubsection{\bf Relation Between Characterizations of FTFI Capacity and Coding Theorems.} In Section~\ref{c-thms}, the 
 the importance of  the characterizations of FTFI capacity  are discussed in the context the direct and converse parts of channel coding theorems. Specifically, sufficient conditions are identified so that the per unit time limits of the characterizations of FTFI capacity, corresponds to feedback capacity.

\section{Information Structures of Channel Input Distributions  of Extremum Problems of Feedback Capacity}
\label{IS_CA}
In this section, the notation used throughout the paper is established, and the information structures of optimal channel input distributions, which maximize directed information,   are recalled from \cite{charalambous-stavrou2013aa}.

\begin{align}
& {\mathbb Z}: \hso  \mbox{set of  integer};  \nonumber \\
& {\mathbb N}: \hso  \mbox{set of nonnegative integers} \hso \{0, 1,2,\dots\}; \nonumber \\
& {\mathbb R}: \hso  \mbox{set of  real numbers};  \nonumber \\
& {\mathbb C}: \hso \mbox{set of complex numbers;} \nonumber \\
& {\mathbb R}^n: \hso  \mbox{set of  $n$ tuples of real  numbers}; \nonumber \\
&{\mb S}_+^{p \times p}: \hso \mbox{set of symmetric positive semidefine $p\times p$ matrices $A \in {\mathbb R}^{p \times p}$}; \nonumber \\
&\langle \cdot, \cdot \rangle: \hso \mbox{inner product of elements of vectors spaces;} \nonumber \\
&{\mb S}_{++}^{p \times p}: \hso \mbox{subset of positive definite matrices of the set ${\mb S}_+^{p \times p}$;} \nonumber \\
&{\mathbb D}_o \tri \big\{c \in {\mathbb C}: |c| <1\big\}: \hso \mbox{open unit disc of the space of compex numbers ${\mathbb C}$}; \nonumber \\
&spec(A) \subset {\mathbb C}: \hso \mbox{spectrum of a matrix $A \in {\mathbb R}^{p \times p}$ (set of all its eigenvalues)}; \nonumber \\
&(\Omega, {\cal F}, {\mathbb P}): \mbox{probability space, where ${\cal F}$ is the $\sigma-$algebra generated by subsets of $\Omega$}; \nonumber \\ 
& {\cal  B}({\mathbb  W}): \hso \mbox{Borel $\sigma-$algebra of a given topological space  ${\mathbb W}$};  \nonumber \\
&{\cal M}({\mathbb W}): \hso \mbox{set of all probability measures on ${\cal  B}({\mathbb W})$ of a Borel space ${\mathbb W}$}; \nonumber\\
&{\cal K}({\mathbb V}|{\mathbb W}): \hso \mbox{set of all stochastic kernels on  $({\mathbb V}, {\cal  B}({\mathbb V}))$ given $({\mathbb W}, {\cal  B}({\mathbb W}))$ of Borel spaces ${\mathbb W}, {\mathbb V}$}.\nonumber\\
&X \perp Y: \hso \mbox{Independence of RVs $X$ and  $Y$}.\nonumber 
\end{align}

All spaces are complete separable metric spaces, i.e.,  Borel spaces. This generalization is adopted   to treat simultaneously discrete, finite alphabet,  real-valued ${\mathbb R}^k$ or complex-valued ${\mathbb C}^k$ random processes for any positive integer $k$,  etc. 
 The product measurable space of the two measurable spaces  $({\mb X}, {\cal  B}({\mb X}))$ and $({\mb Y}, {\cal  B}({\mb Y}))$ is denoted by $({\mb X} \times {\mb Y}, {\cal  B}({\mb X})\otimes  {\cal  B}({\mb Y}))$, where  ${\cal  B}({\mb X})\otimes  {\cal  B}({\mb Y})$ is the product $\sigma-$algebra generated by rectangles $\{A \times B:  A \in {\cal  B}({\mb X}), B\in  {\cal  B}({\mb Y})\}$. \\
The probability distribution $ {\bf P}(\cdot) \equiv {\bf P}_X(\cdot)$ on  $({\mb X}, {\cal  B}({\mb X}))$ induced by a Random Variable (RV)  on  $(\Omega, {\cal F}, {\mathbb P})$ by the mapping $X: (\Omega, {\cal F}) \longmapsto ({\mb X}, {\cal  B}({\mb X}))$ is defined by   \footnote{The subscript on $X$ is often omitted.}.
\begin{align}
{\bf P}(A) \equiv  {\bf P}_X(A)  \tri {\mathbb P}\big\{ \omega \in \Omega: X(\omega)  \in A\big\},  \hso  \forall A \in {\cal  B}({\mb X}).
 \end{align}
If the cardinality of ${\mb X}$ is finite then the RV is finite-valued  and it is called a finite alphabet RV. \\
Given another RV $Y: (\Omega, {\cal F}) \longmapsto ({\mb Y}, {\cal  B}({\mb Y}))$, then  ${\bf P}_{Y|X}(dy| X)(\omega)$ is called the conditional distribution of RV $Y$ given RV $X$. The conditional distribution of RV $Y$ given $X=x$ is denoted by ${\bf P}_{Y|X}(dy| X=x)  \equiv {\bf P}_{Y|X}(dy|x)$. 
The family of such conditional distributions on $({\mb Y}, {\cal B}({\mb Y})$ parametrized by $x \in {\mb X}$, is defined by $${\cal K}({\mb Y}| {\mb X})\tri \big\{{\bf P}(\cdot| x) \in {\cal M}({\mathbb Y}): \hso x \in {\mathbb X}\hso \mbox{and  $\forall F \in {\cal  B}({\mb Y})$, the function ${\bf P}(F|\cdot)$ is ${\cal  B}({\mb X})$-measurable.}\big\}.$$

The channel input and  channel output alphabets are  sequences of  measurable spaces $\{({\mb A}_i,{\cal  B}({\mb A }_i)):i\in\mathbb{N}\}$ and  $\{({\mb  B}_i,{\cal  B}({\mb  B}_i)):i\in\mathbb{N}\}$, respectively, and 
their history spaces  are the product spaces ${\mb A}^{\mathbb{N}}\tri {{\times}_{i\in\mathbb{N}}}{\mb A}_i,$ ${\mb  B}^{\mathbb{N}}\tri {\times_{i\in\mathbb{N}}}{\mb  B}_i$.  These spaces are endowed with their respective product topologies, and  ${\cal  B}({\Sigma}^{\mathbb{N}})\tri \otimes_{i\in\mathbb{N}}{\cal  B}({\Sigma }_i)$  denote the $\sigma-$algebras on ${\Sigma }^{\mathbb{N}}$, where ${\Sigma}_i \in  \big\{{\mb A}_i, {\mb  B}_i\big\}$,  ${\Sigma}^{\mathbb{N}} \in  \big\{{\mb A}^{\mathbb N}, {\mb  B}^{\mathbb N}\big\}$,  generated by cylinder sets. Similarly,  for  ${\cal  B}({\Sigma }^n)$, when $n \in {\mb N}$ is finite.   Points in ${\Sigma}^{n}$ are denoted by  $z^n\tri \{z_0,z_1,\ldots,z_n\}\in{\Sigma}^n$, while  points in ${\Sigma }_k^m \tri \times_{j=k}^m {\Sigma}_j$ are denoted by $z_{k}^m \tri \{z_k, z_{k+1}, \ldots, z_m\} \in {\Sigma}_k^m$,   $(k, m)\in   {\mathbb N} \times {\mathbb N}$. 

\noi{\bf Channel Distribution with Memory.}  A sequence of stochastic kernels or distributions defined by 
\begin{align}
{\cal C}_{[0,n]} \tri \Big\{ {\bf P}_{B_i|B^{i-1}, A^i}  =Q_i(db_i|b^{i-1},a^{i})\in {\cal K}({\mb  B}_i| {\mb  B}^{i-1} \times {\mb A}^i) :  i=0,1, \ldots, n \Big\}. \label{channel1}
\end{align}
 At each time instant $i$ the conditional distribution of channel output $B_i$  is affected causally by previous channel output symbols $b^{i-1} \in {\mb B}^{i-1}$ and current and previous channel input symbols $a^{i} \in {\mb A}^i, i=0,1, \ldots, n$. 

\noi{\bf Channel Input Distribution with Feedback.}  A  sequence of stochastic kernels defined by 
\bea
{\cal P}_{[0,n]} \tri  \Big\{  {\bf P}_{A_i|A^{i-1}, B^{i-1}}= P_i(da_i|a^{i-1},b^{i-1})\in  {\cal K}({\mb A}_i| {\mb A}^{i-1} \times {\mb  B}^{i-1}):   i=0,1, \ldots, n \Big\}. \label{rancodedF}
\eea
At each time instant $i$ the conditional  distribution of channel input $A_i$  is affected causally by past  channel inputs and  output symbols  $(a^{i-1}, b^{i-1}) \in {\mb A}^{i-1} \times {\mb B}^{i-1}, i=0,1, \ldots, n$. 

\noi{\bf Transmission Cost.}  The cost of transmitting and receiving symbols $a^n\in {\mb A}^n, b^n \in {\mb B}^n $ over the  channel    is a  measurable function $c_{0,n}:{\mb A}^n\times{\mb  B}^{n} \longmapsto [0,\infty)$. The set of channel  input distributions with transmission cost is defined by 
\begin{align}
&{\cal P}_{[0,n]}(\kappa) \tri   \Big\{  P_i(da_i|a^{i-1}, b^{i-1}) \in {\cal K}({\mb A}_i| {\mb A}^{i-1}\times {\mb  B}^{i-1}),  i=0, \ldots, n: \nonumber \\
&\frac{1}{n+1} {\bf E}^{ P} \Big( c_{0,n}(A^n, B^{n}) \Big)\leq  \kappa\Big\}  \subset {\cal P}_{[0,n]}, \hst  c_{0,n}(a^n, b^{n}) \tri \sum_{i=0}^n \gamma_i(T^ia^n, T^ib^{n}), \;      \kappa \in [0,\infty) \label{rc1}
\end{align} 
where ${\bf E}^{ P }(\cdot)$ denotes expectation with respect to the the joint distribution, and superscript ``P'' indicates its dependence  on the choice of  conditional distribution $\{P_i(da_i|a^{i-1}, b^{i-1}) : i=0, \ldots, n\} \in {\cal P}_{[0,n]}(\kappa)$.

{\bf  FTFI Capacity.}  Given any channel input conditional  distribution $\big\{{ P}_i(da_i|a^{i-1}, b^{i-1}): i=0,1, \ldots, n\big\} \in {\cal P}_{[0,n]}(\kappa)$ and any channel distribution $\big\{Q(db_i| b^{i-1}, a^{i}): i=0,1, \ldots, n\big\}\in {\cal C}_{[0,n]}$,   the  induced joint distribution  ${\bf P}^{P}(da^n, db^n)$  is uniquely defined, as follows.
\begin{align}
 {\mathbb P}\big\{A^n \in d{a}^n, B^n \in d{b}^n\big\}  \tri\;  &
{\bf P}^P(da^n, db^n)
= \otimes_{j=0}^n \Big( { \bf P}(db_j|b^{j-1}, a^j) \otimes {\bf P}(da_j| a^{j-1}, b^{j-1})\Big) \label{CIS_2gde2new} \\
=&\otimes_{j=0}^n \Big(Q_j(db_j|b^{j-1}, a^j)\otimes P_j(da_j|a^{j-1}, b^{j-1})\Big). \label{CIS_2gg_new} 
\end{align}
The joint distribution of $\big\{B_i: i=0, \ldots, n\big\}$ and its conditional distribution are defined by\footnote{Throughout the paper the superscript notation ${\bf P}^P(\cdot), \Pi_{0,n}^P(\cdot), etc., $ indicates the dependence of the distributions on the channel input conditional distribution.}
\begin{align}
{\mathbb  P}\big\{B^n \in db^n\big\} \tri \; & {\bf P}^{P}(db^n) =  \int_{{\mb A}^n}  {\bf P}^{ P}(da^n, db^n)   \label{CIS_3g}\\
\equiv \; &   \Pi_{0,n}^{P}(db^n) =\otimes_{i=0}^n \Pi_i^{ P}(db_i|b^{i-1}) \label{MARGINAL} \\
\Pi_i^{ P}(db_i|b^{i-1})= \; &  \int_{{\mb A}^i} Q_i(db_i|b^{i-1}, a^i)\otimes P_i(da_i|a^{i-1}, b^{i-1}) \otimes {\bf P}^{P}(da^{i-1}|b^{i-1}), \hso i=0, \ldots, n. \label{CIS_3a}
\end{align}
The above distributions are parametrized by  either a fixed $B^{-1}=b^{-1} \in {\mathbb B}^{-1}$ or a fixed distribution ${\bf P}(db^{-1})=\mu(db^{-1})$. \\
 Directed  information   $I(A^n \rar B^n)$  is defined   by 
\begin{align}
I(A^n\rar B^n) \tri  &\sum_{i=0}^n {\bf E}^{{ P}} \Big\{  \log \Big( \frac{dQ_i(\cdot|B^{i-1},A^i) }{d\Pi_i^{{ P} }(\cdot|B^{i-1})}(B_i)\Big)\Big\} \label{CIS_6_a}  \\
=& \sum_{i=0}^n \int_{{\mb A}^{i} \times {\mb  B}^{i}   }^{}   \log \Big( \frac{ dQ_i(\cdot|b^{i-1}, a^i) }{d\Pi_i^{{ P}}(\cdot|b^{i-1})}(b_i)\Big) {\bf P}^{ P}( da^i, db^i)  \label{CIS_6}
\end{align}

The  FTFI capacity $C_{A^n \rar B^n}^{FB}(\kappa)$ and without transmission cost constraints $C_{A^n \rar B^n}^{FB}$ are  defined as follows.  
\begin{align}
C_{A^n \rar B^n}^{FB}(\kappa) \tri \sup_{ {\cal P}_{[0,n]}(\kappa) } I(A^n\rar B^n), \hst C_{A^n \rar B^n}^{FB} \tri \sup_{ {\cal P}_{[0,n]}} I(A^n\rar B^n). \label{prob2}
\end{align}
For the per unit time limiting version $C_{A^\infty\rar  B^\infty}^{FB}(\kappa)$ of   $C_{A^n\rar  B^n}^{FB}(\kappa)$, to represent  feedback capacity, and thus characterize the supremum of all achievable rates (via direct and converse  channel coding theorems), the following  assumption is imposed  throughout the paper. For any process $\{X_i: i=0, \ldots, \}$, which may represent     the source process
 to be encoded and transmitted over the channel,  the following conditional independence, pointed out  by Massey  \cite{massey1990} holds.
\begin{align}
{\bf P}_{B_i|B^{i-1}, A^i, X^k}={\bf P}_{B_i|B^{i-1}, A^i} \hso \Longleftrightarrow \hso X^k \leftrightarrow (A^i, B^{i-1})  \leftrightarrow B_i, \hso  \forall k \in \{0,1, \ldots,\},\hso i=0, \ldots, \label{CI_Massey} 
\end{align}

The next two theorems are  derived in \cite{kourtellaris-charalambousIT2015_Part_1}; they are recalled, because of they are  extensively used in this paper. \\
The first theorem gives the characterization of FTFI capacity for channel distributions of Class A and transmission costs of Class A or B.  \\

 \begin{theorem}\cite{kourtellaris-charalambousIT2015_Part_1} (Characterization of FTFI capacity for  channels of class A)\\
  \label{thm-ISR}
Suppose the channel distribution is of Class A defined by   (\ref{CD_C1}).\\ 
Define the restricted class of channel input distributions $\overline{\cal P}^{A}_{[0,n]} \subset {\cal P}_{[0,n]}$ by
\begin{align}
\overline{\cal P}_{[0,n]}^{A}\tri & \Big\{ \big\{  P_i(da_i| a^{i-1}, b^{i-1}): i=0,1, \ldots, n\big\} \in  {\cal P}_{[0,n]}: \nonumber \\
&     P_i(d{a}_i|a^{i-1}, b^{i-1})=\pi_i(d{a}_i|b^{i-1})-a.a.
  (a^{i-1}, b^{i-1}),i=0,1,\ldots,n\Big\} . \label{rest_set_1}
\end{align}
The following hold.\\
(a) The maximization of $I(A^n \rar B^n)$ defined by (\ref{CIS_6_a}) over  ${\cal P}_{[0,n]}$ occurs in $\overline{\cal P}_{[0,n]}^{A} \subset {\cal P}_{[0,n]} $ and  the characterization of FTFI capacity is given by the following expression.
\begin{align}
{C}_{A^n \rar B^n}^{FB,A} 
  = \sup_{\big\{\pi_i(da_i | b^{i-1}) \in {\cal M}({\mb A}_i) : i=0,\ldots,n\big\}}
  \sum_{i=0}^{n}{\bf E}^{\pi}\Big\{\log\Big(\frac{dQ_i(\cdot| B^{i-1}, A_i)}
  {d\Pi_i^{\pi_i}(\cdot| B^{i-1})}(B_i)\Big)\Big\} \label{CIS_18}
  \end{align}
  where  the transition probabilities of $\{B_i: i=0, \ldots, n\}$ and joint distribution of $\{(A_i, B_i): i=0, \ldots, n\}$ are given by the following expressions.
\begin{align}  
  \Pi_i^{\pi}(db_i| b^{i-1}) =& \int_{ {\mb A}_i} Q_i(db_i |b^{i-1}, a_i) \otimes   {\pi}_i(da_i | b^{i-1}), \hso i=0, \ldots, n, \label{CIS_10a}\\
{\bf P}^{ \pi }(da^i, db^i) =& \otimes_{j=0}^i \Big(Q_j(db_j|b^{j-1},a_j) \otimes {\pi}_j(da_j| b^{j-1}) \Big).   \label{CIS_13_G}  
\end{align} 
(b) Suppose the following two conditions hold.
\begin{align}
(b.1)& \hso {\gamma}_i(T^ia^n, T^ib^{n})=\gamma_i^{A}(a_i, b^{i}) \hso \mbox{or} \hso {\gamma}_i(T^ia^n, T^ib^{n})=\gamma_i^{B.K}(a_i, b_{i-K}^{i}), \hso i=0, \ldots, n,    \label{CIS_19} \\
(b.2)& \hso  C_{A^n \rar B^n}^{FB, A}(\kappa) \tri \sup_{\big\{P_i(da_i| a^{i-1 }, b^{i-1}) :i=0,\ldots,n\big\}\in  {\cal P}_{[0,n]}(\kappa) } I(A^{n}
\rar {B}^{n})   \label{ISDS_6b}   \\
=& \inf_{s \geq  0} \sup_{ \big\{ P_i(da_i|a^{i-1}, b^{i-1}): i=0,\ldots, n\big\}\in {\cal P}_{[0,n]}   } \Big\{ I(A^{n}
\rar {B}^{n})  - s \Big\{ {\bf E}^P\Big(c_{0,n}(A^n,B^{n})\Big)-\kappa(n+1) \Big\}\Big\}. \label{ISDS_6cc}
\end{align}   
The maximization of $I(A^{n}\rar {B}^{n})$ defined by (\ref{CIS_6_a}) over  $\big\{P_i(da_i| a^{i-1 }, b^{i-1}) :i=0,\ldots,n\big\}\in  {\cal P}_{[0,n]}(\kappa)  $ occurs in $\overline{\cal P}_{[0,n]}^{A} \bigcap  {\cal P}_{[0,n]}(\kappa) $, that is, $\big\{P_i(da_i| a^{i-1}, b^{i-1})=\pi_i(da_i|b^{i-1})-a.a. (a^{i-1}, b^{i-1}),  i=0, \ldots, n\big\}$, and the FTFI capacity is given by the following expression.
\begin{align}
{C}_{A^n \rar B^n}^{FB,A}(\kappa)   
  = \sup_{\pi_i(da_i | b^{i-1}) \in {\cal M}({\mb A}_i), i=0,\ldots,n:  \frac{1}{n+1} {\bf E}^{\pi}\big\{c_{0,n}(A^n, B^{n})\big\} \leq \kappa  }
  \sum_{i=0}^{n}{\bf E}^{\pi}\Big\{\log\Big(\frac{dQ_i(\cdot| B^{i-1}, A_i)}
  {d\Pi_i^{\pi_i}(\cdot| B^{i-1})}(B_i)\Big)\Big\}. \label{CIS_18_TC}
  \end{align}

\end{theorem}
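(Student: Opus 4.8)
The plan is to reduce both parts to a single structural fact: for a Class~A channel the directed information---and, under (b.1), the transmission cost---depends on the input policy $\{P_i(da_i|a^{i-1},b^{i-1})\}$ only through the one-step conditionals $\pi_i(da_i|b^{i-1})\tri \int_{{\mb A}^{i-1}} P_i(da_i|a^{i-1},b^{i-1})\,{\bf P}^P(da^{i-1}|b^{i-1})$. First I would exploit the defining property (\ref{CD_C1}), $Q_i(db_i|b^{i-1},a^i)=Q_i(db_i|b^{i-1},a_i)$, which yields the conditional independence $A^{i-1}\leftrightarrow (A_i,B^{i-1})\leftrightarrow B_i$. Applying the chain rule for conditional mutual information, $I(A^i;B_i|B^{i-1})=I(A_i;B_i|B^{i-1})+I(A^{i-1};B_i|A_i,B^{i-1})$, the last term vanishes, so that $I(A^n\rar B^n)=\sum_{i=0}^n I(A_i;B_i|B^{i-1})$. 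Each summand is a functional of the joint law of $(B^{i-1},A_i,B_i)$ only, and since the channel acts through $a_i$ alone, this law is determined by the marginal of $B^{i-1}$, the kernel $\pi_i$, and $Q_i$.

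The crux is then to show that the marginal law of $B^{i-1}$ itself depends on the policy only through $\{\pi_0,\ldots,\pi_{i-1}\}$, which I would establish by induction on $i$. Integrating (\ref{CIS_3a}) over $a^{i-1}$ and using (\ref{CD_C1}) collapses the output kernel to $\Pi_i^{P}(db_i|b^{i-1})=\int_{{\mb A}_i} Q_i(db_i|b^{i-1},a_i)\,\pi_i(da_i|b^{i-1})$, i.e.\ (\ref{CIS_10a}), so the one-step output transition sees the policy only through $\pi_i$. The base case $i=0$ is immediate, and the inductive step follows because $\Pi_{0,n}^{P}(db^n)=\otimes_{j=0}^n \Pi_j^{P}(db_j|b^{j-1})$ is built from these collapsed kernels. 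Consequently, for any $P\in{\cal P}_{[0,n]}$ the policy $\tilde P_i(da_i|a^{i-1},b^{i-1})\tri \pi_i^P(da_i|b^{i-1})$ lies in $\overline{\cal P}_{[0,n]}^{A}$, induces the same output law and the same joint laws of every $(B^{i-1},A_i,B_i)$, and hence yields the identical value of $I(A^n\rar B^n)$. Since $\overline{\cal P}_{[0,n]}^{A}\subset{\cal P}_{[0,n]}$, the two suprema coincide, and substituting the restricted form into $\sum_i I(A_i;B_i|B^{i-1})$ gives exactly (\ref{CIS_18})--(\ref{CIS_13_G}), proving~(a).

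For~(b) I would run the same reduction on the transmission cost. Under (b.1) each $\gamma_i$ depends on the input only through $a_i$ (being a function of $(a_i,b^i)$, or of $(a_i,b_{i-K}^i)$), so ${\bf E}^P\{\gamma_i\}$ is a functional of the joint law of $(A_i,B^i)$; by the inductive argument of~(a) this law depends on the policy only through $\{\pi_0,\ldots,\pi_i\}$. Thus the marginalized policy $\tilde P\in\overline{\cal P}_{[0,n]}^{A}$ attains both the same directed information and the same cost as $P$, so it is feasible whenever $P$ is, and the constrained supremum over ${\cal P}_{[0,n]}(\kappa)$ reduces to one over $\overline{\cal P}_{[0,n]}^{A}\cap{\cal P}_{[0,n]}(\kappa)$, which is (\ref{CIS_18_TC}). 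The assumed duality (\ref{ISDS_6cc}) in~(b.2) is compatible with this reduction: for each fixed $s\ge 0$ the unconstrained inner supremum of $I(A^n\rar B^n)-s\big({\bf E}^P(c_{0,n})-\kappa(n+1)\big)$ is once more a functional of $\{\pi_i\}$ alone and hence may be taken over $\overline{\cal P}_{[0,n]}^{A}$.

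The main obstacle is the inductive consistency step rather than any single computation: the marginalized kernel $\pi_i^P$ is itself defined through ${\bf P}^P(da^{i-1}|b^{i-1})$, which \emph{a priori} depends on the entire past policy, so one must verify that replacing $P$ by $\tilde P$ does not perturb the posteriors used to define $\pi_j$ at later stages. This is precisely why the induction must be carried jointly---one shows simultaneously that the $B^{j-1}$-marginals agree and that $\pi_j^{\tilde P}=\pi_j^{P}$ for all $j$. Secondary technical points, handled by the Borel/standard-space setting assumed in the paper, are the existence and measurability of the regular conditional distributions $\pi_i(\cdot|b^{i-1})$ and the well-definedness of the Radon--Nikodym derivatives $dQ_i/d\Pi_i^{\pi}$ in (\ref{CIS_18}); the absolute continuity $Q_i(\cdot|b^{i-1},a_i)\ll \Pi_i^{\pi}(\cdot|b^{i-1})$ holds $\pi_i$-a.s.\ by construction of the mixture (\ref{CIS_10a}).
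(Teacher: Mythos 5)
Your proposal is correct, but it proves the theorem by a genuinely different route from the one the paper relies on. The paper does not prove this result here at all: it recalls it from Part I, where (as the introduction states) the information structures are derived by combining stochastic optimal control theory—viewing $\{B_i\}$ as the controlled process, the input kernels as the control, and directed information as an additive pay-off, so that dynamic-programming arguments confine the optimal control to functions of the ``state'' $b^{i-1}$—with the variational equality of directed information, in which $I(A^n\rar B^n)$ is written as an infimum over arbitrary output kernels $\overline{\Pi}_i(db_i|b^{i-1})$ and the bound obtained by inserting $\Pi_i^{\pi_i}$ of (\ref{CIS_10a}) is shown to be attained inside $\overline{\cal P}_{[0,n]}^{A}$; the constrained case is then routed through the Lagrangian, which is why (b.2) is assumed. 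Your argument is instead a direct marginalization argument: for an arbitrary $P$ you exhibit the explicit policy $\tilde P_i(da_i|a^{i-1},b^{i-1})=\pi_i^P(da_i|b^{i-1})\in\overline{\cal P}_{[0,n]}^{A}$ and show by induction that it reproduces the joint law of every triple $(B^{i-1},A_i,B_i)$, hence the same value of $I(A^n\rar B^n)=\sum_i I(A_i;B_i|B^{i-1})$ and, under (b.1), the same transmission cost. This is more elementary (no variational equality, no dynamic programming) and yields something slightly stronger: exact value- and feasibility-preservation for \emph{every} policy, so the constrained characterization (\ref{CIS_18_TC}) follows without the duality hypothesis (b.2) doing any real work in your derivation. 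Two remarks. First, the joint-induction difficulty you flag is half vacuous: since $\tilde P_j$ does not depend on $a^{j-1}$, the identity $\pi_j^{\tilde P}=\pi_j^{P}$ is automatic, and only the equality of the $B^{j-1}$-marginals requires the induction. Second, scope: your marginalization collapses the information structure only to $b^{i-1}$; it is precisely the paper's heavier machinery that delivers the further collapse to $b_{i-M}^{i-1}$ for Class B channels in Theorem~\ref{cor-ISR_C4}, a refinement your technique alone would not produce, which explains why Part I is organized around the control-theoretic and variational arguments.
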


\ \

\begin{remark}(Equivalence of constraint and unconstraint problems)\\
The equivalence of constraint and unconstraint problems in Theorem~\ref{thm-ISR}, follows from  Lagrange's duality theory of optimizing convex functionals over convex sets  \cite{dluenberger1969}. Specifically, from \cite{charalambous-stavrou2013aa}, it follows that the set of distributions ${\bf P}^{C1}(da^n|b^{n-1})\tri \otimes_{i=0}^n P_i(da_i|a^{i-1}, b^{i-1}) \in {\cal M}({\mb A}^n)$ is convex, and this uniquely defines ${\cal P}_{[0,n]}$ and vice-versa,  directed information as a functional of  ${\bf P}^{C1}(da^n|b^{n-1}) \in {\cal M}({\mb A}^n)$ is convex, and  by the linearity the constraint set ${\cal P}_{[0,n]}(\kappa)$ expressed in ${\bf P}^{C1}(da^n|b^{n-1})$, is convex. Hence, if their exists a maximizing distribution and the so-called Slater condition holds (i.e., a sufficient condition is the existence of an interior point to the constraint set), then  the constraint and unconstraint problems are equivalent. 
\end{remark}

The next theorem gives the characterization of FTFI capacity for channel distributions of Class B, and transmission cost functions of Class A or B. \\

\begin{theorem}\cite{kourtellaris-charalambousIT2015_Part_1} 
\label{cor-ISR_C4}
(Characterization of FTFI capacity of channel class B and transmission costs of class A or B)\\
(a) Suppose the channel distribution is of Class B, that is, $Q_i(db_i|b_{i-M}^{i-1}, a_i), i=0, \ldots, n$.\\
Then   the maximization of $I(A^n \rar B^n)$ defined by (\ref{CIS_6_a})  over  ${\cal P}_{[0,n]}$ occurs in the subset 
\begin{align}
\sr{\circ}{\cal P}_{[0,n]}^{B.M} \tri     \Big\{   P_i(da_i | a^{i-1}, b^{i-1})= \pi_i(da_i|b_{i-M}^{i-1})-a.a. (a^{i-1}, b^{i-1}): i=0, 1, \ldots, n\Big\} \subset {\cal P}_{[0,n]} . \label{OCID_1}
\end{align}
 and the characterization of the FTFI feedback  capacity is given by the following expression. 
\begin{align}
{C}_{A^n \rar B^n}^{FB,B.M} 
=& \sup_{\big\{\pi_i(da_i | b_{i-M}^{i-1}) \in {\cal M}({\mb A}_i) : i=0,\ldots,n\big\}} \sum_{i=0}^n {\bf E}^{ \pi}\left\{
\log\Big(\frac{dQ_i(\cdot|B_{i-M}^{i-1},A_i)}{v_i^{ \pi}(\cdot|B_{i-M}^{i-1})}(B_i)\Big)
\right\}  \label{cor-ISR_25a_a_c_C4}\\
 \equiv &\sup_{\big\{\pi_i(da_i |b_{i-M}^{i-1}) \in {\cal M}({\mb A}_i) : i=0,\ldots,n\big\}} \sum_{i=0}^n I(A_i; B_i|B_{i-M}^{i-1})
\end{align}
where 
\begin{align}
  v_i^{\pi}(db_i | b_{i-M}^{i-1}) =& \int_{  {\mb A}_i }   Q_i(db_i |b_{i-M}^{i-1}, a_i) \otimes   {\pi}_i(da_i | b_{i-M}^{i-1}), \hso i=0, \ldots, n, \label{cor-ISR_31_c_C4}\\
  {\bf P}^{\pi}(da^i,  d b^i)=& \otimes_{j=0}^i\Big( Q_j(db_j|b_{j-M}^{j-1}, a_j) \otimes \pi_j(da_j| b_{j-M}^{j-1})\Big). \label{cor-ISR_32_c_C4} 
 \end{align}
(b) Suppose the channel distribution is of Class B (i.e., as in (a)),  a transmission cost is imposed  ${\cal P}_{0,n}(\kappa)$, corresponding to transmission cost functions  $\big\{\gamma_i^B(a_i, b_{i-K}^{i}), i=0, \ldots, n\big\},$
and the analogue of  Theorem~\ref{thm-ISR}, (b.2) holds.\\ 
The maximization of $I(A^n \rar B^n)$ defined by (\ref{CIS_6_a}) over $\big\{P_i(da_i|a^{i-1},b^{i-1}), i=0, \ldots, n\big\} \in{\cal P}_{0,n}(\kappa)$  occurs in $\sr{\circ}{\cal P}_{[0,n]}^{B.J} \bigcap {\cal P}_{[0,n]}(\kappa)$,  
and the characterization of FTFI capacity is given by the following expression.
\begin{align}
{C}_{A^n \rar B^n}^{FB,B.J}(\kappa) 
= \sup_{\pi_i(da_i | b_{i-J}^{i-1}) \in {\cal M}({\mb A}_i), i=0,\ldots,n: \frac{1}{n+1} {\bf E}^{\pi}\big\{ c_{0, n}(A^n, B^{n-1}\big\} \leq \kappa    } \sum_{i=0}^n {\bf E}^{ \pi}\left\{
\log\Big(\frac{dQ_i(\cdot|B_{i-M}^{i-1},A_i)}{d\nu_i^{{ \pi}}(\cdot|B_{i-J}^{i-1})}(B_i)\Big)
\right\}  \label{cor-ISR_B.2}
\end{align}
where $J=\max\{K, M\}$ and 
\begin{align}
{\bf P}^{\pi}(db^i, da^i) =&\otimes_{j=0}^i\Big(Q_j(db_j|b_{j-M}^{j-1}, a_j)\otimes \pi_j(da_j|b_{j-J}^{j-1})\Big),  \hso i=0, \ldots, n, \label{cor-ISR_B.2_1} \\
\nu_i^{\pi}(db_i|b_{i-J}^{i-1}) =&\int_{{\mb A}_i} Q_i(db_i|b_{i-M}^{i-1}, a_i)\otimes \pi_i(da_i|b_{i-J}^{i-1}).\label{cor-ISR_B.2_2}
\end{align}
(c)  Suppose the channel distribution is of Class B (i.e., as in (a)), and the maximization of $I(A^n \rar B^n)$ defined by (\ref{CIS_6_a}), is over channel input conditional distributions with transmission cost ${\cal P}_{0,n}(\kappa)$, corresponding to $\{\gamma_i^A(a_i, b^{i}): i=0, \ldots, n\}$, and the analogue of Theorem~\ref{thm-ISR},  (b.2) holds.\\ 
The maximization  of $I(A^n \rar B^n)$ defined by (\ref{CIS_6_a}) over $\big\{P_i(da_i|a^{i-1},b^{i-1}), i=0, \ldots, n\big\} \in{\cal P}_{[0,n]}(\kappa)$  occurs in $\overline{\cal P}_{[0,n]}^A \bigcap{\cal P}_{[0,n]}(\kappa) $.  
\end{theorem}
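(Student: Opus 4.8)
The plan is to prove all three parts by a single mechanism: establishing that, for a Class~B channel, the maximizing channel input distribution enjoys a conditional-independence (information-structure) reduction, and then reading off the corresponding characterization. The engine throughout is the variational equality for directed information together with the convexity/concavity structure already used in Theorem~\ref{thm-ISR}; these reductions are exactly of the type derived in \cite{kourtellaris-charalambousIT2015_Part_1}, so I would cite the machinery there and reproduce only the Class~B specialization here.

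For part~(a), I would start from the definition (\ref{CIS_6_a}) of $I(A^n\rar B^n)$ and substitute the Class~B channel kernel $Q_i(db_i|b_{i-M}^{i-1},a_i)$. The key observation is that, once the channel output at time $i$ depends on the past only through $(b_{i-M}^{i-1},a_i)$, the $i$-th summand $I(A^i;B_i|B^{i-1})$ can be rewritten, after marginalizing the numerator kernel, so that it depends on the input policy only through the $B$-conditional marginal $\pi_i(da_i|b_{i-M}^{i-1})$ rather than the full $P_i(da_i|a^{i-1},b^{i-1})$. Concretely, I would show $I(A_i;B_i|B^{i-1})=I(A_i;B_i|B_{i-M}^{i-1})$ holds at the optimum by a data-processing / conditioning argument: replacing $P_i$ by its conditional marginal $\pi_i(da_i|b_{i-M}^{i-1})$ preserves the joint law of each pair $(A_i,B_i)$ entering the divergence while it cannot decrease the directed information, because the discarded conditioning variables are redundant given the channel's $M$-th order Markov structure. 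This yields that the supremum over ${\cal P}_{[0,n]}$ is attained inside $\sr{\circ}{\cal P}_{[0,n]}^{B.M}$ defined in (\ref{OCID_1}), and the characterization (\ref{cor-ISR_25a_a_c_C4}) together with the induced laws (\ref{cor-ISR_31_c_C4})--(\ref{cor-ISR_32_c_C4}) follows by direct substitution. The main obstacle I anticipate is making the ``cannot decrease'' step rigorous in the abstract Borel-space setting: one must verify that the marginalization is a legitimate stochastic kernel and that the relevant Radon--Nikodym derivatives have the claimed versions, which is precisely where the measurability care from the directed-information definition is needed.

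For part~(b), the new ingredient is the transmission cost of Class~B, $\gamma_i^B(a_i,b_{i-K}^{i})$, imposed through ${\cal P}_{[0,n]}(\kappa)$. I would invoke the Lagrangian-duality equivalence of the constrained and unconstrained problems (the analogue of Theorem~\ref{thm-ISR}(b.2), which is assumed), so that the penalized objective is again a sum of per-stage functionals plus a linear cost term. The cost $\gamma_i^B$ introduces dependence on the window $b_{i-K}^{i}$, so the input policy now needs to retain the conditioning variables $b_{i-J}^{i-1}$ with $J=\max\{K,M\}$ rather than just $b_{i-M}^{i-1}$; the information-structure reduction then lands in $\sr{\circ}{\cal P}_{[0,n]}^{B.J}\bigcap{\cal P}_{[0,n]}(\kappa)$. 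I would run the same marginalization argument as in~(a), but tracking that both the divergence terms and the cost terms are measurable with respect to the enlarged structure $b_{i-J}^{i-1}$, yielding (\ref{cor-ISR_B.2}) with the laws (\ref{cor-ISR_B.2_1})--(\ref{cor-ISR_B.2_2}). The delicate point is confirming that enlarging the memory of the input policy from $M$ to $J$ is both necessary (the cost genuinely depends on $b_{i-K}^i$) and sufficient (no longer memory helps), which I would argue by noting that conditioning on anything beyond $b_{i-J}^{i-1}$ is redundant for both the payoff and the constraint.

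For part~(c), the transmission cost is of Class~A, $\gamma_i^A(a_i,b^{i})$, depending on the entire output past. Here the cost destroys the finite-memory reduction of the \emph{input policy}, so I would not expect a Markov simplification of $\pi_i$; instead the reduction from the full $P_i(da_i|a^{i-1},b^{i-1})$ to the $B$-conditional form $\pi_i(da_i|b^{i-1})$ survives, because removing the $a^{i-1}$ dependence is driven by the channel structure (Class~B $\subset$ Class~A in the relevant conditional-independence sense) and not by the cost. Thus the maximization occurs in $\overline{\cal P}_{[0,n]}^{A}\bigcap{\cal P}_{[0,n]}(\kappa)$, i.e.\ the conclusion coincides with the restricted class of Theorem~\ref{thm-ISR}. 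I would therefore present~(c) as a corollary: apply the part~(a) argument to eliminate the $a^{i-1}$ dependence of the optimal input, observe that the Class~A cost forbids further truncation of $b^{i-1}$, and appeal to Theorem~\ref{thm-ISR}(b) to close the case.
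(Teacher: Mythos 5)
First, a point of reference: this paper does not prove Theorem~\ref{cor-ISR_C4} at all — it is recalled from \cite{kourtellaris-charalambousIT2015_Part_1}, where it is derived by combining stochastic optimal control (dynamic programming) with the variational equality of directed information; this is exactly the machinery the paper leans on when it reuses the theorem (e.g., in the proof of Theorem~\ref{thm-AC-2}). You announce this same engine at the outset, but the concrete mechanism you then give for part (a) is a different, per-letter marginalization argument, and it contains a genuine gap.

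The gap is the step asserting that replacing $P_i(da_i|a^{i-1},b^{i-1})$ by its induced marginal $\pi_i(da_i|b_{i-M}^{i-1})$ ``preserves the joint law of each pair $(A_i,B_i)$ ... while it cannot decrease the directed information.'' Under feedback this is false: substituting a different policy at times $0,\ldots,i-1$ changes the law of the conditioning past $B^{i-1}$, hence of every subsequent pair $(A_i,B_i)$, so marginal matching is not automatic. Relatedly, for a fixed non-Markov policy the $i$-th summand does \emph{not} depend on the policy only through $\pi_i(da_i|b_{i-M}^{i-1})$: the denominator $\Pi_i^{P}(\cdot|b^{i-1})$ in (\ref{CIS_6_a}) averages $Q_i(\cdot|b_{i-M}^{i-1},a_i)$ against ${\bf P}^{P}(da_i|b^{i-1})$, which involves the full output past, and the identity $I(A_i;B_i|B^{i-1})=I(A_i;B_i|B_{i-M}^{i-1})$ you invoke holds only as an inequality in general, namely $I(A_i;B_i|B^{i-1})=I(A_i;B_i|B_{i-M}^{i-1})-{\bf E}\big\{D\big(\Pi_i^{P}(\cdot|B^{i-1})\,\|\,{\bf P}^{P}(\cdot|B_{i-M}^{i-1})\big)\big\}$, with equality precisely when the induced output process is $M$-order Markov, i.e., when the policy already lies in $\sr{\circ}{\cal P}_{[0,n]}^{B.M}$. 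Thus your argument only yields the upper bound $I(A^n\rar B^n)\leq \sum_{i=0}^n I(A_i;B_i|B_{i-M}^{i-1})$ \emph{evaluated under the law induced by $P$}; the essential missing step is showing that the supremum of this bound over all history-dependent policies is attained by Markov policies, despite the fact that the state law ${\bf P}^{P}(db_{i-M}^{i-1})$ and the kernels ${\bf P}^{P}(da_i|b_{i-M}^{i-1})$ are coupled across time through the policy. This is where the cited machinery does real work: either insert a comparison kernel $\nu_i(\cdot|b_{i-M}^{i-1})$ via the variational equality so the per-stage payoff becomes a fixed function of the controlled Markov state $(b_{i-M}^{i-1},a_i)$ and apply dynamic programming, or prove sufficiency of Markov policies directly by inductively constructing $\pi_i(da_i|b_{i-M}^{i-1})\tri {\bf P}^{P}(da_i|B_{i-M}^{i-1})$ and verifying it reproduces the state-action marginals ${\bf P}^{P}(db_{i-M}^{i-1},da_i)$ at every $i$ (this works because $B_{i-M}^{i-1}$ is a controlled Markov state for Class B channels). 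With that step supplied, your outline of (b) goes through with the enlarged state $b_{i-J}^{i-1}$, $J=\max\{K,M\}$ (the averaged cost $\int \gamma_i^B(a_i,b_{i-K}^{i})Q_i(db_i|b_{i-M}^{i-1},a_i)$ is indeed measurable with respect to it), and your part (c), which reduces to Theorem~\ref{thm-ISR}(b), is correct as stated.
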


\section{Realization of Optimal Channel Input Distributions by  Information Lossless Randomized Functions}
\label{ILE}
In this section, alternative characterizations of FTFI capacity given in Theorem~\ref{thm-ISR}, Theorem~\ref{cor-ISR_C4},  are  obtained by  realizing the optimal channel input conditional distributions by information lossless randomized strategies, driven by uniform RVs. Application examples to Gaussian Linear Channel Models with memory are given  in Section~\ref{Con-LQG} and Section~\ref{putl-ex}.\\
 The  principle idea exploited  is  based on a lemma derived in \cite{gihman-skorohod1979}. This lemma  states that, for  any family of conditional distributions (on Polish spaces), conditioned on an information structure (i.e., parametrized by the conditioning variables),   there exist deterministic functions, measurable with respect to the conditioning information structure and an additional real-valued uniform RV taking values in  $[0,1]$, such that conditional distributions  can be replaced by the Lebesgue measure of  such  deterministic functions.  \\
The lemma is stated below.\\

\begin{lemma}[Lemma 1.2 in \cite{gihman-skorohod1979}](Realization of conditional distributions by randomized strategies)\\
\label{lemma-gs1989}
Let ${\bf P }(\cdot| w)$ be a family of measures on  Polish space $({\cal V}, {\mathbb B}({\cal V}))$, $w \in {\cal W}$, (i.e., $( {\cal W}, {\mathbb B}({\cal W})$) a measurable space). \\Let ${\mathbb B}([0,1])$ be the $\sigma-$algebra of Borel sets on $[0,1]$ and ${\bf m}(\cdot)$ the Lebesgue measure on $[0,1]$. \\
If ${\bf P}(A| w)$ is ${\mathbb B}({\cal W})-$measurable in $ w \in {\cal W}$ for all $A \in {\mathbb B}({\cal V})$, then there exists a family of functions $f: {\cal W} \times [0,1]\longmapsto {\cal V}, (w,t) \longmapsto  a\tri f(w,t)$, measurable with respect to ${\mathbb B}({\cal W}) \otimes {\mathbb B}([0,1])$ such that 
\bea
{\bf m}\Big\{t\in [0,1]: f(w, t) \in A\Big\}= {\bf P}(A|w), \hso \forall A \in {\mathbb B}({\cal V}) .  \label{df}
\eea
\end{lemma}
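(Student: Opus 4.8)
The plan is to reduce the problem to the construction of a quantile (inverse distribution-function) map on the real line, and then transport the result to the Polish space $({\cal V}, {\mathbb B}({\cal V}))$ by a Borel isomorphism. The structural input I would invoke is the Kuratowski/Borel isomorphism theorem: every Polish space is Borel isomorphic to a Borel subset $E \subseteq [0,1]$, say via a bimeasurable bijection $\varphi: {\cal V} \longmapsto E$. For each $w$ I push the given measure forward, setting ${\bf P}_\varphi(\cdot|w) \tri {\bf P}(\varphi^{-1}(\cdot)|w)$, a Borel probability measure on $[0,1]$ concentrated on $E$. Since $\varphi^{-1}(B)\in {\mathbb B}({\cal V})$ for every Borel $B$, the hypothesis that $w\longmapsto{\bf P}(A|w)$ is ${\mathbb B}({\cal W})$-measurable carries over to $w\longmapsto{\bf P}_\varphi(B|w)$. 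It therefore suffices to build a jointly measurable $g: {\cal W}\times[0,1]\longmapsto {\mathbb R}$ realizing the family ${\bf P}_\varphi(\cdot|w)$, after which I set $f(w,t)\tri\varphi^{-1}(g(w,t))$ on $\{g(w,t)\in E\}$ and assign a fixed default value of ${\cal V}$ elsewhere.

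For the real-line construction I would define the conditional distribution function $F(x|w)\tri {\bf P}_\varphi\big((-\infty,x]|w\big)$ and its generalized inverse (quantile) $g(w,t)\tri \inf\{x\in {\mathbb R}: F(x|w)\geq t\}$. Because each $F(\cdot|w)$ is nondecreasing and right-continuous, the quantile identity $g(w,t)\leq c \Longleftrightarrow t\leq F(c|w)$ holds for every $c$, and this identity drives both required properties. For joint measurability I write $\{(w,t): g(w,t)\leq c\}=\{(w,t): t\leq F(c|w)\}$; the right-hand set lies in ${\mathbb B}({\cal W})\otimes{\mathbb B}([0,1])$ since $w\longmapsto F(c|w)={\bf P}_\varphi((-\infty,c]|w)$ is measurable by hypothesis and $t\longmapsto t$ is measurable. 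As half-lines generate the Borel $\sigma$-algebra, $g$ is ${\mathbb B}({\cal W})\otimes{\mathbb B}([0,1])$-measurable, and then $f$ is jointly measurable as a composition of $g$ with the Borel map $\varphi^{-1}$, modified on the measurable set $\{g(w,t)\notin E\}$.

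For the realization property I fix $w$ and evaluate the ${\bf m}$-measure of a generic half-line preimage through the same identity: ${\bf m}\{t: g(w,t)\leq c\}={\bf m}\{t\in[0,1]: t\leq F(c|w)\}=F(c|w)={\bf P}_\varphi((-\infty,c]|w)$. Thus the image measure ${\bf m}\circ g(w,\cdot)^{-1}$ agrees with ${\bf P}_\varphi(\cdot|w)$ on the $\pi$-system of half-lines, and by the $\pi$-$\lambda$ theorem they coincide on all of ${\mathbb B}({\mathbb R})$. Transferring through $\varphi$ yields (\ref{df}): since ${\bf P}_\varphi(\cdot|w)$ is concentrated on $E$, the exceptional set satisfies ${\bf m}\{t: g(w,t)\notin E\}={\bf P}_\varphi(E^c|w)=0$, so the default-value modification is ${\bf m}$-negligible, and for every $A\in{\mathbb B}({\cal V})$ I obtain ${\bf m}\{t: f(w,t)\in A\}={\bf m}\{t: g(w,t)\in\varphi(A)\}={\bf P}_\varphi(\varphi(A)|w)={\bf P}(A|w)$.

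I expect the main obstacle to be the joint (product-$\sigma$-algebra) measurability of the realizing map, not the pointwise-in-$w$ construction, which is elementary inverse-transform sampling. The quantile identity $g(w,t)\leq c\Longleftrightarrow t\leq F(c|w)$ is precisely the device that converts the per-$w$ measurability hypothesis into joint measurability, \emph{without} any continuity or regularity assumption on the dependence $w\longmapsto {\bf P}(\cdot|w)$. The only structural demand on ${\cal V}$ is that it be standard Borel, supplied by the Polish assumption through the isomorphism theorem, while no topological structure beyond the $\sigma$-algebra ${\mathbb B}({\cal W})$ is required of ${\cal W}$.
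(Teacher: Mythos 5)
Your proof is correct, and it follows essentially the same route as the source the paper relies on: the paper states this lemma without proof, quoting Lemma~1.2 of Gihman--Skorohod, whose argument likewise proceeds by embedding the Polish space ${\cal V}$ one-to-one and bimeasurably into $[0,1]$ and then realizing the pushed-forward measures by the generalized inverse of the conditional distribution function. Your use of the quantile identity $g(w,t)\leq c \Longleftrightarrow t\leq F(c|w)$ to obtain joint measurability, followed by the $\pi$-$\lambda$ argument and transport back through the Borel isomorphism, is exactly the standard proof of that lemma.
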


Since Lemma~\ref{lemma-gs1989}  holds for general, complete separable metric spaces ${\cal V}, {\cal W}$,  it also  holds for  arbitrary alphabets, such as, continuous, countable, finite etc. Note that the function $f(w, \cdot)$ in Lemma~\ref{lemma-gs1989} is a randomization with respect to a uniform RV taking values in $[0,1]$, and that arbitrary distributed RVs are generated by uniform RVs. 

\subsection{Recursive Nonlinear Channel Models}
Without loss of generality, the material of this section are developed for nonlinear  channel models, which induce channel distributions of Class A, B, as defined  below.\\
  
\begin{definition}(Nonlinear channel models and transmission costs) \\
\label{exa_A_D}
(a) {NCM-A} Nonlinear Channel Models A are defined by   nonlinear recursive models and   transmission cost functions, as follows.
\begin{align}
B_i   =& h_i^{A}(B^{i-1}, A_i, V_i),  \hso B^{-1}=b^{-1}, \hso   i=0, \ldots, n, \hst \frac{1}{n+1}\sum_{i=0}^n {\bf E} \big\{ \gamma_i^{A}(A_i, B^{i}) \big\}  \leq \kappa \label{NCM-A_D}
\end{align} 
where $\{V_i: i=0,1, \ldots, n\}$ is the noise process. The underlying assumptions are the following.\\
 {\bf Assumption A.(i)}  The alphabet spaces include any of the following.
\begin{align}
&\mbox{(a) Continuous Alphabets:} \;
{\mb B_i}\tri {\mb R}^p, \; {\mb A_i}\tri {\mb R}^q, \;  {\mb V}_i \tri {\mb R}^r, \hso i=0, 1, \ldots, n; \label{CA_A.D} \\
&\mbox{(b) Finite Alphabets:} \;
{\mb B_i}\tri \big\{1, \ldots, p\big\}, \; {\mb A_i}\tri \big\{1, \ldots, q\},  \; {\mb V}_i \tri \big\{1, \ldots, r\big\}, \; i=0, 1, \ldots, n;  \label{DA_A.D} \\
&\mbox{(c) Combinations of Continuous and Discrete (Finite or Countable) Alphabets.}
\end{align}
The above simply illustrates  that no assumption is imposed on the alphabet spaces, and (a)-(c) are specific cases.\\
 {\bf  Assumption A.(ii)} $h_i^{A}: {\mb B}^{i-1} \times {\mb A}_i  \times {\mb V_i} \longmapsto {\mb B}_i, \gamma_i^{A}:  {\mathbb A}_i\times  {\mb B}^{i} \longmapsto {\mb A}_i$  and $h_i^A(\cdot, \cdot, \cdot), \gamma_i^A(\cdot, \cdot)$ are  measurable functions, for $i=0, 1, \ldots, n$; \\
{\bf Assumption A.(iii).}  The noise process $\{V_i: i=0, \ldots, n\}$  satisfies 
\bea
{\bf P}_{V_i|V^{i-1}, A^i}(dv_i|v^{i-1}, a^i)=  {\bf P}_{V_i}(dv_i)-a.a.(v^{i-1}, a^i),\hso i=0, \ldots, n. \label{CI_A}
\eea
By (\ref{CI_A}),  the following  consistency condition holds.
\begin{align}
{\mb P}\Big\{B_i \in \Gamma \Big| B^{i-1}=b^{i-1}, A^i=a^i\Big\}=\; & {\bf P}_{V_i}\Big(V_i: h_i^{A}(b^{i-1}, a_i, V_i) \in \Gamma \Big), \hso \Gamma \in {\cal B}({\mb B}_i)\label{SCNCMD-A.D} \\
=\; & Q_i(\Gamma |b^{i-1}, a_i),  \hso  i=0,1, \ldots, n.   \label{NCM-A.D_CD}
\end{align}
Model (\ref{NCM-A_D}) induces a conditional channel distribution of  Class A. 
The convention is that  transmission  starts at time $i=0$, and  the  initial data $B^{-1}=b^{-1}\equiv b_{-\infty}^{-1}$ are either specified or their distribution is fixed. There is no loss of generality to assume the above convention, because all material also hold if the following alternative convention is considered.  $B_0=h_0(B^{-1},A_0, V_0)\equiv h_0(A_0,V_0), \gamma_0^A(A_0, B^0) \equiv \gamma_0^A(A_0), B_1=h_1(B^0, A_1, V_1)\equiv h_1(B_0, A_1, V_1), \gamma_1^A(A_1, B^1)=\gamma_1(A_1, B_0, B_1),  \ldots, n,$ and $B^{-1}=0$, that is, no information is available prior to transmission.    

(b) {NCM-B} Nonlinear Channel Models B are defined as follows.  
\begin{align}
B_i   = h_i^{B.M}(B_{i-M}^{i-1}, A_{i}, V_i),  \hso B_{-M}^{-1}=b_{-M}^{-1}, \hso   i=0, \ldots, n, \hst \frac{1}{n+1}\sum_{i=0}^n {\bf E} \big\{ \gamma_i^{B.K}(A_i, B_{i-K}^{i}) \big\}  \leq \kappa \label{NCM-C_D_B}
\end{align} 
where $\{V_i: i=0,1, \ldots, n\}$ is the noise process. 
The underlying assumptions are the following.    \\
 {\bf Assumption B.(i)}  Assumptions A.(i)-A.(iii) hold with appropriate changes.\\
By (\ref{CI_A}),   the following  consistency condition holds.
\begin{align}
{\mb P}\Big\{B_i \in \Gamma \Big| B^{i-1}=b^{i-1}, A^i=a^i\Big\}=\; & {\bf P}_{V_i}\Big(V_i: h_i^{B.M}(b_{i-M}^{i-1}, a_i, V_i) \in \Gamma \Big), \hso \Gamma \in {\cal B}({\mb B}_i)\label{SCNCMD-A.D_B} \\
=\; & Q_i(\Gamma |b_{i-M}^{i-1}, a_i),  \hso  i=0,1, \ldots, n.   \label{NCM-A.D_CD_B}
\end{align}
\end{definition}

It is not necessary to introduced additional NCMs which are combinations of channels of Class A or B and transmission costs of Class A or B, because these are included in the above models.


\subsection{Alternative Characterization of FTFI Capacity for NCM-A }
\label{exa_A.1}
Consider the NCM-A given by (\ref{NCM-A_D}) (i.e.,  Definition~\ref{exa_A_D}, (a)).  By invoking Lemma~\ref{lemma-gs1989}, and a property called information lossless randomized strategies, an alternative characterization of the FTFI capacity   given in Theorem~\ref{thm-ISR}, (b),  
is   obtained, as stated in the next theorem. \\

\begin{theorem}(Characterization of FTFI capacity for NCM-A by information lossless randomized strategies)\\
\label{thm-AC-1}
Consider the characterization of FTFI capacity, $C_{A^n \rar B^n}^{FB, A}(\kappa)$, given in Theorem~\ref{thm-ISR}, (b), for the NCM-A of Definition~\ref{exa_A_D}, (a).  \\
Then the following hold.\\
(a) The consistency conditions CON.A.(1), (2) stated below hold.\\
{ CON.A.(1).} there exist functions  $e_i^{A}(\cdot, \cdot)$ measurable with respect to the information structure ${\cal I}_i^{e^{A}} \tri \{b^{i-1}, u_i\},  i=0,1, \ldots, n$  and defined by 
\begin{align}
e_i^{A}: {\mathbb B}^{i-1} \times {\mathbb U}_i \longmapsto {\mathbb A}_i,  \hso  {\mb U}_i \tri [0,1], \hst   a_i= e_i^{A}(b^{i-1}, u_i), \hso i=0,1, \ldots, n \label{UN_IL}
\end{align}
 such that $\big\{U_i: i=0, \ldots, n\big\}$ are uniform RVs on $[0,1]^{n+1}$ and 
\begin{align}
{\bf P}_{A_i| B^{i-1}}(da_i| b^{i-1})= {\bf P}_{U_i} \Big( U_i : e_i^{A}(b^{i-1}, U_i)  \in da_i \Big), \hso i=0,1, \ldots, n.   \label{LCM-A.1_5_new}
 \end{align}
{ CON.A.(2).} i) $A_i$ is conditionally independent  of   $A^{i-1}$ given $B^{i-1}$, for each $i=0,1, \ldots, n$,  ii) $U_i$ is independent of $\big(U^{i-1}, V^{i-1}\big), i=0,1, \ldots, n$, and iii) $V_i$ is independent of $\big(V^{i-1}, U^{i}\big), i=0,1, \ldots, n$.

(b) The set of all channel input distribution $\overline{\cal P}_{[0,n]}^{A}$ defined by (\ref{rest_set_1})  is realized by  strategies $\{e_i^{A}(\cdot, \cdot): i=0,1, \ldots, n\}$, and the following hold\footnote{Superscript notation ``${\bf E}^{e^{A}}\{\cdot \}$'' indicates the dependence of the joint distribution on the  strategy $\{e_i^{A}(\cdot, \cdot):i=0,\ldots, n\} $.}. 
\begin{align}
&A_i =e_i^{A}(B^{i-1}, U_i), \hso i=0,1, \ldots, n,   \label{LCMNonG-A.1_6a_new}    \\
&B_i   = h_i^{A}(B^{i-1}, A_i, V_i), \hso B^{-1}=b^{-1}, \hso i=0,1, \ldots, n,    \label{LCMNon-A.1_6b_new}\\
&\frac{1}{n+1}\sum_{i=0}^n {\bf E}^{e^{A}} \Big\{\gamma_i^{A}(e_i^{A}(B^{i-1}, U_i), B^{i}) \Big\}   \leq \kappa. \label{LCMNon-A.1_6c_new}
\end{align} 
(c) Define the restricted class of randomized strategies called information lossless randomized strategies, as follows.  
\begin{align}
&{\cal E}_{[0,n]}^{IL-A} (\kappa) \tri  \Big\{ e_i^{A}(b^{i-1}, u_i) \:\: \mbox{defined by (\ref{UN_IL})},\:  \:  \mbox{and for a fixed $b^{i-1}$,  the function } \;  \: e_i^{A}(b^{i-1}, \cdot)  \nonumber \\
& \mbox{is one-to-one and onto ${\mb A}_i$, for $i=0,\ldots, n$}: \hso  \frac{1}{n+1}\sum_{i=0}^n {\bf E}^{e^{A}} \Big\{\gamma_i^{A}(e_i^{A}(B^{i-1}, U_i), B^{i})\Big\}    \leq \kappa         \Big\}. \label{alt_A.1_NCM}
\end{align}
Then an  alternative equivalent characterization of FTFI capacity ${C}_{A^n \rar B^n}^{FB,A}(\kappa)$ defined by (\ref{CIS_18}), is given by the following expression.
\begin{align}
&{C}_{A^n \rar B^n}^{FB,A}(\kappa)={C}_{A^n \rar B^n}^{FB,IL-A} (\kappa) \tri  \sup_{ {\bf P}_{U^n},   \big\{e_i^{A}(b^{i-1}, u_i): i=0, \ldots, n\big\} \in {\cal E}_{[0,n]}^{IL-A}(\kappa)        }  \sum_{i=0}^n   {\bf E}^{e^{A}}\Big\{\log\Big(\frac{Q_i(\cdot| B^{i-1}, e_i^{A}(B^{i-1}, U_i))}
  {\Pi_i^{e^{A}}(\cdot| B^{i-1})}(B_i)\Big)\Big\}  \label{CMNon-A.1_10_new} \\
&\hst \hst \hst \hst \hst \hst \hst \hst \hso \equiv    \sup_{ {\bf P}_{U^n},    \big\{e_i^{A}(b^{i-1}, u_i): i=0, \ldots, n\big\} \in {\cal E}_{[0,n]}^{IL-A}(\kappa)        }      \sum_{i=0}^n I^{e^{A}}(U_i; B_i|B^{i-1}), \label{CMNon-A.1_10_new_o} \\
&{\Pi }^{e^{A}}(db_i|b^{i-1}) 
=\int_{ {\mb U}_i}  Q_i(db_i| b^{i-1}, e_i^{A}(b^{i-1}, U_i))\otimes {\bf P}_{U_i}(du_i). \label{LCMNon-A.1_4_new} 
\end{align}
\end{theorem}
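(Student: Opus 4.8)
The plan is to transport the characterization of Theorem~\ref{thm-ISR}(b), whose optimizer is a family of kernels $\{\pi_i(da_i|b^{i-1}):i=0,\ldots,n\}$, into functional (randomized strategy) form by invoking Lemma~\ref{lemma-gs1989}, and then to pin down precisely when this transport is information lossless. For part (a), I would apply Lemma~\ref{lemma-gs1989} to each kernel separately, taking ${\cal W}=\mathbb{B}^{i-1}$, ${\cal V}=\mathbb{A}_i$ and ${\bf P}(\cdot|w)=\pi_i(\cdot|b^{i-1})$: since $\pi_i(A|\cdot)$ is $\mathcal{B}(\mathbb{B}^{i-1})$-measurable as a stochastic kernel, the lemma yields a jointly measurable $e_i^{A}:\mathbb{B}^{i-1}\times[0,1]\longmapsto\mathbb{A}_i$ satisfying the pushforward identity (\ref{df}), which is exactly (\ref{LCM-A.1_5_new}); this establishes CON.A.(1). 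The conditional independence $A_i\perp A^{i-1}\mid B^{i-1}$ of CON.A.(2)(i) is inherited verbatim from the structure of the optimizer in (\ref{rest_set_1}). For CON.A.(2)(ii)--(iii), I would take $\{U_i\}$ to be i.i.d.\ uniform on $[0,1]$, drawn independently of the noise $\{V_i\}$; then (ii) holds by this choice, while (iii) is the noise assumption (\ref{CI_A}) restated once the independent $U_i$ are adjoined.

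For part (b), I would verify that the strategy $\{e_i^{A}\}$ reproduces the whole restricted class $\overline{\cal P}_{[0,n]}^{A}$ of (\ref{rest_set_1}). This is a change-of-variables consistency check: by (\ref{df}) the law of $e_i^{A}(b^{i-1},U_i)$ under the Lebesgue measure equals $\pi_i(\cdot|b^{i-1})$, so substituting $A_i=e_i^{A}(B^{i-1},U_i)$ into the consistency condition (\ref{SCNCMD-A.D})--(\ref{NCM-A.D_CD}) leaves the one-step kernel $Q_i(db_i|b^{i-1},a_i)$ and the induced joint law (\ref{CIS_13_G}) unchanged, and carries the average cost to the form (\ref{LCMNon-A.1_6c_new}). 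Since, conversely, every admissible $\pi$ arises in this manner, the realization map is onto $\overline{\cal P}_{[0,n]}^{A}$, giving (b).

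For part (c) I would prove the two inequalities between $C_{A^n\rar B^n}^{FB,A}(\kappa)$ of (\ref{CIS_18_TC}) and $C_{A^n\rar B^n}^{FB,IL-A}(\kappa)$ of (\ref{CMNon-A.1_10_new}) separately, first noting that for any strategy the conditional law of $B_i$ given $(B^{i-1},U_i)$ is $Q_i(\cdot|B^{i-1},e_i^{A}(B^{i-1},U_i))$ and its $B^{i-1}$-marginal is $\Pi_i^{e^{A}}$ of (\ref{LCMNon-A.1_4_new}), so the integrand in (\ref{CMNon-A.1_10_new}) is literally the density defining $I^{e^{A}}(U_i;B_i|B^{i-1})$, which yields the restatement (\ref{CMNon-A.1_10_new_o}). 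For the bound $C_{A^n\rar B^n}^{FB,IL-A}(\kappa)\le C_{A^n\rar B^n}^{FB,A}(\kappa)$: along the Markov chain $U_i\rar A_i\rar B_i$ (conditioned on $B^{i-1}$, with $V_i$ independent of $U_i$ by CON.A.(2)(iii)), data processing gives $I^{e^{A}}(U_i;B_i|B^{i-1})\le I(A_i;B_i|B^{i-1})$, and the information lossless clause of (\ref{alt_A.1_NCM}), namely that $e_i^{A}(b^{i-1},\cdot)$ is one-to-one and onto $\mathbb{A}_i$, forces $\sigma(U_i)=\sigma(A_i)$ given $B^{i-1}$ and hence equality; summing over $i$ shows the objective equals $I(A^n\rar B^n)$ at the induced feasible $\pi$, bounded by the FTFI capacity. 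For the reverse bound I would take any near-optimal $\pi$ and realize it by an information lossless strategy, invoking the one-to-one Borel-isomorphism step underlying Lemma~\ref{lemma-gs1989} (a measurable injection $\mathbb{A}_i\rar[0,1]$ composed with the quantile construction) to obtain an $e_i^{A}(b^{i-1},\cdot)$ that is bijective while still obeying (\ref{df}); the losslessness equality then gives $\sum_i I^{e^{A}}(U_i;B_i|B^{i-1})=I(A^n\rar B^n)$, and taking suprema yields $C_{A^n\rar B^n}^{FB,A}(\kappa)\le C_{A^n\rar B^n}^{FB,IL-A}(\kappa)$.

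The step I expect to be the main obstacle is this reverse inequality, specifically the construction of an information lossless (one-to-one and onto $\mathbb{A}_i$) realization \emph{uniformly} over the alphabet classes of Assumption~A.(i), together with the tightness of data processing. The delicate points are: (i) preserving joint measurability in $(b^{i-1},u_i)$ of the bijective map while keeping (\ref{df}); (ii) interpreting the ``onto $\mathbb{A}_i$'' clause modulo $\pi_i$-null sets when $\pi_i(\cdot|b^{i-1})$ is not fully supported or when $\mathbb{A}_i$ is finite; and (iii) justifying that bijectivity given $B^{i-1}$ equates the generated $\sigma$-algebras, hence the relevant Radon--Nikodym derivatives in (\ref{CIS_6_a}), turning the generic data-processing inequality into the required equality.
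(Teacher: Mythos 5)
Your proposal is correct, and for parts (a) and (b) it is exactly the paper's argument: Lemma~\ref{lemma-gs1989} applied kernel-by-kernel to the optimizer class $\overline{\cal P}_{[0,n]}^{A}$ of Theorem~\ref{thm-ISR}, with CON.A.(2) inherited from Assumption A.(iii) and the structure of the optimizer. The difference is in part (c): the paper dispatches the losslessness step in one stroke by citing Theorem 3.7.1 of Pinsker \cite{pinsker1964} (and its corollary) to assert $I(A_i;B_i|B^{i-1})=I(A_i,U_i;B_i|B^{i-1})=I(U_i;B_i|B^{i-1})$ if and only if $\{e_i^{A}\}\in{\cal E}_{[0,n]}^{IL-A}(\kappa)$, whereas you prove the identity by hand (data processing along $U_i \leftrightarrow (A_i,B^{i-1}) \leftrightarrow B_i$ plus the $\sigma$-algebra equality forced by bijectivity) and split the capacity equality into two inequalities, the reverse one requiring an explicit bijective realization of a near-optimal $\pi$. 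Your route is more self-contained and in fact sharpens the picture: since $A_i=e_i^{A}(B^{i-1},U_i)$ is a deterministic function of $(U_i,B^{i-1})$, and CON.A.(2)(iii) gives the Markov chain $U_i \leftrightarrow (A_i,B^{i-1}) \leftrightarrow B_i$, the two-way chain rule yields $I(U_i;B_i|B^{i-1})=I(U_i,A_i;B_i|B^{i-1})=I(A_i;B_i|B^{i-1})$ for \emph{every} measurable strategy realizing the kernel, so bijectivity is not actually needed for your forward bound; it matters only where you place the main obstacle, namely in the reverse bound, to guarantee that the restricted class ${\cal E}_{[0,n]}^{IL-A}(\kappa)$ still realizes all of $\overline{\cal P}_{[0,n]}^{A}\bigcap{\cal P}_{[0,n]}(\kappa)$. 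The delicate points you flag there are genuine and are exactly what the Pinsker citation papers over: a one-to-one map of $[0,1]$ onto a finite ${\mathbb A}_i$ does not exist, so for the finite-alphabet case of Assumption A.(i) the ``one-to-one and onto'' clause must be read modulo $\pi_i$-null sets, or through the composition $e_i^{A}(b^{i-1},\cdot)=\overline{e}_i^{A}(b^{i-1},g_i(\cdot))$ of Remark~\ref{rem_AC_1}; with that reading your two-inequality argument closes, and it makes explicit a measure-theoretic caveat the paper leaves implicit.
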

\begin{proof} See Appendix~\ref{appendix_thm-AC-1}. 
\end{proof}

\ \

\begin{remark}(Comments on Theorem~\ref{thm-AC-1})\\
\label{rem_AC_1}
Given a specific NCM-A, it can be shown that the  maximization over ${\bf P}_{U^n}$ in (\ref{CMNon-A.1_10_new_o}) is not required, because, for a fixed $B^{i-1}=b^{i-1}$,  the optimal channel input distribution $\pi_i^*(da_i|b^{i-1})$ can be generated via proper choice of the function $e_i^{A}(b^{i-1}, \cdot)$, as a composition of two functions, $e_i^{A}(b^{i-1}, \cdot)=\overline{e}_i^{A}(b^{i-1}, g_i(\cdot)), g_i: [0,1] \longmapsto {\mathbb  Z}_i, z_i=g_i(u_i)$, where $\big\{Z_i=g_i(U_i): i=0, \ldots, n\big\}$ is a specific random process, i.e., its  distribution is specific and depends on the channel distribution,   and that this composition of functions  induces the  optimal conditional channel input distribution $\pi_i^*(da_i|b^{i-1})$, for $i=0, \ldots, n$.  For example, if the channel distribution is memoryless, i.e.,  $Q_i(db_i|b^{i-1},a_i)=Q_i(db_i|a_i)$ and $\gamma_i^A(a_i, b^{i-1})=\gamma_i(a_i)$, for $i=0, \ldots, n$, and the distribution,  which maximizes the characterization of FTFI capacity is ${\mb P}\big\{A_i \leq a_i\big\}\tri F_{A_i}^*(a_i), i=0, \ldots, n$, then $a_i=e_i(u_i)$ and the optimal functions in (\ref{CMNon-A.1_10_new_o}) are given by $e_i^{*}(u_i)={ F}_{A_i}^{*,-1}(u_i), i=0, \ldots, n$.  This is due to the fact an arbitrary distributed RVs can be generated from uniform RVs.
In general, the maximization  in (\ref{CMNon-A.1_10_new_o}) can be solved using  dynamic programming \cite{kumar-varayia1986,vanschuppen2010}.
\end{remark}

\subsection{Alternative Characterization of FTFI Capacity for NCM-B }
\label{exa_A.1_B.1}
Consider the NCM-B defined by (\ref{NCM-C_D_B}) (i.e.,   Definition~\ref{exa_A_D}, (b)). 
By   Theorem~\ref{cor-ISR_C4}, (c), the corresponding optimal channel input distribution are of the form 
$\big\{ \pi_i(da_i| b_{i-J}^{i-1}):  i=0, 1, \ldots, n\big\}$,   $ J \tri \max\{M,K\}$. Clearly, all the material of Section~\ref{exa_A.1} apply to NCM-B.  The analog of Theorem~\ref{thm-AC-1} is stated for future reference. \\

\begin{theorem}(Characterization of FTFI capacity for NCM-B by information lossless randomized strategies)\\
\label{thm-AC-2}
Consider the characterization of FTFI capacity, $C_{A^n \rar B^n}^{FB, B.J}(\kappa)$, given in Theorem~\ref{cor-ISR_C4}, (c), for the NCM-B of Definition~\ref{exa_A_D}, (b).  \\
Then the following hold.\\
(a) The consistency conditions CON.B.(1), (2) stated below hold.\\
\noi { CON.B.(1).} There exists a function $e_i^{B.J}(\cdot), J\tri \max\{M, K\}$ measurable with respect to the information structure ${\cal I}_i^{e^{B.J}} \tri \{b_{i-J}^{i-1}, u_i\},  i=0,1, \ldots, n$  defined by
\bea
e_i^{B.J}: {\mathbb B}_{i-J}^{i-1}\times \times  {\mathbb U}_i \longmapsto {\mathbb A}_i, \hso  {\mb U}_i \tri [0,1], \hso  a_i= e_i^{B.J}(b_{i-J}^{i-1}, u_i),  i=0,1, \ldots, n \label{CI_B}
\eea
where $\{U_i: i=0,1, \ldots, n\}$ are  uniform distributed on  $[0,1]^{n+1}$  such that 
\begin{align}
 {\bf P}_{A_i| B_{i-J}^{i-1}}(da_i| b_{i-J}^{i-1})={\bf P}_{U_i} \Big( U_i : e_i^{B.J}(b_{i-J}^{i-1}, U_i)  \in da_i \Big), \hso i=0,1, \ldots, n, \hso J\tri \max\{M, K\}. \label{NCM-C.1_10_B}
 \end{align}
\noi { CON.B.(2).} i) $A_i$ is conditionally independent of $\big\{A^{i-1}, B^{i-J-1}\big\}$ given $\big\{B_{i-J}^{i-1}\big\}$ for $i=0, \ldots, n$, ii) $U_i$ is independent of $\Big(U^{i-1}, V^{i-1}\Big), i=0, \ldots, n$, iii) $V_i$ is independent of $\Big(V^{i-1}, U^i\Big)$, $i=0, \ldots, n$.

(b) The set of all channel input distribution $\sr{\circ}{\cal P}_{[0,n]}^{B.J}$ defined by (\ref{OCID_1})   is realized by  strategies $\{e_i^{B.J}(\cdot, \cdot): i=0,1, \ldots, n\}$, and the following hold. 
\begin{align}
&A_i =e_i^{B.J}(B_{i-J}^{i-1}, U_i), \hso i=0,1, \ldots, n,   \label{LCMNonG-C.1_6a_B}    \\
&B_i   =h_i^{B.M}(B_{i-M}^{i-1}, e_i^{B.J}(B_{i-J}^{i-1}, U_i), V_i),\hso B_{-M}^{-1}=b_{-M}^{-1}, \hso i=0,1, \ldots, n,    \label{NCM-B.1_11_B}\\
&\frac{1}{n+1}\sum_{i=0}^n {\bf E}^{e^{B.J}} \Big\{\gamma_i^{B.K}(e_i^{B.J}(B_{i-J}^{i-1}, U_i), B_{i-K}^{i})\Big\}    \leq \kappa. \label{NCM-C.1_12_B}
\end{align} 

(c) The restricted class of randomized strategies, defined by  
\begin{align}
&{\cal E}_{[0,n]}^{IL-B.J} (\kappa) \tri  \Big\{ e_i^{B.J}(b_{i-J}^{i-1}, u_i) \hso \mbox{defined by (\ref{CI_B}) and for a fixed $b_{i-J}^{i-1},$ the map } \;  \: e_i^{B.J}(b_{i-J}^{i-1}, \cdot)  \nonumber \\
& \mbox{is one-to-one and onto ${\mb A}_i$ for $i=0,\ldots, n$}: \frac{1}{n+1}\sum_{i=0}^n {\bf E}^{e^{B.J}} \Big(\gamma_i^{B.K}(e_i^{B.J}(B_{i-J}^{i-1}, U_i), B_{i-K}^{i})\Big)    \leq \kappa         \Big\}. \label{NCM-C.1_13_B}
\end{align}
is information lossless,  and an  alternative characterization of FTFI capacity $C_{A^n \rar B^n}^{FB, B.J}(\kappa)$,  is  given by the following expression.
\begin{align}
&{C}_{A^n \rar B^n}^{FB,B.J}(\kappa)={C}_{A^n \rar B^n}^{FB,IL-B.J} (\kappa) \tri  \sup_{ {\bf P}_{U^n},   \big\{e_i^{B.J}(b_{i-J}^{i-1}, u_i): i=0, \ldots, n\big\} \in {\cal E}_{[0,n]}^{IL-B.J}(\kappa)        } \Bigg\{\nonumber \\
& \hst \hst \hst \hst \hst \hst \hst \hst \hst \hst
 \sum_{i=0}^n   {\bf E}^{e^{B.J}}\Big\{\log\Big(\frac{dQ_i(\cdot| B_{i-M}^{i-1}, e_i^{B.J}(B_{i-J}^{i-1}, U_i))}
  {d\nu_i^{e^{B.J}}(\cdot| B_{i-J}^{i-1})}(B_{i-1})\Big)\Big\} \Bigg\}  \label{NCM-C.1_14_B}   \\
& \hst \hst \hst \hst \hst \hst \hst \hst \hst  \equiv    \sup_{{\bf P}_{U^n}, \big\{e_i^{B.J}(b_{i-J}^{i-1},u_i): i=0, \ldots, n\big\} \in {\cal E}_{[0,n]}^{IL-B.J}(\kappa)        }    \sum_{i=0}^n I^{e^{B.J}}(U_i; B_i|B_{i-J}^{i-1}) \\
&\nu_i^{e^{B.J}}(db_i|b_{i-J}^{i-1})=  \int_{{\mathbb U}_i} Q_i(dB_i| B_{i-M}^{i-1}, e_i^{B.J}(b_{i-J}^{i-1},  u_i)) \otimes  {\bf P}_{U_i}(du_i),  \hso  i=0,1, \ldots, n. \label{LCMNon-C.1_4_B} 
\end{align}
\end{theorem}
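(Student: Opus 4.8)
The plan is to follow the argument of Theorem~\ref{thm-AC-1} almost verbatim, replacing the full conditioning history $b^{i-1}$ by the finite-memory information structure $b_{i-J}^{i-1}$, $J\tri\max\{M,K\}$, which by Theorem~\ref{cor-ISR_C4}, (c) is precisely the information structure of the optimal channel input distribution for the NCM-B. Thus the only genuinely new bookkeeping is that everything is conditioned on the sliding window $B_{i-J}^{i-1}$ rather than on the entire past.

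For part (a), CON.B.(1), I would invoke Lemma~\ref{lemma-gs1989} with conditioning space ${\cal W}={\mathbb B}_{i-J}^{i-1}$ and value space ${\cal V}={\mathbb A}_i$, applied to the family $\big\{\pi_i(da_i|b_{i-J}^{i-1})\big\}$ supplied by Theorem~\ref{cor-ISR_C4}, (c). The lemma produces measurable maps $e_i^{B.J}:{\mathbb B}_{i-J}^{i-1}\times[0,1]\longmapsto{\mathbb A}_i$ satisfying ${\bf m}\big\{u_i: e_i^{B.J}(b_{i-J}^{i-1},u_i)\in da_i\big\}=\pi_i(da_i|b_{i-J}^{i-1})$, which is exactly (\ref{NCM-C.1_10_B}). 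For CON.B.(2), item (i) is inherited directly from the conditional-independence property of the optimizer in Theorem~\ref{cor-ISR_C4}, (c), i.e.\ $P_i(da_i|a^{i-1},b^{i-1})=\pi_i(da_i|b_{i-J}^{i-1})$; items (ii)--(iii) are secured by enlarging the probability space so that $U_i$ is independent of $(U^{i-1},V^{i-1})$ and $V_i$ is independent of $(V^{i-1},U^i)$, which is compatible with Assumption~B and (\ref{CI_A}). Part (b) then follows by substituting $A_i=e_i^{B.J}(B_{i-J}^{i-1},U_i)$ into the recursion (\ref{NCM-C_D_B}) and into the cost functional, using the consistency condition (\ref{NCM-A.D_CD_B}) to identify the induced channel law.

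For part (c) the crux is the information-lossless property. Restricting to strategies for which $e_i^{B.J}(b_{i-J}^{i-1},\cdot)$ is one-to-one and onto ${\mathbb A}_i$ guarantees that, conditioned on $B_{i-J}^{i-1}=b_{i-J}^{i-1}$, the correspondence between $U_i$ and $A_i$ is a measurable bijection, so that the conditional $\sigma$-algebras generated by $A_i$ and by $U_i$ coincide and hence $I^{e^{B.J}}(U_i;B_i|B_{i-J}^{i-1})=I(A_i;B_i|B_{i-J}^{i-1})$ term by term. This identity gives $C_{A^n\rar B^n}^{FB,IL-B.J}(\kappa)\leq C_{A^n\rar B^n}^{FB,B.J}(\kappa)$ at once, since each information-lossless strategy induces an admissible $\pi_i(da_i|b_{i-J}^{i-1})$ with identical payoff and cost. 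For the reverse inequality I would show that every optimal $\pi_i^{*}(da_i|b_{i-J}^{i-1})$ admits a one-to-one, onto realization: on Polish spaces one composes the Gihman--Skorohod map with a Borel isomorphism between ${\mathbb A}_i$ and ${\mathbb U}_i$ so that the resulting map is bijective for each fixed $b_{i-J}^{i-1}$ while still pushing the driving measure forward to $\pi_i^{*}$, leaving the conditional law of $B_i$ and the average cost unchanged. Combining the two inequalities yields (\ref{NCM-C.1_14_B}) together with the output kernel (\ref{LCMNon-C.1_4_B}).

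The main obstacle is the reverse inequality in part (c): establishing that the restriction to one-to-one and onto maps does not shrink the set of achievable payoffs. The delicate point is realizing an \emph{arbitrary} optimal kernel $\pi_i^{*}$ --- including degenerate (deterministic) or finitely supported ones --- by a bijective map driven by $U_i$, which forces a careful choice of the driving randomization (an auxiliary Borel isomorphism, or adapting the alphabet ${\mathbb U}_i$ of $U_i$ to ${\mathbb A}_i$) so that both the information-lossless constraint and the push-forward identity (\ref{NCM-C.1_10_B}) hold simultaneously. Every remaining step is a routine specialization of the proof of Theorem~\ref{thm-AC-1} to the $J$-memory information structure.
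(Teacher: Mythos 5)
Your proposal is correct, and in parts (a) and (b) it is exactly the paper's argument: take the information structure $\{\pi_i(da_i|b_{i-J}^{i-1}): i=0,\ldots,n\}$ supplied by Theorem~\ref{cor-ISR_C4}, (c), apply Lemma~\ref{lemma-gs1989} with conditioning space ${\mathbb B}_{i-J}^{i-1}$ and value space ${\mathbb A}_i$, and substitute the resulting maps into the channel recursion and the cost functional.

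Where you genuinely depart from the paper is part (c). The paper's proof there is a one-line appeal to Pinsker's Theorem 3.7.1 and its corollary: it asserts the chain $I(A_i;B_i|B_{i-J}^{i-1})=I(A_i,U_i;B_i|B_{i-J}^{i-1})=I(U_i;B_i|B_{i-J}^{i-1})$ holds ``if and only if'' the strategies lie in ${\cal E}_{[0,n]}^{IL-B.J}(\kappa)$, and then simply declares that the alternative characterization follows; it never splits the claim into two inclusions, and it never discusses whether every admissible kernel is realizable inside the information-lossless class. You instead prove the invariance directly via the $\sigma$-algebra bijection and then organize the proof as two inequalities, correctly identifying the reverse one (bijective realizability of an arbitrary optimal kernel) as the only nontrivial step. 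Two observations on this comparison. First, for the forward inequality your bijectivity argument is sufficient but stronger than needed: under CON.B.(2) ii)--iii), $B_i=h_i^{B.M}(B_{i-M}^{i-1},A_i,V_i)$ with $V_i$ independent of $(U^i,V^{i-1})$, hence of $(U_i,A_i,B_{i-J}^{i-1})$, so the Markov chain $U_i \leftrightarrow (A_i,B_{i-J}^{i-1}) \leftrightarrow B_i$ and the functional dependence $A_i=e_i^{B.J}(B_{i-J}^{i-1},U_i)$ already give $I^{e^{B.J}}(U_i;B_i|B_{i-J}^{i-1})=I(A_i;B_i|B_{i-J}^{i-1})$ for \emph{any} measurable strategy, injective or not; this also shows the ``only if'' in the paper's proof cannot be literally about this equality. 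Second, the obstacle you flag for the reverse inequality is real and is precisely what the paper's proof is silent on: a map $e_i^{B.J}(b_{i-J}^{i-1},\cdot)$ that is constant on a set of positive Lebesgue measure cannot be injective, so one-to-one onto strategies realize only non-atomic kernels, your Borel-isomorphism fix works exactly when the optimizers are non-atomic (as in the Gaussian applications of Sections~\ref{Con-LQG}--\ref{putl-ex}), and the class ${\cal E}_{[0,n]}^{IL-B.J}(\kappa)$ is in fact empty for the finite alphabets nominally allowed by Definition~\ref{exa_A_D}. That limitation is inherited from the theorem statement itself rather than introduced by your proof; your write-up has the merit of making it visible, whereas the paper's proof conceals it inside the unproved ``if and only if.''
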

\begin{proof} (a). This is obtained by  utilizing the information structure of the optimal channel input distribution $\{ \pi_i(da_i|b_{i-J}^{i-1})\equiv {\bf P}_{A_i| B_{i-J}^{i-1}}(a_i| b_{i-J}^{i-1}) : i=0,1, \ldots\}$, and   Lemma~\ref{lemma-gs1989}.\\
(b) This follows from (a).\\
(c)  By Theorem~~\ref{cor-ISR_C4}, Part C,   $I(A^i; B_i|B^{i-1})=I(A_i; B_i|B_{i-J}^{i-1})={\bf E}^{\pi}\Big\{\log\Big(\frac{dQ_i(\cdot| B_{i-M}^{i-1}, A_i)}
  {d\nu_i^{\pi^M}(\cdot| B_{i-J}^{i-1})}(B_i)\Big)\Big\}, i=0, \ldots, n$. By an application of Theorem 3.7.1 in Pinsker~\cite{pinsker1964} and Corollary following it,  the  following sequence of identities hold.
\begin{align}
& \mbox{For}  \hso i=0, \ldots, n, \hso I(A_i; B_i|B_{i-J}^{i-1})=I(A_i, U_i; B_i|B_{i-J}^{i-1})
=I(U_i;B_i|B_{i-J}^{i-1}) \nonumber \\
&=  {\bf E}^{e^{B.J}}\Big\{\log\Big(\frac{Q_i(dB_i| B_{i-M}^{i-1}, e_i^{B}(B_{i-J}^{i-1}, U_i))}
  {\nu_i^{e^{B.J}}(dB_i| B_{i-J}^{i-1})}\Big)\Big\} \hso \mbox{if and only if} \hso \{e_i^{B.J}(\cdot, \cdot): i=0, \ldots, n\} \in {\cal E}_{[0,n]}^{IL-B.J} (\kappa).
\end{align}
The alternative characterization of FTFI capacity is obtained by utilizing the strategies ${\cal E}_{[0,n]}^{IL-B.J} (\kappa)$.
\end{proof}

\ \

\begin{remark}(Alternative characterizations)\\
The main point to be made is that the alternative characterizations can be used to transformed the characterizations of FTFI capacity, which are extremum problems with  respect to  channel input conditional distributions  to equivalent characterizations, which are extremum problems over  deterministic functions driven by uniform RVs. The connection to  uniform RVs can be further exploited in the context of transmitting information over the channel at a rate below the per unit time limiting version of the characterizations of FTFI capacity (when it corresponds to feedback capacity). However, this direction is not pursued further    in this paper.  
\end{remark}

\section{Characterizations of FTFI Capacity and Feedback Capacity of Gaussian LCMs \& The LQG Theory}   
\label{Con-LQG}
In this section, the characterizations of FTFI capacity given in Section~\ref{ILE} are applied to Gaussian Linear  Channel Models (G-LCMs) (special cases of NCM-A, NCM-B of Definition~\ref{exa_A_D}), to obtain the following. 
\begin{description}
\item[(a)] Characterizations of FTFI capacity for Multiple Input Multiple Output (MIMO)  G-LCMs;

\item[(b)] characterizations of FTFI capacity for MIMO G-LCMs via connections to finite horizon Linear-Quadratic-Gaussian (LQG) stochastic optimal control theory, Riccati difference matrix equations,  and water filling solutions of MIMO channels;

\item[(c)] unfold a dual role of the randomized strategies, which realize optimal channel channel input processes corresponding to the characterizations of FTFI capacity, to control the channel output process and to transmit new information over the channel.  
\end{description}

The characterizations of feedback capacity and its connections to the infinite horizon LQG stochastic optimal control theory and stability theory of linear control systems is treated in Section~\ref{putl-ex}, by investigating  per unit time limiting versions of the results obtained in this section.

\subsection{Characterizations of FTFI Capacity for Gaussian Linear Channel Models A} 
\label{exa_CG-NCM-A.1}
Consider a Gaussian Linear Channel Model A (G-LCM-A) (i.e., a special case of the {NCM-A} given by (\ref{NCM-A_D})), and  
defined as follows.
\begin{align}
&B_i   = \sum_{j=0}^{i-1} C_{i,j} B_{j} +D_{i,i} \; A_i + V_{i}, \hso B_0=D_{0,0}A_0+V_0, \hso i= 1, \ldots, n,   \label{LCM-A.1} \\
&\frac{1}{n+1} \sum_{i=0}^n {\bf E} \Big\{ \langle A_i, R_{i,i} A_i \rangle + \langle B^{i-1}, Q_{i}(i-1) B^{i-1}\rangle  \Big\} \leq \kappa, \\
&C_{i,j}\in {\mb R}^{ p \times  p},\hso D_{i, i} \in {\mb R}^{q \times q},  \hso R_{i,i}\in {\mb S}_{++}^{q \times q}, \hso  Q_0(-1)=0, \: Q_{i}(i-1) \in {\mb S}_+^{i p \times i p}, \hso i=0, \ldots, n, \: j=0, \ldots, i-1 \label{LCM-A.1_1_a}
\end{align} 
where $B^{i}=(B_0, B_1, \ldots, B_i)$,  at time $i=0$, $A_0$ does not use feedback, and   the following assumption holds.

{\bf Assumption A.1.(i).} 1) Definition~\ref{exa_A_D}, Assumption A.(ii), A.(iii) hold, and 2) the noise process $\{V_i: i=0, \ldots, n\}$  is Gaussian distributed, specified by 
\begin{align}
V_i \sim N( 0, K_{V_i}), \hso i.e., \hso  \mu_{V_i} \tri  {\bf E}\big\{V_i\big\}=0,\hso K_{V_i} \tri Cov (V_i, V_i) = {\bf E} \big\{V_i V_i^T\big\}, \hso i=0,1, \ldots, n. \label{Gau_Ind}
\end{align}

The following theorem states that the optimal channel input distribution is Gaussian, and it is realized by information lossless Gaussian randomized strategies, which are expressed via the  decomposition  $A_i= g_i(B^{i-1})+ Z_i$, in which  $g_i(B^{i-1})\perp Z_i,  i=0,     \ldots, n$,  $\{g_i(\cdot): i=0, \ldots, n\}$ is a deterministic function of the feedback information process,  and $\{Z_i: i=0, \ldots, n\}$  is an orthogonal innovations process.\\

\begin{theorem}(Characterization  of FTFI capacity for G-LCM-A)\\
\label{G-LCM-A-CA}
Consider the G-LCM-A defined by (\ref{LCM-A.1})-(\ref{LCM-A.1_1_a}), and suppose  Assumption A.1.(i) holds. Let $\{(A_i^g, B_i^g):i =0, \ldots, n\}$ denote a jointly Gaussian process satisfying (\ref{Gau_Ind}). \\
 Then the following hold.\\
(a) The optimal channel input distribution $\{\pi(da_i| b^{i-1})\equiv \pi^g(da_i| b^{i-1}):i=0, \ldots, n\}$ is Gaussian and the  characterization of FTFI Feedback Capacity is given by the following expression.
\begin{align}
{C}_{A^n \rar B^n}^{FB,G-LCM-A} (\kappa) 
 \tri &  \sup_{{\cal P}_{[0,n]}^{G-LCM-A}(\kappa)  } H(B^{g,n})- H(V^n)    \label{CG-LCM_A.1_3}
\end{align}
where
\bea
{\cal P}_{[0,n]}^{G-LCM-A}(\kappa) \tri \Big\{{\pi}_i^{g}(da_i | b^{i-1}),  i=0,\ldots,n: \frac{1}{n+1} \sum_{i=0}^n  {\bf E}^{\pi^g} \Big( \langle A_i^g, R_{i,i} A_i^g \rangle + \langle B^{g,i-1}, Q_{i}(i-1) B^{g,i-1}\rangle  \Big)\leq \kappa  \Big\}.
\eea
(b) The alternative equivalent characterization of the FTFI capacity is given by the following expressions. 
\begin{align} 
&{C}_{A^n \rar B^n}^{FB,G-LCM-A} (\kappa) 
 \tri   \sup_{\big\{ \big(\Gamma_{i}(i-1), K_{Z_i}\big),  i=0,\ldots,n: \frac{1}{n+1} \sum_{i=0}^n  {\bf E} \big( \langle A_i^g, R_{i,i} A_i^g \rangle + \langle B^{g,i-1}, Q_{i}(i-1) B^{g,i-1}\rangle  \big)\leq \kappa   \big\}    } H(B^{g,n})- H(V^n),\label{EXTR_A}  \\
& H(B^{g, n})-H(V^n)=\sum_{i=0}^n H(B_i^g|B^{g,i-1})-H(V^n)= \frac{1}{2} \sum_{i=0}^n  \log \frac{ | D_{i,i} K_{Z_i} D_{i,i}^T +K_{V_i}|}{|K_{V_i}|}, \label{G-lCM-A-CR} \\
 &A_i^g= \sum_{j=0}^{i-1}\Gamma_{i,j} B_j^g + Z_i,\hst  i=0,1, \ldots, n, \label{LCM-A.1_7} \\
 & \hst \equiv \Gamma_{i}(i-1) B^{g,i-1} + Z_i, \label{LCM-A.1_7_a}   \\
 &B_i^g   =   \sum_{j=0}^{i-1}   C_{i,j} B_{j}^g + D_{i,i} A_i^g   +V_{i}   =    \sum_{j=0}^{i-1} \Big( C_{i,j} + D_{i,i} \Gamma_{i,j}  \Big) B_{j}^g + D_{i,i} Z_i + V_{i}, \label{LCM-A.1_8}\\
 & \hst \equiv \Big(C_{i}(i-1) + D_{i,i} \Gamma_{i}(i-1) \Big) B^{g,i-1}+ D_{i,i} Z_i + V_i,  \\
 &i) \hso Z_i  \:\:  \mbox{is independent of} \; (A^{g,i-1},B^{g,i-1}), \: i=0, \ldots, n, \hso ii) \hso  Z^i \; \mbox{is independent of} \hso  V^i, \hso i=0, \ldots, n, \\
 & iii) \hso  \Big\{Z_i \sim N(0, K_{Z_i}): i=0,1, \ldots, n\Big\} \hso  \mbox{is an orthogonal innovations or independent Gaussian process}.
\end{align}
\end{theorem}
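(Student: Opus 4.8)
The plan is to reduce the extremum problem to a maximum-entropy problem and then exploit joint Gaussianity to obtain the linear innovations realization. First I would invoke Theorem~\ref{thm-ISR}, (b): since the G-LCM-A of (\ref{LCM-A.1})--(\ref{LCM-A.1_1_a}) is a Class A channel with a Class A (indeed quadratic) transmission cost, the maximizing channel input distribution lies in $\overline{\cal P}_{[0,n]}^{A}$, i.e.\ $P_i(da_i|a^{i-1},b^{i-1})=\pi_i(da_i|b^{i-1})$, and the pay-off is $\sum_{i=0}^n I(A_i;B_i|B^{i-1})$. The crucial structural observation is that, conditioned on $\{B^{i-1}=b^{i-1},A_i=a_i\}$, the output $B_i$ equals $V_i$ plus the deterministic vector $\sum_{j=0}^{i-1}C_{i,j}b_j+D_{i,i}a_i$; hence $H(B_i|B^{i-1},A_i)=H(V_i)$ and, using Assumption~A.1.(i) (independent noise, so $H(V^n)=\sum_i H(V_i)$),
\begin{align}
\sum_{i=0}^n I(A_i;B_i|B^{i-1}) = \sum_{i=0}^n\big(H(B_i|B^{i-1})-H(V_i)\big)=H(B^n)-H(V^n). \nonumber
\end{align}
This identifies the objective with $H(B^n)-H(V^n)$ and reduces the problem to maximizing $H(B^n)$ over the feasible inputs.

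Next I would carry out the maximum-entropy step to prove part~(a). Both the pay-off $H(B^n)$ and the transmission cost in (\ref{LCM-A.1_1_a}) depend on the input distribution only through the second-order statistics of the jointly distributed process $\{(A_i,B_i)\}$. For any feasible input with output covariance $K_{B^n}$ one has $H(B^n)\le \frac12\log\big((2\pi e)^{(n+1)p}|K_{B^n}|\big)$ with equality iff $B^n$ is Gaussian; moreover, because the channel (\ref{LCM-A.1}) is linear, any covariance achievable by some feasible input is also achievable by a Gaussian input satisfying the same cost constraint. Hence the supremum is attained within the Gaussian class, the joint process $\{(A_i^g,B_i^g)\}$ is Gaussian, and the characterization collapses to the extremization of $H(B^{g,n})-H(V^n)$ over ${\cal P}_{[0,n]}^{G-LCM-A}(\kappa)$, which is (\ref{CG-LCM_A.1_3}).

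For part~(b) I would use joint Gaussianity together with the information structure $b^{i-1}$. Since $(A_i^g,B^{g,i-1})$ is jointly Gaussian, the conditional mean is linear and the conditional covariance constant, so $A_i^g=\Gamma_{i}(i-1)B^{g,i-1}+Z_i$ with $Z_i\tri A_i^g-{\bf E}[A_i^g|B^{g,i-1}]\sim N(0,K_{Z_i})$ and $Z_i\perp B^{g,i-1}$; substituting into (\ref{LCM-A.1}) yields the output recursion (\ref{LCM-A.1_8}). The independence structure (i)--(iii), in particular that $\{Z_i\}$ is an orthogonal innovations process independent of $V^i$, I would obtain from the randomized-strategy realization of Theorem~\ref{thm-AC-1}, where $A_i=e_i^A(B^{i-1},U_i)$ with the $U_i$ mutually independent and $U_i\perp(U^{i-1},V^{i-1})$, $V_i\perp(V^{i-1},U^i)$ (conditions CON.A.(2)); in the Gaussian case $Z_i$ is a deterministic transformation of $U_i$, so these independences transfer to $\{Z_i\}$. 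Finally, conditioned on $B^{g,i-1}$, the recursion (\ref{LCM-A.1_8}) gives $B_i^g=(\text{deterministic})+D_{i,i}Z_i+V_i$ with $Z_i\perp V_i$, so the conditional covariance is $D_{i,i}K_{Z_i}D_{i,i}^T+K_{V_i}$ and $H(B_i^g|B^{g,i-1})=\frac12\log\big((2\pi e)^p|D_{i,i}K_{Z_i}D_{i,i}^T+K_{V_i}|\big)$; summing and subtracting $H(V^n)=\sum_i\frac12\log((2\pi e)^p|K_{V_i}|)$ yields (\ref{G-lCM-A-CR}).

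I expect the main obstacle to be making the maximum-entropy step rigorous: one must verify that the set of output covariances realizable under the cost constraint by arbitrary feasible inputs coincides (or is dominated entropy-wise) with that realizable by Gaussian inputs, and that matching $K_{B^n}$ indeed preserves the quadratic cost — this is where the linearity of (\ref{LCM-A.1}) and the purely second-order nature of the cost are essential. A secondary, more technical point is establishing the full mutual independence of the innovations $\{Z_i\}$ and their independence of the noise, rather than merely the one-step orthogonality $Z_i\perp B^{g,i-1}$; I would settle this through the uniform-RV realization and the causal independence relations CON.A.(2) noted above.
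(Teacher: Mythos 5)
Your proposal is correct and follows essentially the same route as the paper's proof in Appendix~\ref{appendix_G-LCM-A-CA}: restrict via Theorem~\ref{thm-ISR} to inputs $\pi_i(da_i|b^{i-1})$, use $H(B_i|B^{i-1},A_i)=H(V_i)$ to write the pay-off as $H(B^n)-H(V^n)$, invoke the maximum-entropy property of the Gaussian distribution together with linearity of the channel to conclude Gaussian inputs are optimal, and then obtain the linear-plus-innovations realization with independence conditions inherited from CON.A.(2) of Theorem~\ref{thm-AC-1}. The only cosmetic difference is in part (b): you derive $A_i^g=\Gamma_{i}(i-1)B^{g,i-1}+Z_i$ directly from the Gaussian projection theorem, whereas the paper argues that the uniform-RV strategies $e_i^A(b^{i-1},\cdot)$ must be linear in $(b^{i-1},g_i(u_i))$ for the entropy upper bound to be attained — both yield the same decomposition and the same independence structure.
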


\begin{proof} The derivation is based on the maximum entropy property of Gaussian distribution, as in  Cover and Pombra \cite{cover-pombra1989}, with some variations to account for the difference of the Model considered, and  the decomposition (\ref{LCM-A.1_7_a}) expressed in terms of an orthogonal process $\{Z_i: i=0, \ldots, n\}$.  The details are given in Appendix~\ref{appendix_G-LCM-A-CA}.
\end{proof}

%
%
%
 
\begin{remark}(Extremum solution of the G-LCM-A)\\
\label{rem-sep_A}
(a) To establish     the connection of decomposition (\ref{LCM-A.1_7}) to the Cover and Pombra \cite{cover-pombra1989} realization of Gaussian channel input distributions in the characterization given by (\ref{cp1989}),  iterate (\ref{LCM-A.1_7}) by invoking  the corresponding channel output process (\ref{LCM-A.1_8}), to express the process $\{A_i^g: i=0, \ldots, n\}$ in terms of the channel noise process $\{V_i: i=0, \ldots, n\}$ and linear combinations of the process $\{Z_i: i=0, \ldots, n\}$, as follows.
\begin{align}
A^{g,n} = \overline{\Gamma}^n V^n + \overline{Z}^n, \hso \Big\{\overline{Z}_i: i=0, \ldots, n\Big\} \hso \mbox{Gaussian and Correlated}
\end{align}
where $\overline{\Gamma}^n$ is a lower diagonal matrix with time-varying deterministic entries, and $\overline{Z}^n$ is Gaussian processes $N(0, K_{\overline{Z}^n})$, and $V^n \perp \overline{Z}^n$. However, for such an equivalent realization, it is very difficult to optimize the corresponding characterization of FTFT capacity given by (\ref{EXTR_A}), even in the special case, $Q_i(i-1)=0, i=0, \ldots, n$, because the process $\Big\{\overline{Z}_i: i=0, \ldots, n\Big\}$ is not an orthogonal innovations process. Any past attempts to solve the Cover and Pombra \cite{cover-pombra1989}, characterization given by (\ref{cp1989}), for any $n$, that is, corresponding to the nonstationary nonergodic case, have been unsuccessful. Previous attempts are extensively elaborated in \cite{kim2010}.   \\
(b) Although, at first glance, the problem of determining the optimal matrices $\{\Gamma_{i}^*(i-1), K_{Z_i}^*\big),  i=0,\ldots,n\}$, which correspond to the extremum problem (\ref{EXTR_A}), appears difficult, even in special cases,
one possible re-formulation, 
is to  compactly representing (\ref{EXTR_A}), as follows.\\
 From (\ref{LCM-A.1_7}), (\ref{LCM-A.1_8}), it is always possible to find lower diagonal matrices  $\{ (C_{[i,i]}, \Gamma_{[i,i]}): i=0, \ldots, n\}$ and matrix $\{ D_{[i,i]}: i=0, \ldots, n\}$, such that the following hold.
\begin{align}
&A^{g,i}= \Gamma_{[i,i]}B^{g,i} + Z^i, \hso i=0, \ldots, n, \\
&B^{g,i}= C_{[i,i]}B^{g,i}+ D_{[i,i]} A^i + V^i, \hso i=0, \ldots, n.
\end{align}
From  the above  expression,  the covariance of the channel output process is given, as follows. 
\begin{align}
K_{B^{i-1}} \tri & {\bf E} \Big\{ B^{g, i-1}  \big(B^{g, i-1}\big)^T \Big\}, \hso i=0,1, \ldots, n, \\
 =& \Big(I-C_{[i-1,i-1]}-D_{[i-1,i-1]}\Gamma_{[i-1,i-1]}\Big)^{-1}D_{[i-1,i-1]}\Big( K_{Z^{i-1}} + K_{V^{i-1}}\Big) D_{[i-1,i-1]}^T \nonumber \\
& \hst \hst \hst  \Big(I-C_{[i-1,i-1]}-D_{[i-1,i-1]}\Gamma_{[i-1,i-1]}\Big)^{-1, T}, \hso spec\Big(C_{[i-1,i-1]}+D_{[i-1,i-1]}\Gamma_{[i-1,i-1]}\Big)<1. \label{COV_A}
\end{align}
The condition $spec\Big(C_{[i-1,i-1]}+D_{[i-1,i-1]}\Gamma_{[i-1,i-1]}\Big)<1, i=0, \ldots, n$ is equivalent  to the  existence of  a sequence  $\{\Gamma_{i,j}: i=0, \ldots, n, j=0, \ldots, i-1\}$, which ensures  the eigenvalues of the channel output process lie in the open unit disc in the space of complex numbers ${\mb C}$. \\
Utilizing the above representations, the average transmission cost constraint   is given by 
\begin{align}
{\cal P}_{[0,n]}^{G-LCM-A}(\kappa) \tri &\Big\{\Big(\Gamma_{i}(i-1), K_{Z_i}\Big), i=0, \ldots, n: \sum_{i=0}^n {\bf E} \Big( \langle A_i^g, R_{i,i} A_i^g\rangle + \langle B^{g,i-1}, Q_{i}(i-1) B^{g,i-1}\rangle\Big) \nonumber \\
&=\sum_{i=0}^n tr\Big(R_{i,i} \Gamma_{i}(i-1) K_{B^{i-1}} \Gamma_{i}^T(i-1) + R_{i,i} K_{Z_i}+ Q_{i}(i-1)K_{B^{i-1}} \Big) \leq \kappa \Big\}.
\end{align}

Hence, the FTFI capacity is characterized by 
\begin{align}
{C}_{A^n \rar B^n}^{FB,G-LCM-A} (\kappa)=&\max_{  \big\{\big(\Gamma_{i-1}(i), K_{Z_i}\big), i=0, \ldots, n\big\} \in {\cal P}_{[0,n]}^{G-LCM-A}(\kappa) \hso \mbox{and (\ref{COV_A}) holds}  }  \frac{1}{2} \sum_{i=0}^n\log \frac{ | D_{i,i} K_{Z_i} D_{i,i}^T+K_{V_i}|}{|K_{V_i}|} \label{LCM-A.1_10_LCM_1}
\end{align}
Extremum problem  (\ref{LCM-A.1_10_LCM_1}) is a deterministic optimization problem. However,  although  compactly represented and attractive, it is not at all easy to optimize, because the functional dependence of $\{K_{B^{i-1}}: i=0,1, \ldots, n\}$ on $\{\Gamma_{i}(i-1), K_{Z_i}: i=0, \ldots, n\}$, is  very complex. 
Hence, this re-formulation  is not pursued any further. Rather,  
extremum problem  (\ref{LCM-A.1_10_LCM_1}) is re-visited  in Section~\ref{G-LCM-A-LQG}, where closed form expressions are obtained via direct connections to   Linear Quadratic Gaussian (LQG) stochastic optimal control problems. \\

\end{remark}

\subsection{Characterizations of FTFI Capacity for Gaussian Linear Channel Models B.1}
\label{ex-LCM-B.1}
Consider the Gaussian Linear Channel Model B.1  (G-LCM-B.1) (i.e.,  a special case of NCM-B with $M=1$), and  defined by 
\begin{align}
&B_i   = C_{i,i-1} B_{i-1} +D_{i,i}  A_i + V_{i},\hso B_{-1}=b_{-1},  \hso i= 0, \ldots, n, \label{LCM-A.1_a} \\
&\frac{1}{n+1} \sum_{i=0}^n {\bf E} \Big\{ \langle A_i, R_{i,i} A_i \rangle + \langle B_{i-1}, Q_{i,i-1} B_{i-1} \rangle \Big\}\leq \kappa, \hso   R_{i,i} \in {\mb S}_{++}^{q\times q}, \hso Q_{i,i-1} \in {\mb S}_{+}^{p \times p}, \hso i=0, \ldots, n \label{LCM-A.1_aa}
\end{align} 
under  the following assumption.\\
 {\bf Assumption B.1.(i).} 1) Assumption B.(i) of Definition~\ref{exa_A_D} holds. 2) the noise   $\big\{V_i \sim N( 0, K_{V_i}): i=0,1, \ldots, n \big\}$ is independent and Gaussian distributed.

Clearly, all statements regarding the G-LCM-A, defined by (\ref{LCM-A.1}) (given in Section~\ref{exa_CG-NCM-A.1}), 
can be specialized to G-LCM-B.1.
The following statements are listed for future reference.

{\bf Characterization of the FTFI Capacity.} The characterization of the FTFI capacity  of G-LCM.B.1 is given by
\begin{align}
{C}_{A^n \rar B^n}^{FB,G-LCM-B.1} (\kappa) 
 =&  \sup_{\big\{\pi_i^g(da_i | b_{i-1}),  i=0,\ldots,n   \big\} \in  \sr{\circ}{\cal P}_{[0,n]}^{G-LCM-B.1}(\kappa)   }   \sum_{i=0}^n H(B_i^g|B_{i-1}^g) - H(V^n)    \label{LCM_B.1_3} 
\end{align}
where 
\begin{align}
&\sr{\circ}{\cal P}_{[0,n]}^{G-LCM-B.1}(\kappa)   \tri \Big\{ \pi_i^g(da_i|b_{i-1}), i=0,\ldots, n: \frac{1}{n+1} \sum_{i=0}^n  {\bf E}^{\pi^g} \big\{   \langle A_i^g, R_{i,i} A_i^g \rangle + \langle B_{i-1}^g, Q_{i,i-1} B_{i-1}^g \rangle  \big\} \leq \kappa \Big\} \\
&{\mb P}\big\{B_i^g \leq b_i | B_{i-1}^g=b_{i-1}\big\}= \int_{{\mathbb A}_i} {\mb P}\big\{V_{i}\leq b_i - C_{i,i-1} b_{i-1}- D_{i,i}  a_i  \Big\} \pi_i^g(da_i|b_{i-1}),  \hso  i=0,1, \ldots, n \label{LCM-B.1_4} 
\end{align}
that is,   $\{ \pi_i^g(da_i|b_{i-1})\equiv {\bf P}_{A_i| B_{i-1}}^g(a_i| b_{i-1}) : i=0,1, \ldots, n\}$ is   Gaussian satisfying  the average transmission cost constraint, implying $\{{\bf P}_{B_i|B_{i-1}}(b_i|b_{i-1})\equiv {\bf P}_{B_i|B_{i-1}}^g(b_i|b_{i-1}): i=0,1, \ldots, n\}$ is also  Gaussian.

{\bf Alternative Characterization of FTFI Capacity.}
The set of all  channel input conditional distribution is realized by randomized  strategies, as follows. 
\begin{align} 
 &A_i^g=e_i^{B.1}(B_{i-1}^g, Z_i) =\Gamma_{i,i-1} B_{i-1}^g + Z_i,\hst  i=0, \ldots, n,\label{LCM-B.1_7} \\
 &B_i^g   =\Big(C_{i,i-1} + D_{i,i} \Gamma_{i,i-1}  \Big) B_{i-1}^g + D_{i,i} Z_i + V_{i}, \hso B_{-1}^{g}=b_{-1},  \hso i= 0, \ldots, n,  \label{LCM-B.1_8}
 \\
 &i) \hso Z_i  \hso  \mbox{independent of}\hso  \Big(A^{g, i-1}, B^{g,i-1}\Big), \hso ii) \hso Z^i \hso \mbox{independent of} \hso V^i, \hso  \mbox{for} \hso i=0, \ldots, n, \\
 &iii) \Big\{Z_i \sim N(0, K_{Z_i}):  i=0, \ldots, n\Big\} \hso \mbox{an independent Gaussian process}.
 \end{align}
 The following are easily obtained, from the above equation.
 \begin{align}
 &{\bf \mu}_{B_i|B_{i-1}} \tri  {\bf E} \Big\{ B_i^g\Big| B_{{i-1}}^{g}\Big\} = \Big(C_{i,i-1} + D_{i,i} \Gamma_{i,i-1}  \Big) B_{i-1}^g, \hso i=0,\ldots, n, \label{PUT_C.1} \\
 &{ K}_{B_i|B_{i-1}} \tri  {\bf E} \Big\{ \Big(B_i^g -  {\bf \mu}_{B_i|B_{i-1}} \Big)  \Big(B_i^g -  {\bf \mu}_{B_i|B_{i-1}} \Big)^T  \Big| B_{i-1}^g\Big\}=D_{i,i} K_{Z_i}D_{i,i}^T+ K_{V_i}, \hso i=0, \ldots, n, \\
&  K_{B_{i}} \tri   {\bf E} \Big\{ B_{i}^g  \big(B_{i}^g\big)^T \Big\}, \hso i=0,1, \ldots, n \hso \mbox{satisfies the discrete time-varying Lyapunov equation}\\
 &K_{B_{i}}=\Big(C_{i,i-1}+D_{i,i} \Gamma_{i,i-1}\Big)  K_{B_{i-1}} \Big(C_{i,i-1}+ D_{i,i}\Gamma_{i,i-1}\Big)^T +D_{i,i} K_{Z_i} D_{i,i}^T +K_{V_i}, \hso i=0, \ldots, n, \label{MP_1_1}\\
 & K_{B_{-1}}= \mbox{Given}. \label{MP_1_a}
 \end{align}
 Consequently, the alternative characterization of the FTFI capacity is given, as follows. 
 \begin{align}
 &{C}_{A^n \rar B^n}^{FB,G-LCM-B.1} (\kappa) ={C}_{A^n \rar B^n}^{FB,IL-G-LCM-B.1} (\kappa) \nonumber \\
 & \tri   \sup_{\big\{  \big(\Gamma_{i,i-1}, K_{Z_i}\big),  i=0,\ldots,n   \big\} \in  {\cal E}_{[0,n]}^{IL-LCM-B.1}(\kappa) \hso \mbox{and (\ref{MP_1_1}), (\ref{MP_1_a}) hold}  } \sum_{i=0}^n H(B_i^{g}| B_{i-1}^{g})- H(V^n),\label{EXTR_B.1}  \\
& \sum_{i=0}^n H(B_i^{g}| B_{i-1}^{g})-H(V^n)= \frac{1}{2} \sum_{i=0}^n  \log \frac{ | D_{i,i} K_{Z_i} D_{i,i}^T +K_{V_i}|}{|K_{V_i}|}, \label{G-lCM-A-CB.1} \\
&{\cal E}_{[0,n]}^{IL-G-LCM-B.1}(\kappa) \tri \Big\{\big(\Gamma_{i,i-1}, K_{Z_i}\big),  i=0,\ldots,n: \sum_{i=0}^n {\bf E} \Big( \langle A_i^g, R_{i,i} A_i^g\rangle + \langle B_{i-1}^g, Q_{i,i-1} B_{i-1}^g \rangle  \Big) \label{MP_1} \\
&\hst \hst \hst =\sum_{i=0}^n tr\Big(R_{i,i} \Gamma_{i-1,i} K_{B_{i-1}} \Gamma_{i,i-1}^T + R_{i,i} K_{Z_i}  + Q_{i,i-1} K_{B_{i-1}} \Big) \leq \kappa
 \Big\}. \nonumber 
\end{align}
This is a classical deterministic optimization problem of a dynamical system, described by the covariance of the channel output process $\{K_{B_i}:i =0, \ldots, n\}$, and  satisfying the discrete time-varying Lyapunov type difference equation (\ref{MP_1_1}), (\ref{MP_1_a}), where  $\big\{K_{B_i}: i=0, \ldots, n\big\}$  is the controlled object, while the control object is  $\big\{(\Gamma_{i,i-1}, K_{Z_i}): i=0, \ldots, n\big\}$, and it  is  chosen to maximize the pay-off. Discrete time-varying Lyapunov type difference equations are extensively utilized  in stability analysis of time-varying linear controlled systems. \\
The next section elaborates further on  the  direct connection between the characterization of FTFI capacity and Discrete-time Lyapunov matrix equations  its per unit time limiting version, and  linear stochastic controlled systems.

\subsubsection{Relations of FTFI capacity and Feedback Capacity of G-LCM-B.1 \& Linear Stochastic Controlled Systems}.\\
\label{rem_G-LCM-B.1-LSCS}
(a) The recursive equation (\ref{MP_1_1}) satisfied by the  covariance $\{K_{B_i}: i=0, \ldots, n\}$  of the output process $\{B_i^g: i=0, \ldots, n\}$ is a Lyapunov type matrix difference equation.  It is possible to apply calculus of variations to determine the pair $\{(\Gamma_{i,i-1}, K_{Z_i}) \in {\mathbb R}^{q \times p}\times {\mb S}_{+}^{q \times q}: i=0, \ldots, n\}$, which maximizes (\ref{EXTR_B.1}). However, since this is done in a subsequent section via dynamic programming, this direction is not pursued any further. 

For the remaining discussion,  the  properties of time-invariant Lyapunov  difference and algebraic equations,   given in Appendix~\ref{appendix_B}, Theorem~\ref{lemma_vanschuppen} are utilized to  analyze the FTFI capacity and feedback capacity of the G-LCM-B.1.\\
(b) Suppose the coefficients of the G-LCM-B.1 defined by (\ref{LCM-A.1_a}), (\ref{LCM-A.1_aa}) are time-invariant, and the parameters of the optimal channel input distributions induced by (\ref{LCM-B.1_7}), are restricted to  time-invariant, i.e., 
\begin{align}
&C_{i, i-1}=C, \; D_{i,i}=D, \; K_{V_i}=K_V,\; R_{i,i}=R,\hso i=0, \ldots, n, \hso Q_{i,i-1}=Q, \; i=0, n-1,\;  Q_{n,n-1}=M,\\
&(\Gamma_{i,i-1}, K_{Z_i})=(\Gamma, K_{Z}), \hso i=0, \ldots, n.
\end{align}
Recursive substitution gives 
\begin{align}
K_{B_{i}}= \Big(\big[C+D \Gamma\big]^i \Big)  K_{B_{0}} \Big(\big[C+ D\Gamma\big]^i\Big)^T +\sum_{j=0}^{i-1} \Big(\big[C+D \Gamma\big]^j \Big)\Big(  D K_{Z} D^T +K_{V}\Big)\Big(\big[C+ D\Gamma\big]^j\Big)^T, \hso i=1, \ldots, n. \label{rec_LA}
 \end{align}

Suppose  the set of all eigenvalues of $(C+D\Gamma )$ lie in the open unit disc of the space of complex numbers ${\mb C}$, i.e., $spec\big(C+D\Gamma\big) \subset {\mb D}_o$.
Then, irrespectively of the initial covariance $K_{B_{0}}$, the limit,  $\lim_{n \longrightarrow \infty} K_{B_i}=K_B$ exists and satisfies the Lyapunov algebraic matrix equation
\begin{align}
K_{B}=\Big(C+D \Gamma\Big)  K_{B} \Big(C+D\Gamma\Big)^T +D K_{Z} D^T +K_{V} \hso \mbox{and $K_{B}\succeq 0$ is a unique solution} \label{MP_1_1_TI}
\end{align}
In addition,  if $K_{B_{0}}=K_B$, then the solution of the Lyapunov matrix difference equation (\ref{MP_1_1}) with time-invariant coefficients  is time-invariant.\\
The per unit time limiting version of the characterization  of the FTFI capacity is given by the following expression.
 \begin{align}
 &{C}_{A^\infty \rar B^\infty}^{FB,G-LCM-B.1} (\kappa) 
 \tri   \sup_{\big\{  \big(\Gamma, K_{Z}\big) \in {\mb R}^{q\times p} \times S_+^{q \times q} \big\} \in  {\cal E}_{[0,\infty]}^{IL-G-LCM-B.1}(\kappa), \hso \mbox{ (\ref{MP_1_1_TI}) holds}  } \frac{1}{2}\log \frac{ | D K_{Z} D^T +K_{V}|}{|K_{V}|} ,\label{EXTR_B.1_SE} \\
&{\cal E}_{[0,\infty]}^{IL-G-LCM-B.1}(\kappa) \tri \Big\{\big(\Gamma, K_{Z}\big)\in {\mb R}^{q\times p} \times {\mb S}_+^{q \times q}:  tr\Big(R \Gamma K_{B} \Gamma^T + R K_{Z}  + Q K_{B} \Big) \leq \kappa
 \Big\}, \hso spec\big(C+D\Gamma\big) \subset {\mb D}_o. \label{EXTR_B.1_SE_a}
\end{align}
If  $spec\big(C+D\Gamma\big) \subset {\mb D}_o$, then the joint distribution of the joint process $\{A_i^g, B_i^g): i=0, \ldots, \}$ and its marginals are asymptotically ergodic, and hence  (\ref{EXTR_B.1_SE}) is the feedback capacity.  Appendix~\ref{appendix_B}, Theorem~\ref{lemma_vanschuppen}, gives sufficient conditions, which imply $spec\big(C+D\Gamma\big) \subset {\mb D}_o$,  and existence of per unit time limiting version of the characterization  of the FTFI capacity, and  existence of  unique invariant distribution of the joint process $\{(A_i^g, B_i^g): i=0, \ldots, \}$.  
 The complete analysis is done in Section~\ref{putl-ex} via dynamic programming. \\
 Next, the scalar channel is analyzed  to provide an alternative derivation of  statements described  by (\ref{dual_CAP_CI})-(\ref{Exact_Sol_2}), derived through the algebraic Riccati equation (\ref{DP-UMCO_C12_aa_intro_alg}).\\
(i) Scalar Channel. Suppose $p=q=1$ and $R=1, Q=0$. The explicit solution of feedback capacity (\ref{EXTR_B.1_SE}) is  obtained below.  
From (\ref{MP_1_1_TI}), then 
\bea
K_B= \frac{D^2 K_Z+K_V}{1- \big(C+ D\Gamma\big)} \hso \mbox{if} \hso |C+D\Gamma|<1. \label{COV_1}
\eea
The constraint optimization problem (\ref{EXTR_B.1_SE}) is convex,  and by  substituting (\ref{COV_1}) into (\ref{EXTR_B.1_SE_a}),  it is equivalent to the following unconstraint optimization (see \cite{dluenberger1969}). 
\begin{align}
J(K_Z^*,s^*) \tri \inf_{s \geq 0} \sup_{\Gamma \in {\mb R}, K_Z \geq 0}\Big\{  \frac{1}{2}\log \frac{D^2 K_{Z} +K_{V}}{K_{V}}-s \Big(\Gamma^2 \frac{D^2 K_Z+K_V}{1- \big(C+ D\Gamma\big)}+ K_Z -\kappa\Big)\Big\}, \hso spec\big(C+D\Gamma\big) \subset {\mb D}_o.
\end{align}
where $s \geq 0$ is the Lagrange multiplier associated with the constraint. The above problem gives the following optimal solution.
\begin{align}
&\mbox{If} \hso |C|< 1  \hso \mbox{then} \hso \Gamma^*=0, \hso    K_Z^*=\kappa, \hst \kappa \in [0, \infty).  \\
&\mbox{If}    \hso  |C| > 1 \hso \mbox{then} \hso \Gamma^*=-\frac{C^2-1}{CD}, \hso   K_Z^*= \frac{D^2\kappa +K_V(1-C^2)}{C^2D^2}\geq 0, \hso \kappa \in [\kappa_{min}, \infty), \\
& s^*=\frac{1}{2}\frac{D^2}{D^2 \kappa + K_V}\in [s_{min}^*, \infty), \hso   \kappa_{min}\tri \frac{(C^2-1)K_V}{D^2}, \hso s_{min}^* \tri \frac{1}{2}\frac{D^2}{C^2 K_V}.
\end{align}
The feedback capacity is obtained by substituting the optimal values $(\Gamma^*, K_Z^*)$ into 
(\ref{EXTR_B.1_SE}) to deduce the following expression.
\begin{align}
{C}_{A^\infty \rar B^\infty}^{FB, G-LCM-B.1}  (\kappa)= \left\{ \begin{array}{llll}  \frac{1}{2} \ln \frac{  D^2 \; \kappa +K_{V}}{K_{V}} & \mbox{if}  & |C|<1, & i.e., \:  K_Z^*=\kappa  \\
 \frac{1}{2} \ln \frac{ D^2 K_Z^* +K_{V}}{K_{V}} & \mbox{if}  & |C|>1, &  \kappa \in [\kappa_{min}, \infty) \\
 0 & \mbox{if}  & |C|>1, &  \kappa \in [0, \kappa_{min}]. 
\end{array} \right.  \label{dual_CAP_Alt}
\end{align}
This is precisely the feedback capacity obtained in (\ref{dual_CAP}), using the solutions of the Riccati equation. \\
The following universal bould on feedback capacity is obtained, by evaluating the middle identity in (\ref{dual_CAP_Alt}) at $K_{Z}^*\equiv K_Z^*(\kappa)\Big|_{\kappa =\kappa_{min}}$. 
\begin{align}
\mbox{If} \hst |C|>1 \hst \mbox{then} \hst {C}_{A^\infty \rar B^\infty}^{FB, G-LCM-B.1}  (\kappa) \geq \ln |C|, \hst \forall \kappa \in [\kappa_{min}, \infty).
\end{align}
The above solution corresponds, precisely, to   statements described  by (\ref{dual_CAP_CI})-(\ref{Exact_Sol_2}) and  obtained via the solutions of the algebraic  Riccati equation (\ref{DP-UMCO_C12_aa_intro_alg}). \\
Moreover, the above solution illustrates the direct connection to linear stochastic systems and stability theory  via Lyapunov equations.  The general MIMO G-LCM-B.1 is addressed  in Section~\ref{opt_ex-LCM-B.1}, by invoking dynamic programming. 

\subsection{Characterization of FTFI Capacity of G-LCM-B.1 and The LQG Theory of Directed Information}
\label{opt_ex-LCM-B.1}
 The objective of this section is  to completely solve  the extremum problem corresponding to the characterization of FTFI capacity of the G-LCM-B.1, and to gain insight on how to solve more general versions, such as, the G-LCM-B (i.e., when the channel distribution depends on arbitrary memory), and the G-LCM-A. \\This is done by re-formulating such extremum problems, using   Linear Quadratic Gaussian (LQG) stochastic optimal control theory,  with randomized strategies (instead of deterministic as in the standard  LQG theory \cite{vanschuppen2010,kumar-varayia1986}). Via this re-formulation,  the optimal  deterministic part of the randomized strategy, $\{\Gamma_{i,i-1}^*: i=0, \ldots, n\}$, is found explicitly, in terms of  solutions of Riccati matrix difference equations, while the random part 
 $\{K_{Z_i}^*: i=0, \ldots, n\}$, is determined from a sequential water filling problem, similar to that of MIMO memoryless channels \cite{teletar1999}.   \\
The subsequent methodology is based the following simple observations. 

\begin{description}
\item[(i)] Define the randomized stategy of  the  equivalent characterization of  FTFI capacity given by (\ref{EXTR_B.1})-(\ref{MP_1}), as follows.
\bea
A_i^g \tri  U_{i}^g + Z_i, \hst  U_{i}^g \tri  g_{i}^{B.1}(B_{i-1}^g)\equiv \Gamma_{i,i-1}B_{i-1}^g,\hso i=0, \ldots, n \label{DEC_IN}
\eea
where $\{U_i^g: i=0, \ldots, n\}$  is the deterministic part of the strategy and $\{Z_i: i=0, \dots, n\}$ its random part. Then   $\{U_i^g: i=0, \ldots, n\}$ is the control process, chosen to control the channel output process $\{B_i^g: i=0, \ldots, n\}$, and $\{Z_i: i=0, \ldots, n\}$ is the innovations process, chosen to transmit new information over the channel. 
\item[(ii)] Apply  
 dynamic programming to determine recursively the optimal deterministic strategy $\{g_{i}^{B.1,*}(\cdot): i=0, \ldots, n\}$ and the optimal randomized process $\{Z_i: i=0, \ldots, n \}$ (i.e., $\{K_{Z_i}^*: i=0, \ldots, n\}$), from  which the optimal  solution $\big\{(\Gamma_{i,i-1}^*, K_{Z_i}^*): i=0, \ldots, n\big\}$,  can be constructed. 
 
\end{description} 
 
Indeed, this methodology  unfolds  all  consequences and the role of the control process $\{U_i^g: i=0, \ldots, n\}$  to affect the controlled process $\{B_i^g: i=0, \ldots, \}$, for the extremum problem of FTFI capacity characterization, and its per unit time limiting version, the feedback capacity.   


The next theorem establishes the direct connection between LQG stochastic optimal control theory and the characterization of  FTFI capacity, for MIMO G-LCM-B.1.\\

\begin{theorem}(Optimal strategies of FTFI capacity of G-LCM-B.1)\\
\label{DP-UMCO}
Consider the G-LCM-B.1 defined by (\ref{LCM-A.1_a}), (\ref{LCM-A.1_aa}), under Assumptions B.1.(i).\\
(a)
 Define 
\bea
A_i^g \tri  U_{i}^g + Z_i, \hso U_{i}^g = g_{i}^{B.1}(B_{i-1}^g)\equiv \Gamma_{i,i-1}B_{i-1}^g, \hso i=0, \ldots, n \label{DP_UMCO_C1}
\eea
where $\{U_i^g: i=0, \ldots, n\}$ is the deterministic part of the randomized strategy (control part) and $\{Z_i: i=0, \ldots, n\}$ is the random part.  Then 
\bea
B_i^g= C_{i,i-1}B_{i-1}^g+ D_{i,i} U_{i}^g + D_{i,i}Z_i + V_i, \hso i=0, \ldots, n, \hso B_{i-1}^g=b_{-1} \label{DP_UMCO_C2}
\eea
and the equivalent characterization of the FTFI capacity is given by
\begin{align}
{C}_{A^n \rar B^n}^{FB,G-LCM-B.1} (\kappa) 
 ={C}_{A^n \rar B^n}^{FB,IL-G-LCM-B.1} (\kappa) =  \sup_{\big\{(g_i^{B.1}(\cdot), K_{Z_i}),   i=0,\ldots,n   \big\} \in  {\cal E}_{[0,n]}^{B.1}(\kappa)   }  \sum_{i=0}^n H(B_i^g|B_{i-1}^g) - H(V^n)  \label{LCM_B.1_3_C30} 
\end{align}
where 
\begin{align}
&\sum_{i=0}^n H(B_i^g|B_{i-1}^g) - H(V^n) =(\ref{G-lCM-A-CB.1}), \\ 
&{\cal E}_{[0,n]}^{B.1}(\kappa)   \tri \Big\{g_i^{B.1}: {\mathbb R}^p \longmapsto {\mathbb R}^q,\hso  u_i=g_i^{B.1}(b_{i-1}),  \hso  K_{Z_i} \in {\mb S}_{+}^{q \times q}, \hso i=0,\ldots, n:\nonumber \\
&\hst \hst  \frac{1}{n+1}{\bf E}^{g^{B.1}} \Big( \sum_{i=0}^n     \Big[\langle A_i^g, R_{i,i} A_i^g \rangle + \langle B_{i-1}^g, Q_{i,i-1} B_{i-1}^g \rangle \Big] \Big) \leq \kappa \Big\}. \label{LCM_B.1_3_C40}
\end{align}
For the rest of the statements assume  there exist an $\big\{(B_i^g, g_i^{B.1}(\cdot), Z_i): i=0, \ldots, \big\}$ in the Hilbert space of square summable sequences,  such that the feasible set in (\ref{LCM_B.1_3_C40}) has an interior point (convexity of pay-off functional and constraint set can be shown).\\
(b)  The cost-to-go  $C_i^{B.1}: {\mb R}^p \longmapsto {\mb R}$ (corresponding to  (\ref{LCM_B.1_3_C30})),  from time ``$i$'' to the terminal time   ``$n$'' given the value of the output $B_{i-1}^g=b_{i-1}$ is defined   by 
\begin{align}
C_i^{B.1}(b_{i-1})\tri &  \sup_{ \Big\{ (U_j^g,K_{Z_j})\in {\mathbb R}^q\times {\mb S}_{+}^{q\times q}, U_j^g=g_j^{B.1}(B_j^g),\:  j=i, \ldots, n   \Big\} }  \Bigg\{ \frac{1}{2} \sum_{j=i}^n\log \frac{ | D_{j,j} K_{Z_j} D_{j,j}^T+K_{V_j}|}{|K_{V_j}|} -\sum_{j=i}^n  tr\Big(s R_{j,j} K_{Z_j}\Big)  + s(n+1)\kappa        \nonumber \\
&- s{\bf E}^{g^{B.1}} \Big\{ \sum_{j=i}^n  \left[\langle U_j^g, R_{j,j} U_j^g\rangle + \langle B_{j-1}^g, Q_{j,j-1} B_{j-1}^g \rangle\right]     \Big| B_{i-1}^g=b_{i-1}\Big\} \Bigg\} \label{LCM_B.1_3_C50}
\end{align}
where $s \geq 0$ is the Lagrange multiplier associated with  the average transmission cost constraint (\ref{LCM_B.1_3_C40}). \\
(c) The dynamic programming recursions are given by the following equations.
\begin{align}
&C_n^{B.1}(b_{n-1})= \sup_{  (u_n,K_{Z_n})\in {\mathbb R}^q\times {\mb S}_+^{q\times q}   }   \Big\{ \frac{1}{2} \log \frac{ | D_{n,n} K_{Z_n} D_{n,n}^T+K_{V_n}|}{|K_{V_n}|} - tr\Big(s R_{n,n} K_{Z_n}\Big) + s(n+1)\kappa \nonumber \\
&- s \Big[ \langle u_n, R_{n,n} u_n\rangle + \langle b_{n-1}, Q_{n,n-1} b_{n-1} \rangle  \Big]   \Big\}, \label{DP-UMCO_C10} \\
&C_i^{B.1}(b_{i-1})=  \sup_{  (u_i,K_{Z_i})\in {\mathbb R}^q\times {\mb S}_+^{q\times q}  }   \Bigg\{ \frac{1}{2} \log \frac{ | D_{i,i} K_{Z_i} D_{i,i}^T+K_{V_i}|}{|K_{V_i}|}- tr\Big(s R_{i,i} K_{Z_i}\Big)  \nonumber \\
&- s \Big[ \langle u_i, R_{i,i} u_i\rangle + \langle b_{i-1}, Q_{i,i-1} b_{i-1} \rangle  \Big]   + {\bf E}^{g^{B.1}}\Big\{  C_{i+1}^{B.1}(B_{i}^g)       \Big| B_{i-1}^g=b_{i-1}\Big\}  \Bigg\}, \hso i=0, \ldots, n-1.
\end{align}
(d) The optimal deterministic part of the randomized strategy,  $\{g_i^{B.1,*}(\cdot): i=0, \ldots, n\}$,  and the corresponding covariance $K_{B_{i}} \tri   {\bf E} \big\{ B_{i}^g  \big(B_{i}^g\big)^T \big\},  i=0,1, \ldots, n$, are   given by the following equations.
\begin{align}
&g_i^{B.1,*}: {\mathbb R}^p \longmapsto {\mathbb R}^q, \hso i=0, \ldots, n, \hso \Gamma^*: \{0,1,\ldots, n\} \longmapsto {\mathbb R}^{q\times p}, \hso P : \{0,1,\ldots, n\} \longmapsto {\mb S}_{+}^{p\times p}, \\
&g_i^{B.1,*}(b_{i-1})= F^*(i)\; b_{i-1}\equiv \Gamma_{i,i-1}^*b_{i-1}, \hso i=0,\ldots, n, \\
& F^*(n)=\Gamma_{n,n-1}^*=0, \hso F^*(i)=-H_{22}^{-1}(i) H_{12}^T(i) \\
& H_{11}(i)= C_{i,i-1}^T P(i+1) C_{i,i-1}+s Q_{i,i-1}, \hso H_{12}(i)= C_{i,i-1}^T P(i+1) D_{i,i}, \hso H_{22}(i)= D_{i,i}^T P(i+1) D_{i,i}+s R_{i,i}, \\
&P(i)=H_{11}(i) - H_{12}(i) H_{22}^{-1}(i)H_{12}^T(i), \hso i=0, \ldots, n-1, \label{DP-UMCO_C12} \\
&P(i) =C_{i,i-1}^T P(i+1) C_{i,i-1}+s Q_{i,i-1}-C_{i,i-1}^T P(i+1) D_{i,i}\Big(D_{i,i}^T P(i+1) D_{i,i}+s R_{i,i}\Big)^{-1} \Big(C_{i,i-1}^T P(i+1) D_{i,i}\Big)^T \label{DP-UMCO_C12_a}\\
&P(n)=s Q_{n,n-1}, \label{DP-UMCO_C12_aa}\\
 &K_{B_i}=\Big(C_{i,i-1} + D_{i,i} \Gamma_{i,i-1}^*\Big)  K_{B_{i-1}} \Big(C_{i,i-1}+D_{i,i}\Gamma_{i,i-1}^*\Big)^T +D_{i,i} K_{Z_i} D_{i,i}^T +K_{V_i}, \hso i=0, \ldots, n, \label{MP_1_New_1}\\
 & K_{B_{-1}}= \mbox{Given}. \label{MP_1_New_2}
\end{align}
(e) The solution of the dynamic programming equations is given by the following equations.
\begin{align}
C_i^{B.1}(b_{i-1})= -\langle b_{i-1}, P(i) b_{i-1}\rangle +   r(i) , \hso i=0, \ldots, n
\end{align}
where $\{P(i):=0, \ldots, n\}$ satisfies the backward recursive  Riccati equation (\ref{DP-UMCO_C12_a}), (\ref{DP-UMCO_C12_aa}), the process $\{r(i): i=0, \ldots,n \}$ satisfies the backward recursion
\begin{align}
r(i)= & r(i+1) + \sup_{ K_{Z_i}\in {\mb  S}_+^{q\times q}}\Big\{ \frac{1}{2} \log \frac{ | D_{i,i} K_{Z_i} D_{i,i}^T+K_{V_i}|}{|K_{V_i}|} - tr\Big(s\; R_{i,i} K_{Z_i}\Big) \nonumber \\
&- tr \Big(P(i+1) \Big[D_{i,i} K_{Z_i}D_{i,i}^T+ K_{V_i}\Big]\Big)\Big\}, \hso i=0, \ldots, n-1,  \label{DP-UMCO_C13}  \\
r(n)=&\sup_{ K_{Z_n}\in  {\mb S}_+^{q\times q}} \Big\{\frac{1}{2} \log \frac{ | D_{n,n} K_{Z_n} D_{n,n}^T+K_{V_n}|}{|K_{V_n}|}+s (n+1)\kappa- tr\Big( s R_{n,n} K_{Z_n}\Big)\Big\} \label{DP-UMCO_C20}
\end{align}
and moreover the optimal deterministic part of the randomized strategy is given by
\begin{align}
&g_i^{B.1,*}(b_{i-1})= -\Big(D_{i,i}^T P(i+1) D_{i,i} + s R_{i,i}\Big)^{-1} D_{i,i}^T P(i+1) C_{i,i-1} \: b_{i-1}\equiv \Gamma_{i,i-1}^* b_{i-1}, \hso i=0, \ldots, n-1, \label{str_1}\\
&g_n^{B.1,*}(b_{n-1})=0. \label{str_2}
\end{align}
(f) The optimal covariance (the random part of the randomized strategy) $\{K_{Z_i}^*:i=0, \ldots, n\}$ and $s^* \geq 0$ are found from the problem
\begin{align}
\sup_{s \geq 0} \big\{-\langle b_{-1}, P(0) b_{-1}\rangle +   r(0)\Big\}\hso \mbox{subject to (\ref{DP-UMCO_C13}), (\ref{DP-UMCO_C20}), (\ref{DP-UMCO_C12_a}), (\ref{DP-UMCO_C12_aa}). }
\end{align}

(g) The characterization of FTFI capacity (for any $s\geq 0$ corresponding to $\kappa$) is  given by 
\begin{align}
C_{A^n\rar B^n}^{FB, G-LCM-B.1}(\kappa) = -\int_{{\mathbb R}^p} \langle b_{-1}, P(0) b_{-1}\rangle {\bf P}_{B_{-1}}(db_{-1}) +  r(0).
\end{align}

\end{theorem}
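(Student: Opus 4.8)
The plan is to prove (g) as the terminal evaluation of the dynamic programming solution established in parts (b)--(f), so the real content is the backward induction verifying the quadratic form of the cost-to-go; once that is in place, (g) follows by evaluating at time $i=0$ and averaging over the initial state. First I would pass to the Lagrangian formulation: by the convexity of the pay-off functional (\ref{LCM_B.1_3_C30}) and of the constraint set (\ref{LCM_B.1_3_C40}), together with the interior-point (Slater) hypothesis assumed in the theorem statement, Lagrange duality converts the constrained maximization into $\inf_{s\geq 0}$ of an unconstrained maximization carrying the reward $s(n+1)\kappa$ and the penalty $-s\sum_i {\bf E}(\langle A_i^g, R_{i,i}A_i^g\rangle + \langle B_{i-1}^g, Q_{i,i-1}B_{i-1}^g\rangle)$. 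The decomposition (\ref{DP_UMCO_C1}), $A_i^g=U_i^g+Z_i$ with $U_i^g=g_i^{B.1}(B_{i-1}^g)\perp Z_i$, splits this penalty additively, since ${\bf E}\langle A_i^g,R_{i,i}A_i^g\rangle = {\bf E}\langle U_i^g,R_{i,i}U_i^g\rangle + tr(R_{i,i}K_{Z_i})$, which is exactly why the cost-to-go (\ref{LCM_B.1_3_C50}) carries $U_j^g$ and $K_{Z_j}$ as separate arguments.

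Next I would fix $s$ and observe that with state $B_{i-1}^g$ the controlled recursion (\ref{DP_UMCO_C2}) is Markov (the increment depends only on the current state, the current controls $(U_i^g,K_{Z_i})$, and the independent noises $Z_i,V_i$), so the dynamic programming principle yields the recursions of part (c), namely (\ref{DP-UMCO_C10}) and its stage-$i$ analogue. The heart of the argument is the backward induction on the ansatz $C_i^{B.1}(b_{i-1})=-\langle b_{i-1},P(i)b_{i-1}\rangle + r(i)$. At the terminal stage, $-s\langle u_n,R_{n,n}u_n\rangle$ with $R_{n,n}\succ 0$ forces $u_n^*=0$ (giving $g_n^{B.1,*}=0$ and $P(n)=sQ_{n,n-1}$), and the residual $K_{Z_n}$-maximization defines $r(n)$ as in (\ref{DP-UMCO_C20}). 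For the inductive step I would insert the hypothesis into the recursion and compute ${\bf E}\{C_{i+1}^{B.1}(B_i^g)\mid B_{i-1}^g=b_{i-1}\}$; since $B_i^g$ is conditionally Gaussian with mean $C_{i,i-1}b_{i-1}+D_{i,i}u_i$ and covariance $D_{i,i}K_{Z_i}D_{i,i}^T+K_{V_i}$, this conditional expectation splits into a quadratic form in $(b_{i-1},u_i)$ plus $-tr(P(i+1)[D_{i,i}K_{Z_i}D_{i,i}^T+K_{V_i}])$. Crucially the objective then decouples: the $(u_i,b_{i-1})$-part is the quadratic $-\langle b_{i-1},H_{11}b_{i-1}\rangle -2\langle b_{i-1},H_{12}u_i\rangle -\langle u_i,H_{22}u_i\rangle$, whose maximizer is $u_i^*=-H_{22}^{-1}H_{12}^Tb_{i-1}$, yielding the Riccati update $P(i)=H_{11}-H_{12}H_{22}^{-1}H_{12}^T$ of (\ref{DP-UMCO_C12_a}); the $K_{Z_i}$-part is precisely the water-filling increment (\ref{DP-UMCO_C13}) defining $r(i)-r(i+1)$. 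This additive decoupling is the separation principle, and it simultaneously verifies (d), (e), (f).

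Finally, to obtain (g) I would evaluate the completed induction at $i=0$, giving $C_0^{B.1}(b_{-1})=-\langle b_{-1},P(0)b_{-1}\rangle + r(0)$, and then use that for fixed $s$ the total Lagrangian value equals ${\bf E}_{B_{-1}}\{C_0^{B.1}(B_{-1})\}$ --- valid because the strategies $\{g_i\}$ are adapted to the realized state process launched from $B_{-1}=b_{-1}$, so the supremum of the expectation coincides with the expectation of the conditional suprema. Integrating against ${\bf P}_{B_{-1}}$ then gives $-\int_{{\mathbb R}^p}\langle b_{-1},P(0)b_{-1}\rangle\,{\bf P}_{B_{-1}}(db_{-1}) + r(0)$, and with $s=s^*$ the multiplier that activates the constraint (part (f)), this equals $C_{A^n\rar B^n}^{FB,G-LCM-B.1}(\kappa)$.

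The main obstacle will be the well-posedness and sign control of the quadratic maximization inside the induction: one must guarantee $H_{22}(i)=D_{i,i}^TP(i+1)D_{i,i}+sR_{i,i}\succ 0$ so that $u_i^*$ is a genuine maximizer and $P(i)$ is well-defined, and maintain $P(i)\succeq 0$ along the recursion so that the quadratic stays concave in $u_i$ and the value function keeps the claimed form. For $s>0$ this propagates backward from $P(n)=sQ_{n,n-1}\succeq 0$ using $R_{i,i}\succ 0$ and $D_{i,i}^TP(i+1)D_{i,i}\succeq 0$; the degenerate case $s=0$ (inactive constraint) must be treated separately. A secondary technical point is rigorously justifying the duality interchange and the adapted-control expectation identity in the general Borel-space setting, which I would handle via the convexity and Slater hypotheses already invoked and the measurable-selection structure underlying the dynamic programming recursion.
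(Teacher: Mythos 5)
Your proposal follows essentially the same route as the paper's own proof: embed the transmission cost via a Lagrange multiplier $s$ (justified by convexity and the interior-point hypothesis), exploit the Markov structure of $B_{i-1}^g$ to write the dynamic programming recursions, and verify by backward induction the quadratic ansatz $C_i^{B.1}(b_{i-1})=-\langle b_{i-1},P(i)b_{i-1}\rangle+r(i)$, where completing the square in $u_i$ yields the Riccati update and the optimal gain while the $K_{Z_i}$-terms decouple into the water-filling recursion for $r(i)$, with (g) obtained by evaluating at $i=0$ and integrating over ${\bf P}_{B_{-1}}$. Your flagged obstacle — that $H_{22}(i)\succ 0$ and $P(i)\succeq 0$ must propagate backward, which holds for $s>0$ since $R_{i,i}\succ 0$ and $P(n)=sQ_{n,n-1}\succeq 0$ — is exactly the condition the paper's induction also invokes.
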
 
\begin{proof} See Appendix~\ref{appendix_C}.
\end{proof}
 
The derivation the closed form expressions given in  Theorem~\ref{DP-UMCO}, for the  G-LCM.B.1  is attributed to the decomposition of the randomized information lossless strategies (\ref{DEC_IN}), where the innovations process is an orthogonal process, and the separation principle,  established via dynamic programming. \\
It appears these two features are vital and should be  incorporated in other extremum problems of feedback capacity, such as,  the Cover and Pombra \cite{cover-pombra1989}  characterization of FTFI capacity  given by  (\ref{c-p1989}) or any of its variants \cite{kim2010}. Specifically, the orthogonality of $\{Z_i: i=0, \ldots, n\}$ is missing in the characterization obtained in Cover and Pombra \cite{cover-pombra1989} (although the authors  continue to call this process an innovations process). These points are further ellaborated below.\\
 
 \begin{remark}(Relation to Cover and Pombra  \cite{cover-pombra1989})\\
\label{kim-2010}
As pointed out in Remark~\ref{rem-sep_A}, it is difficult to obtain closed form solutions  to the extremum problem of the Cover and Pombra \cite{cover-pombra1989} scalar AGN channel, without re-visiting the derivation to obtain a realization of optimal channel input distribution having the specific decomposition (\ref{DEC_IN}). In fact, the only known explicit solution to the characterization of Cover and Pombra \cite{cover-pombra1989} scalar AGN channel, is the one obtained by Kim  in \cite{kim2010}, under the assumption  of stationary ergodicity, when   the noise is stationary ergodic and first-order Markov. The main tools applied  in \cite{kim2010} are Power Spectral densities and their relation to  scalar Riccati algebraic equations (for scalar-valued channel input and output processes). It appears very difficult   to extend the main theorems found in \cite{kim2010} to non-stationary, multidimensional processes, because the author's starting point is  the characterization derived by   Cover and Pombra \cite{cover-pombra1989}, and  there is no direct connection to LQG stochastic optimal control theory. Moreover, as illustrated in Section~\ref{rem_G-LCM-B.1-LSCS}, there are are various regimes for feedback capacity, and whether feedback increases capacity, depends on the \'a priori assumptions imposed on the channel. This point  should be accounted for when  analyzing feedback channels.
\end{remark}

 \ \

\begin{remark}(Connections to LQG stochastic optimal control theory)
\label{rem-LQG}\\
(a) Theorem~\ref{DP-UMCO} illustrates the dual role of the randomized strategies (\ref{DP_UMCO_C1}) in extremum problems of directed information. Specifically,  the  optimal deterministic part (\ref{str_1}), (\ref{str_2})  controls the channel output process, precisely as  in LQG stochastic optimal control theory (if $s=1$) \cite{kumar-varayia1986}. However,  its optimal random part $\{Z_i: i=0, \ldots, n\}$  found from (\ref{DP-UMCO_C13}), (\ref{DP-UMCO_C20}),  ensures an optimal innovations process with covariance $\{K_{Z_i}^*: i=0, \ldots, n\}$ is transmitted over the channel,  to achieve the characterization of FTFI capacity, and to meet the average transmission  cost constraint.\\
 Note that from (\ref{DP_UMCO_C1})-(\ref{LCM_B.1_3_C30})  it follows directly that 
\bea
C_{A^n\rar B^n}^{FB, G-LCM-B.1}(\kappa)=0 \hst \mbox{if} \hso K_{Z_i}^*=0: i=0, \ldots, n.
\eea
Hence, the FTFI capacity is zero and consequently, its per unit time limit the feedback capacity is zero, although the output process can be stabilized (under appropriate conditions).\\
This re-confirms and strengthens the following well-known fact of LQG stochastic optimal control or decision theory. Among all all non-Markov randomized policies $\pi_{[0,n]}^{RS} \tri \Big\{{\bf P}_{A_i|A^{i-1},B^{i-1}}: i=0, \ldots,n\Big\}$,  the optimal strategy of the Linear-Quadratic-Gaussian (LQG) Stochastic Optimal Control Problem 
\begin{align}
& J(\pi_{[0,n]}^{RS,*})\tri \inf_{\big\{ {\bf P}_{A_i|A^{i-1},B^{i-1}}: i=0, \ldots, n\big\}}   \frac{s}{n+1} \sum_{i=0}^n {\bf E} \Big\{ \langle A_i, R_{i,i} A_i \rangle + \langle B_{i-1}, Q_{i,i-1} B_{i-1} \rangle \Big\}, \\
&\mbox{subject to} \hso B_i   = C_{i,i-1} \; B_{i-1} +D_{i,i} \; A_i + V_{i},\hso B_{-1}=b_{-1},  \hso i= 0, \ldots, n
\end{align}
is Gaussian and Markov of the form $A_i= g_i^M(B_{i-1})+ Z_i, Z_i \sim N(0, K_{Z_i}), Z_i \perp B^{i-1}, i=0, \ldots,n$, $\{Z_i: i=0, \ldots, n\}$ an orthogonal process,  and occurs in the subclass of nonrandom or deterministic policies $\Big\{ \big(g_i^M(b_{i-1}), Z_i\big) =\big(g_i^{B.1,*}(b_{i-1}), 0\big): i=0, \ldots, n\Big\}$, i.e., ${\bf P}_{A_i|A^{i-1},B^{i-1}}^*={\bf P}_{A_i|B_{i-1}}^*=\delta_{A_i}(g_i^{B.1,*}(b_{i-1}))$,  is a delta measure concentrated at $g_i^{B.1, *}(\cdot)$, $i=0, \ldots, n$).\\
This fact alone, makes directed information very attractive for  designing controllers, which stabilize  controlled dynamical systems,  and ensure information is communicated from, say, the control process to the controlled process.\\ 
 (b) The optimal random part of the strategy is found from a sequential     version of   a water filling solution, (\ref{DP-UMCO_C13}),  (\ref{DP-UMCO_C20}), that depends on the solution of a Riccati difference equation. \\
(c) The extremum solution illustrates a separation between the role of control (deterministic part of the strategy) and the role of information transmission (random part of the strategy).\\
(d)  The material discussed in Section~\ref{mo-dr}, regarding the G-LCM-B.1, given by  (\ref{LCM-A.1_a_Intr})-(\ref{Exact_Sol_2}), and   relating feedback capacity, capacity without feedback and LQG stochastic optimal control theory, are direct consequences of the above theorem, specifically,  the per unit time limiting version of Theorem~\ref{DP-UMCO}, which  is investigated in Section~\ref{putl-ex}. \\
\end{remark}

\subsection{Characterization of FTFI Capacity of G-LCM-B and The LQG Theory}
\label{opt_ex-LCM-B.1}
Consider the G-LCM-B.J (a generalization of the G-LCM-B.1),  defined by 
\begin{align}
&B_i   = \sum_{j=1}^M C_{i,i-j} B_{i-j} +D_{i,i}  A_i + V_{i}, \hso B_{-M}^{-1}=b_{-M}^{-1}, \hso  i= 0, \ldots, n, \label{LCM-A.1_a_B}  \\
&\frac{1}{n+1} \sum_{i=0}^n {\bf E} \Big\{ \langle A_i, R_{i,i} A_i  \rangle+ \langle B_{i-K}^{i-1},Q_{K}(i-1)B_{i-K}^{i-1} \rangle  \Big\}\leq \kappa,  \\
& J \tri \max\{M, K\}, \hso  R_{i,i} \in S_+^{q \times q},\hso Q_K(-1)=0, \hso Q_{K}(i-1) \in S_+^{K p \times Kp}, \hso i=0, \ldots, n,  \hso 
\mbox{Assumption B.1(i)  holds}.\label{LCM-A.1_a_BB}
\end{align} 
It can be verified, by repeating the derivation of Theorem~\ref{G-LCM-A-CA}, if necessary, that the optimal channel input conditional distribution is Gaussian of the form $\{\pi_i^g(da_i|b_{i-J}^{i-1}): i=0, \ldots, n\}$, and  that all material presented in Section~\ref{opt_ex-LCM-B.1}, generalize to  G-LCM-B.J. 

 \subsection{Characterization of FTFI Capacity of G-LCM-A and The LQG Theory}
 \label{G-LCM-A-LQG}
Consider the G-LCM-A defined by (\ref{LCM-A.1}), in which  the channel distribution is not of limited memory, but instead the memory is increasing with time. It is possible to repeat the derivation of Theorem~\ref{DP-UMCO}, with some modifications, as follows. Write  
\bea
A_i^g \tri  U_{i}^g + Z_i, \hso U_{i}^g = g_{i}^{A}(B^{g, i-1})\equiv \Gamma_{i}(i-1) B^{g,i-1}, \hso A_0=Z_0,  \hso i=1, \ldots, n \label{DP_GLCM-A1}
\eea
where $\{U_i^g: i=0, \ldots, n\}$ is the deterministic part of the randomized strategy and $\{Z_i: i=0, \ldots, n\}$ is the random part. Then 
\bea
B_i^g= C_{i}(i-1)B^{g,i-1}+ D_{i,i} U_{i}^g + D_{i,i}Z_i + V_i, \hso B_0^g= D_{0,0} A_0^g + V_0, \hso i=1, \ldots, n .\label{DP_GLCM-A2}
\eea
Clearly,  the dimension of the process $\{S_i^g\tri B^{g,i-1}: i=0, \ldots, n\}$  increases with time $i=0,1, \ldots,n$. 
Nevertheless,  it is claimed that,  by repeating the derivation of Theorem~\ref{DP-UMCO}, the optimal randomized strategy can be found, as a function of solutions to  Riccati difference equations, which at each time $i$, increases in dimension from the previous time $i-1$.

\section{Feeback Capacity of G-LCM-B \& The Infinite Horizon LQG  Theory of Directed Information}
\label{putl-ex}
In this section, the per unit limiting version of G-LCM-B is investigated, and the characterization of  feedback capacity is derived,  irrespectively of whether the eigenvalues of the channel matrix $C$, that is, $spec\big(C\big)$ lie in the open disc of the unit circle in ${\mb C}$. 
Specifically, the characterizations of FTFI capacity given in Section~\ref{ILE} are applied to Gaussian Linear  Channel Models (G-LCMs) of Definition~\ref{exa_A_D}, to obtain the following. 
\begin{description}
\item[(a)] Characterizations of feedback capacity for Multiple Input Multiple Output (MIMO)  G-LCMs,  via the per unit time limit of the characterizations of FTFI capacity of MIMO G-LCMs;

\item[(c)] Relations between  infinite horizon LQG stochastic optimal control theory, linear stochastic feedback controlled  systems, feedback capacity and capacity without feedback. 
\end{description}

\subsubsection{\bf Feedback Capacity of G-LCM-B.1 \& Infinite Horizon LQG Theory} Consider first, the G-LC-B.1 (see (\ref{LCM-A.1_a}), (\ref{LCM-A.1_aa}), since the extension to the general model G-LCM-B.J,  can be treated as discussed in Section~\ref{opt_ex-LCM-B.1} 
  \\
The next theorem establishes a hidden connection between,  infinite horizon per unit time  LQG stochastic optimal control theory,  directed information stability (see (\ref{IS-O_1}), (\ref{IS-O_2})), and   optimal transmission rates. Moreover, through the computation of the feedback capacity, a  separation principle is established,  between the role of deterministic part of the randomized strategy to stabilize  unstable channels, and the role of its random part  to transmit new information.  \\

\begin{theorem}(Feedback capacity of  TI-G-LCM-B.1)\\
\label{DP-UMCO_IH}
Consider the time-invariant version of G-LCM-B.1 (see (\ref{LCM-A.1_a}), (\ref{LCM-A.1_aa}) defined by 
\begin{align}
&B_i   = C\; B_{i-1} +D \; A_i + V_{i},\hso B_{-1}=b_{-1}, \hso \; K_{V_i}=K_V \in S_{++}^{p\times p}, \hso i= 0, \ldots, n, \label{LCM-A.1_a_TI} \\
&\frac{1}{n+1} \sum_{i=0}^n {\bf E} \Big\{ \langle A_i, R A_i \rangle + \langle B_{i-1}, Q B_{i-1} \rangle \Big\}\leq \kappa, \hso   R \in {\mb S}_{++}^{q\times q}, \; Q \in {\mb S}_{+}^{p \times p} \label{LCM-A.1_aa_TI}
\end{align} 
 called TI-G-LGM.B.1. \\
Assume the following conditions hold (see Appendix for definitions and implications).
\begin{align}
&i) \hso \mbox{the pair $(C, D)$ is stabilizable} \label{STA} \\
&ii) \hso \mbox{the pair $(G, C)$ is detectable, where} \hso Q=G^T G, \hso G \in {\mb S}_{+}^{p \times p}. \label{DET}
\end{align}
Moreover, assume the set of channel input conditional distributions is restricted to time-invariant distributions, i.e., $\{\pi_i^g(da_i|b_{i-1})=\pi^{g,\infty}(da_i|b_{i-1}): i=0, \ldots, n \}$.  \\
Then the  following hold.\\
(a) Define 
\bea
A_i^g \tri  U_{i}^g + Z_i, \hso U_{i}^g = g^{B.1}(B_{i-1}^g)\equiv \Gamma B_{i-1}^g, \hso i=0, \ldots, n \label{DP_UMCO_C1_TI}
\eea
where $\{U_i^g: i=0, \ldots, n\}$ is the deterministic part of the randomized strategy and $\{Z_i: i=0, \ldots, n\}$ is its random part. Then 
\bea
B_i^g= C B_{i-1}^g+ D U_{i}^g + D Z_i + V_i, \hso i=0, \ldots, n. \label{DP_UMCO_C2_TI}
\eea
Define 
\begin{align}
&{C}_{A^n \rar B^n}^{FB,B.1} (\kappa) 
 \tri \sup_{\big\{(g^{B.1}(\cdot), K_{Z}),   i=0,\ldots,n   \big\} \in  {\cal E}_{[0,n]}^{B.1}(\kappa)   }   \sum_{i=0}^n H(B_i^g|B_{i-1}^g) - H(V^n) , \hst {C}_{A^\infty \rar B^\infty}^{FB, B.1} (\kappa)\tri \liminf_{n \longrightarrow \infty} \frac{1}{n+1}{C}_{A^n \rar B^n}^{FB, B.1} (\kappa),  \label{LCM_B.1_3_C30_IH} \\
&{\cal E}_{[0,n]}^{B.1}(\kappa)   \tri \Big\{g^{B.1}: {\mathbb R}^p \longmapsto {\mathbb R}^q,\hso  u_i=g^{B.1}(b_{i-1}),  \hso  K_{Z} \in {\mb S}_{+}^{q \times q}, \hso i=0,\ldots, n: \nonumber \\
& \hso \frac{1}{n+1}{\bf E}^{g^{B.1}} \Big\{ \sum_{i=0}^n     \Big[\langle A_i^g, R A_i^g \rangle + \langle B_{i-1}^g, Q B_{i-1}^g \rangle \Big] \Big\} \leq \kappa \Big\} \label{fea_IH}
\end{align}
and assume there exist an $(\big\{B_i^g: i=0, \ldots, \big\}, g^{B.1}(\cdot), K_Z)$ such that the feasible set in (\ref{fea_IH}) has an interior point. \\
Then ${C}_{A^\infty \rar B^\infty}^{FB, B.1} (\kappa)$ is the per unit time version of the characterization of FTFI capacity corresponding to  (\ref{LCM_B.1_3_C30}), that is,   
\begin{align}
{C}_{A^\infty \rar B^\infty}^{FB, B.1} (\kappa)  =& {C}_{A^\infty \rar B^\infty}^{FB, G-LCM-B.1} (\kappa)\tri \liminf_{n \longrightarrow \infty}\frac{1}{n+1} {C}_{A^n \rar B^n}^{FB, G-LCM-B.1} (\kappa)\\
=&{C}_{A^\infty \rar B^\infty}^{FB, IL-G-LCM-B.1} (\kappa)\tri \liminf_{n \longrightarrow \infty} \frac{1}{n+1}{C}_{A^n \rar B^n}^{FB, IL-G-LCM-B.1} (\kappa)  \label{DP_UMCO_C2_TI_1}
\end{align}  
(b)   The pair $\Big(J^{B.1,*},C^{B.1}(b)\Big), J^{B.1,*} \in {\mathbb R}$, $C^{B.1}: {\mathbb R}^p \longmapsto {\mathbb R}$ satisfies the following dynamic programming equation (corresponding to  ${C}_{A^\infty \rar B^\infty}^{FB,B.1} (\kappa)$).   
\begin{align}
J^{B.1,*} + C^{B.1}(b) = &\sup_{ (u,K_{Z})\in {\mathbb R}^q\times {\mb S}_{+}^{q\times q}  }  \Bigg\{ \frac{1}{2} \log \frac{ | D K_{Z} D^T+K_{V}|}{|K_{V}|} - tr\Big(s R K_{Z}\Big)  + s \kappa   - s \Big[ \langle u, R u\rangle + \langle b, Q b \rangle  \Big] \nonumber \\
&  + {\bf E}^{g^{B.1}}\Big\{  C^{B.1}(B_{i}^g)       \Big| B_{i-1}^g=b\Big\}  \Bigg\} \label{DP_IH}
\end{align}
where    $s \geq 0$ is found from the average transmission cost constraint. \\
(c) The optimal stationary policy $g^{B.1,*}(\cdot)$  and corresponding covariance matrix $K$ of  $\{B_i^*: i=0, \ldots, n\}$ are  given by the following equations.
\begin{align}
&g^{B.1,*}: {\mathbb R}^p \longmapsto {\mathbb R}^q, \hso \Gamma \in {\mathbb R}^{q\times p}, \hso P \in {\mb S}_{+}^{p\times p}, \\
&g^{B.1,*}(b)= \Gamma^*b, \\
&\Gamma^*=-H_{22}^{-1} H_{12}^T= - \Big(D^T PD +s R\Big)^{-1}D^T P C, \\
& H_{11}= C^T P C+s Q, \hso H_{12}= C^T P D, \hso H_{22}= D^T P D+sR , \\
&P=H_{11} - H_{12} H_{22}^{-1}H_{12}^T \label{DP-UMCO_C12_IH} \\
&P=C^T P C+s Q-C^T P D\Big(D^T P D+s R\Big)^{-1} \Big(C^T P D\Big)^T, \label{DP-UMCO_C12_IH_new}\\
 &K=\Big(C + D \Gamma^*\Big)  K \Big(C+D\Gamma^*\Big)^T +D K_{Z} D^T +K_{V}, \label{MP_1_New_3}\\
& spec\Big(C+D \Gamma^*\Big)= spec\Big(C-D\Big(D^T P D+s R\Big)^{-1} D^T P C \Big) \subset {\mathbb D}_o. \label{cond_STA}
\end{align}
(d) The solution of the dynamic programming is  given by 
\begin{align}
C^{B.1}(b)=&-\langle b, P b\rangle, \label{DP-UMCO_C13_IH_aa} \\
J^{B.1,*}= &  \sup_{ K_{Z}\in  {\mb S}_+^{q\times q}}\Big\{  \frac{1}{2} \log \frac{ | D K_{Z} D^T+K_{V}|}{|K_{V}|}+s\kappa- tr\Big(s\; R K_{Z}\Big)- tr \Big(P\Big[D K_{Z}D^T+ K_{V}\Big]\Big)\Big\}  \label{DP-UMCO_C13_IH}  \\
g^{B.1,*}(b)=&- \Big(D^T PD +s R\Big)^{-1}D^T P C \; b. \label{DP-UMCO_C13_IH_aaa} 
\end{align}
(e) The optimal covariance $K_{Z}^*$ and $s\geq 0$ are found from the optimization problem
\begin{align}
\inf_{s \geq 0} J^{B.1,*} \hso \mbox{subject to (\ref{DP-UMCO_C12_IH_new})}.   \label{RP_R1}
\end{align}
The average transmission cost constraint evaluated on the optimal strategy  is given by
\begin{align}
{\bf E}^{g^{B.1,*}} \Big\{ \langle g^{B.1,*}(B^*), R g^{B.1,*}(B^*)\rangle + \langle B^*, Q B^* \rangle \Big\}
+ tr\Big( R K_{Z}^*\Big)\leq  \kappa
\end{align}
where the expectation is with respect to the invariant distribution ${\bf P}_{B}^{g^{B.1,*}}(db)$ of the optimal output process $\{B_i^*: i=0, \ldots, n\}$ corresponding to $(g^{B.1,*}(\cdot),K_{Z}^*)$. \\
(f)  $J^{B.1,*}\Big|_{s=s^*} = {C}_{A^\infty \rar B^\infty}^{FB, B.1} (\kappa)$, where $s^*$ is the Lagrange multiplier found from (\ref{RP_R1}) or the average constraint  via  (\ref{RP_R1}).\\
(g) The information density and the constraint evaluated at the optimal stationary strategy are information stable, (see (\ref{IS-O_1}), (\ref{IS-O_2}) for precise definition). Specifically,  for any initial distribution ${\bf P}_{B_{-1}}(db_{-1})=\mu(db_{-1}) \in {\cal M}({\mb R}^p)$, the following hold.
\begin{align}
{C}_{A^\infty \rar B^\infty}^{FB, B.1} (\kappa) \equiv   & J(\pi^{g,\infty,*},\mu)=J^{B.1,*}\Big|_{s=s^*}, \hso  \forall \mu(\cdot) \in {\cal M}({\mb R}^p), \\
J^0(\pi^{g,\infty, *}, \mu)=&J^{B.1,*}\Big|_{s=s^*}, \hso {\bf P}_\mu^{\pi^{g,\infty,*}}-a.s., \hso \forall \mu(\cdot) \in {\cal M}({\mb R}^p)
\end{align} 
where 
\begin{align}
J^{0}(\pi^{g,\infty}, \mu) \tri &\liminf_{n \longrightarrow \infty} \frac{1}{n+1} \sum_{i=0}^n  
\log\Big(\frac{dQ_i(\cdot|B_{i-1},A_i)}{v_i^{ \pi^{g,\infty}}(\cdot|B_{i-1})}(B_i)\Big), 
 \label{CT-B.8} \\
\sr{\circ}{\cal P}_{[0,\infty]}^{\infty, B.1}(\kappa) \tri   & \Big\{    \pi^{g,\infty}(da_i|b_{i-1}),  i=0, 1, \ldots, n: 
\limsup_{n \longrightarrow \infty} \frac{1}{n+1} \sum_{i=0}^n  \Big(\langle A_i^g, R A_i^g \rangle + \langle B_{i-1}^g, Q B_{i-1}^g \rangle \Big)     \leq \kappa\Big\} .\label{CT-B.7_a}
\end{align} 

\end{theorem}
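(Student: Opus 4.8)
The plan is to derive the infinite-horizon statement entirely from the finite-horizon solution of Theorem~\ref{DP-UMCO} by passing to the per-unit-time limit, with the convergence theory of the matrix Riccati difference equation supplying the required compactness. Recall that for the time-invariant channel Theorem~\ref{DP-UMCO} gives $C_i^{B.1}(b)=-\langle b, P(i) b\rangle + r(i)$, where $\{P(i)\}$ solves the backward Riccati recursion (\ref{DP-UMCO_C12_a}), (\ref{DP-UMCO_C12_aa}) with constant coefficients and $\{r(i)\}$ is the water-filling recursion (\ref{DP-UMCO_C13}), (\ref{DP-UMCO_C20}). First I would fix $s\geq 0$ and treat the strictly convex maximization over $(u,K_Z)$ as the one-stage LQG-plus-water-filling problem; the claim is that normalizing by $n+1$ and letting $n\to\infty$ turns the finite-horizon value into the average-cost pair $(J^{B.1,*},C^{B.1}(\cdot))$ of (\ref{DP_IH}).

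The analytic heart is the convergence of the iterates $P(0)$ (equivalently, the iterates far from the terminal time) to the unique symmetric positive semidefinite \emph{stabilizing} solution $P$ of the algebraic Riccati equation (\ref{DP-UMCO_C12_IH_new}). Here the hypotheses enter decisively: stabilizability of $(C,D)$ in (\ref{STA}) guarantees a feedback gain rendering the cost finite and hence a uniform upper bound on $\{P(i)\}$, while detectability of $(G,C)$ in (\ref{DET}), with $Q=G^TG$, forces the limit to be the stabilizing solution and yields $spec\big(C+D\Gamma^*\big)\subset \mb D_o$ in (\ref{cond_STA}). I would invoke the Riccati/Lyapunov convergence results collected in Appendix~\ref{appendix_B}, Theorem~\ref{lemma_vanschuppen}, to make this precise. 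Given this, $\frac{1}{n+1}\langle b, P(0) b\rangle\to 0$ because $P(0)$ stays bounded, while the increments of $r(i)$ converge to the stationary water-filling value (\ref{DP-UMCO_C13_IH}), so that $\frac{1}{n+1}r(0)\to J^{B.1,*}$. This delivers the per-unit-time identity (\ref{DP_UMCO_C2_TI_1}) of part (a), together with the stationary gain and invariant covariance of part (c) and the value function of part (d).

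Part (b) I would then verify directly: substituting the quadratic ansatz $C^{B.1}(b)=-\langle b,Pb\rangle$ into (\ref{DP_IH}), the inner maximization over $u$ is an unconstrained quadratic whose optimizer is the linear law $g^{B.1,*}(b)=\Gamma^* b$ with $\Gamma^*=-(D^TPD+sR)^{-1}D^TPC$; matching the coefficients of $\langle b,\cdot\, b\rangle$ reproduces exactly the algebraic Riccati equation (\ref{DP-UMCO_C12_IH_new}), and the $b$-independent terms collect into the water-filling expression (\ref{DP-UMCO_C13_IH}) for $J^{B.1,*}$. For parts (e)--(f) I would use Lagrange duality: since directed information is convex in the channel input distribution and the transmission-cost set is convex (as recorded in the remark following Theorem~\ref{thm-ISR}), the outer minimization $\inf_{s\geq 0}J^{B.1,*}$ of (\ref{RP_R1}) admits a saddle point $(K_Z^*,s^*)$, and evaluating at $s=s^*$ gives $J^{B.1,*}\big|_{s=s^*}={C}_{A^\infty\rar B^\infty}^{FB,B.1}(\kappa)$.

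Finally, for the information-stability claim (g) I would exploit that, under the optimal stationary strategy, the closed loop $B_i^*=(C+D\Gamma^*)B_{i-1}^*+DZ_i+V_i$ is a time-invariant linear system whose transition matrix $C+D\Gamma^*$ has spectrum in $\mb D_o$ by (\ref{cond_STA}); hence it is exponentially stable and, driven by the i.i.d. Gaussian innovations $\{(Z_i,V_i)\}$, converges to a unique invariant Gaussian distribution ${\bf P}_{B}^{g^{B.1,*}}$ independent of the initial law $\mu$. The process $\{(A_i^*,B_i^*)\}$ is therefore asymptotically stationary and ergodic, and both the information density (\ref{CT-B.8}) and the quadratic cost functional (\ref{CT-B.7_a}) are fixed measurable functionals of it; Birkhoff's pointwise ergodic theorem, supplemented by uniform integrability of the Gaussian quadratics to upgrade to $L^1$ convergence, then yields the a.s. and mean convergence to $J^{B.1,*}\big|_{s=s^*}$ for every $\mu\in{\cal M}(\mb R^p)$. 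The main obstacle throughout is the Riccati-convergence step: establishing that the finite-horizon Riccati iterates converge to the stabilizing solution and that the resulting closed loop is Schur-stable is precisely where (\ref{STA}) and (\ref{DET}) are indispensable, and it is also what simultaneously legitimizes restricting the $\liminf$ over arbitrary non-stationary strategies to the stationary class.
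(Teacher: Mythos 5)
Your proposal is correct and reaches every part of the theorem, but it is organized around a genuinely different key step than the paper's proof. The paper argues by \emph{verification}: for parts (b)--(e) it asserts that, under stabilizability (\ref{STA}) and detectability (\ref{DET}), the average-cost dynamic programming equation (\ref{DP_IH}) is valid --- this is taken directly from the standard infinite-horizon LQG literature \cite{kumar-varayia1986,vanschuppen2010} --- and then repeats the finite-horizon computation of Theorem~\ref{DP-UMCO} with the quadratic ansatz; parts (f)--(g) are then deduced from the fact that the closed loop under the stabilizing policy (\ref{DP-UMCO_C13_IH_aaa}) has a unique invariant Gaussian distribution together with the ergodic properties of LQG control, again by citation. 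You instead \emph{derive} the average-cost solution as the per-unit-time limit of the finite-horizon solution: convergence of the Riccati iterates $P(i)$ to the unique stabilizing solution of (\ref{DP-UMCO_C12_IH_new}) (stabilizability giving a uniform bound, detectability forcing the stabilizing limit, via Appendix~\ref{appendix_B}, Theorems~\ref{lemma_vanschuppen} and \ref{RIC}), from which $\frac{1}{n+1}\langle b,P(0)b\rangle \rightarrow 0$ and $\frac{1}{n+1}r(0)\rightarrow J^{B.1,*}$ follow, and only afterwards do you verify the Bellman equation. What your route buys: it proves, rather than assumes, that the $\liminf$ of the time-varying finite-horizon characterizations coincides with the stationary value, i.e., it makes the equalities (\ref{DP_UMCO_C2_TI_1}) of part (a) explicit, which the paper dispatches with ``follows as in Theorem~\ref{DP-UMCO}''; interestingly, this limit-of-Riccati-iterates argument is essentially the proof the authors drafted but suppressed in favor of the shorter citation-based one. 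What it costs: you must supply the Riccati-convergence step yourself and also the continuity in $P$ of the water-filling map $K_Z \mapsto \frac{1}{2}\log\frac{|DK_ZD^T+K_V|}{|K_V|}-tr(sRK_Z)-tr(P[DK_ZD^T+K_V])$ together with finiteness of its supremum over the unbounded set ${\mb S}_+^{q\times q}$ (which needs $s>0$ or $P\succ 0$ so that the linear penalty dominates the logarithmic growth) --- technical points the paper never has to confront. Your treatment of (e)--(f) by Lagrange duality using the assumed interior point, and of (g) by geometric ergodicity of the Schur-stable closed loop plus Birkhoff's theorem with uniform integrability of the Gaussian quadratics, matches the paper's intent in substance, with the caveat that directed information is \emph{concave} (not convex) in the input distribution, which is what the maximization duality actually requires.
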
 
\begin{proof} (a) This follows as in Theorem~\ref{DP-UMCO}.\\
 (b)-(e) By  the stabilizability  and detectability conditions, i), ii) the dynamic programming equation (\ref{DP_IH}) holds (see \cite{kumar-varayia1986,vanschuppen2010}). By repeating the derivation of Theorem~\ref{DP-UMCO}, if necessary,  (c)-(d) are obtained. \\
 (f), (g) These follow from the fact that $N(0,K_B)$ is the unique invariant Gaussian distribution of  (\ref{DP_UMCO_C2_TI}), corresponding to the stabilizing  optimal policy (\ref{DP-UMCO_C13_IH_aaa}) (i.e, (\ref{cond_STA}) holds), and the ergodic properties of LQG stochastic optimal control theory \cite{kumar-varayia1986,vanschuppen2010}. 
\end{proof}

Theorem~\ref{DP-UMCO_IH} gives sufficient conditions in terms of detectability and stabilizability, i.e., (\ref{STA}), (\ref{DET}),   for existence of feedback capacity, irrespectively of whether the eigenvalues of channel matrix $C$ are stable or unstable, that is, whether they lie in the open unit disc of complex numbers, $spec\big(C)\subset {\mb D}_o \tri \big\{c \in {\mb C}: |c| <1\big\}$ or outside $spec\big(C)\subset {\mb D}_o^c \tri \big\{c \in {\mb C}: |c| > 1\big\}$. \\
In fact, the above theorem demonstrates that feedback capacity (i.e., the supremum of all achievable rates) depends on the \'a priori assumptions on the channel model coefficients, $\{C, D, R, Q, K_V\}$, because these determine whether  the conditions of  stabilizability of the pair $\big(C, D\big)$ and detectability of the pair $\big(G, C\big)$, i.e., (\ref{STA}), (\ref{DET}) are satisfied. \\
Indeed, whether feedback capacity exists at all,  is directly related to these conditions of stabilizability and detectability, and the structure of the matrix $Q \succeq 0$ entering the transmission cost function, plays a significant role, on the ability of the optimal feedback strategy $\{g^{B.1, *}(b_{i-1}): i=0, \ldots, \big\}$ given by (\ref{DP-UMCO_C13_IH_aaa}) to stabilize even unstable channels, that is, when the eigenvalues of channel matrix $C$ do not lie in the open unit disc of complex numbers.  \\
Appendix~\ref{appendix_B},  Theorem~\ref{lemma_vanschuppen} and Theorem~\ref{RIC},  summarize the implications of stabilizability and detectability on the optimal capacity achieving channel input distribution, and the corresponding ergodic properties of the optimal joint process $\{(A_i^{g,*}, B_i^{g,*}): i=0, \ldots, \big\}$ and the output process $\{B_i^{g,*}: i=0, \ldots, \big\}$, via   properties of algebraic and discrete-time recursive Lyapunov equations and Riccati equations.  \\
To apply Theorem~\ref{RIC} to the feedback capacity of Theorem~\ref{DP-UMCO_IH}, in oder to determine the properties of solutions to the algebraic Riccati equation), the following substitutions are invoked.
\begin{align}
A \longmapsto C^T, \hso C^T \longmapsto D, \hso B K_W B^T\tri G G^T \longmapsto s Q\tri G^T G, \hso N K_V N^T  \longmapsto s R. \label{sub_rel}
\end{align}

The following implications hold.
\begin{description}
\item[(i)] If the pair $\big(C, D\big)$ is stabilizable and the pair $\big(G, C\big)$ is detectable, and $s>0$ and finite, then by Theorem~\ref{RIC}, (a), (d) the deterministic part of the optimal feedback strategy ensures stability, thus establishing validity of   (\ref{cond_STA}), irrespectively of the eigenvalues of channel matrix $C$. By Appendix~\ref{appendix_B}, if the pair is $\big(C, D\big)$ is controllable then it is stabilizable and  if  pair is $\big(G, C\big)$ detectable then it is observable. 

\item[(ii)] If the conditions in (i) hold, and in addition $(C, K_V^{\frac{1}{2}}), K_V \tri K_V^{\frac{1}{2}}K_V^{\frac{1}{2}, T}$ is a controllable pair, by Theorem~\ref{lemma_vanschuppen}, (d),  then the Lyapunov matrix equation (\ref{MP_1_New_3}) has  a unique positive definite solution $K \succ 0$, which implies the channel output process $\{B_i^{g,*}: i=0, \ldots, \}$ has a unique invariant distribution.  

\end{description}

The next example further illustrates the importance of stabilizability and detectability conditions, in determining feedback capacity, and the role of zero matrix $Q =0$ versus $Q \neq 0$.\\  
 
 \begin{example}(Consequences of Theorem~\ref{DP-UMCO_IH}.\\
Consider the  feedback capacity given in Theorem~\ref{DP-UMCO_IH}. \\
{\bf (a) Scalar with $p=q=1, R=1, Q=0$}. This is discussed in Section~\ref{mo-dr}.  Specifically, (\ref{Exact_Sol_1})-(\ref{Exact_Sol_2}), are obtained from the expressions of Theorem~\ref{DP-UMCO_IH}, and this example demonstrates that whether feedback increases capacity depends on the channel parameters and transmission cost parameters $\{C, D, R, Q, K_V\}$. \\
{\bf (b) MIMO with  $Q=0$.} Since $Q=0$, the algebraic Riccati equation (\ref{DP-UMCO_C12_IH_new}) reduces to the following matrix equation. 
\begin{align}
 P=C^T P C-C^T P D\Big(D^T P D+s R\Big)^{-1} \Big(C^T P D\Big)^T \hso \Longrightarrow P=0 \hso \mbox{i.e., the zero matrix is one solution.} \label{DP-UMCO_C12_IH_new_Q}
\end{align}
It is shown next, that feedback capacity ${C}_{A^\infty \rar B^\infty}^{FB, B.1} (\kappa)\equiv J^{B.1,*}\Big|_{s=^*}$ depends on whether the eigenvalues lie inside the unit disc of the space of complex numbers ${\mb D}_o$, and whether feedback increases capacity is determined from the solutions of the algebraic Riccation equation.

{\bf (i) Case 1: MIMO Stable Channel}, $spec\big(C\big) \subset {\mb D}_o$.  Since $spec\big(C\big) \subset {\mb D}_o$ then $\big(G, C\big), Q \tri G^T G$ is detectable even though, $G=0$, because by Definition~\ref{def_st-de}, there exists an matrix $L$ such that $spec(C-LG) \subset {\mb D}_o$, i.e., take $L=0$. Similarly, $(C, D)$ is stabilizable.  By invoking Theorem~\ref{RIC}, (a) (with substitutions (\ref{sub_rel})) the Riccati matrix equation (\ref{DP-UMCO_C12_IH_new_Q}) with $P \succeq 0$ has at most one solution, and hence $P=0$ is the only solution. Substituting $P=0$ into the Lyapunov equation (\ref{MP_1_New_3}) and  (\ref{DP-UMCO_C13_IH_aa})-(\ref{DP-UMCO_C13_IH_aaa}) the following are obtained.
\begin{align}
\Gamma^*=&0, \hso  C^{B.1}(b)=0, \hso 
J^{B.1,*}=   \sup_{ K_{Z}\in  {\mb S}_+^{q\times q}}\Big\{  \frac{1}{2} \log \frac{ | D K_{Z} D^T+K_{V}|}{|K_{V}|}+s\kappa- tr\Big(s\; R K_{Z}\Big)\Big\},  \label{DP-UMCO_New}  \\
K=&C   K C^T +D K_{Z} D^T +K_{V}. \label{MP_1_New_3_Q}
\end{align}
Recall that $K$ is the covariance of the channel ouput process $\{B_i^*: i=0, \ldots, ..\}$. By Theorem~\ref{lemma_vanschuppen}, (d),    if $K_V$ is full rank, then 1), 2) imply $K\succ 0$. Further, by Theorem~\ref{lemma_vanschuppen}, (b), $K \succ 0$  is the unique solution of (\ref{MP_1_New_3_Q}), and hence the channel output process $\{B_i^*: i=0, \ldots, \}$ has a unique invariant distribution. \\
Finally, by  (\ref{DP-UMCO_New}) and the fact that  $``s''$ correspond to  the Lagrange multiplier of the transmission cost constraint, then the following holds. 
\begin{align}
{C}_{A^\infty \rar B^\infty}^{FB, B.1} (\kappa)= 
\sup_{ K_{Z}\in  {\mb S}_+^{q\times q}: tr\big( R K_{Z}\big)\leq \kappa   }  \frac{1}{2} \log \frac{ | D K_{Z} D^T+K_{V}|}{|K_{V}|}={C}_{A^\infty \rar B^\infty}^{noFB, B.1} (\kappa) .  \label{DP-UMCO_New_1}  
\end{align}
where ${C}_{A^\infty \rar B^\infty}^{noFB, B.1} (\kappa)$ is the capacity of (\ref{LCM-A.1_a_TI}),  (\ref{LCM-A.1_aa_TI}) (with $Q=0$) without feedback. \\
Moreover, from (\ref{DP-UMCO_New_1}) it follows that the capacity achieving channel input distribution without feedback is stationary (even if $\{Z_i: i=0, 1\ldots\}$ is not restricted to a stationary process,  and satisfies conditional independence
\bea
P_{A_i|A^{i-1}}^*(da_i|a^{i-1})= P_{A_i}^*(da_i), \hso i=0,1 \ldots,  
 \eea
 The above discussion generalizes  the scalar example discussed in Section~\ref{mo-dr}, (\ref{Exact_Sol_1})-(\ref{Exact_Sol_2}), to MIMO channels. 
 
{\bf (ii) Case 2: MIMO Unstable Channel}, $spec\big(C\big) \in {\mb D}_o^c \tri \big\{c \in {\mb C}: |c|> 1\big\}$.  
 For unstable channels, $\big(G, C\big), Q\tri G^T G$ is not detectable (i.e., since $Q=0$ and $C$ is unstable), hence condition (\ref{DET}) is violated. However, even if detectability is violated, by Theorem~\ref{lemma_vanschuppen}, d) if the pair $(C, K_{V}^{\frac{1}{2}})$ is controllable and Lyapunov equation (\ref{MP_1_New_3}) has a positive definite solution $K \succ 0$, then (\ref{cond_STA}) holds, that is, the feedback optimal strategy is stabilizing, i.e., $spec\Big(C+D \Gamma^*\Big)$, and if in addition $K_V\succ 0$, by Theorem~\ref{RIC}, (e), the matrix Riccati equation has a unique solution,  which is neccesary $P \succ 0$ (otherwise is not stabilizing). 
  
 \end{example}

The above example illustrates the link between LQG stochastic optimal control theory, and feedback capacity of G-LCM-B.1.

\subsubsection{\bf Feedback Capacity of G-LCM-B. \& Infinite Horizon LQG Theory}
Consider the G-LCM-B.J defined in Section~\ref{opt_ex-LCM-B.1}. Then  Theorem~\ref{DP-UMCO_IH} is easily  generalized to G-LCM-B.J; this is left to the reader.

\ \

\begin{remark}(Generalization and Relations to Cover and Pombra \cite{cover-pombra1989} AGN Channel)\\
\label{R-cp1989_2}
(a)  It is possible to derive analogous results for the time-invariant version of the G-LCM-A and G-LCM.B.J, by invoking the  formulation in Section~\ref{opt_ex-LCM-B.1},  Section~\ref{G-LCM-A-LQG}.     \\
(b) The per unit time limiting version  of the  (scalar) Cover and Pombra \cite{cover-pombra1989} AGN channel model, whose  characterization of FTFI capacity is  given by  (\ref{c-p1989}),  is extensively analyzed  by Kim in \cite{kim2010},  under the assumptions that $\{(A_i, B_i): i=0, \ldots, \}$ are jointly stationary ergodic, and the noise process is stable,  stationary, and of limited memory  (i.e., its power spectral density does not have unstable poles).  To this date no multidimensional examples are found in the literature, for which the  capacity expression of Cover and Pombra \cite{cover-pombra1989} or its limited memory noise version \cite{kim2010}, are computed explicitly.   \\ 
The material of this section, illustrate that for MIMO TI-G-LCMs, by invoking stochastic optimal control theory,  the characterizations of  feedback capacity can be computed. Moreover,  detectability and stabilizability  are sufficient conditions, for   the optimal channel input distribution to induce an invariant distribution for  the joint process $\{(A_i^g, B_i^g): i=0,1, \ldots, \}$ and its marginals. This illustrates the direct connection between ergodic LQG stochastic optimal control theory and feedback capacity. 
\end{remark}

\ \

\section{Relations Between Characterizations of FTFI Capacity and Coding Theorems}
\label{c-thms}
In this section the importance of  the characterizations of FTFI capacity 
are illustrated, with respect to the derivations of the  converse and the direct part of the channel coding theorems. Moreover, sufficient conditions are identified so that the per unit time limits of the characterizations of FTFI capacity, corresponds to feedback capacity. \\  
Consider  the following definition of a code.\\

\begin{definition}(Achievable rates of  codes  with feedback)\\
\label{ccode}
Given a channel distribution of Class A or B and a transmission cost function of Class A or B, an $\{(n, { M}_n, \epsilon_n):n=0, 1, \dots\}$  code with feedback 
consists of the following.\\
(a)  A set of messages ${\cal M}_n \tri \{ 1,  \ldots, M_n\}$ and a set of encoding maps,  mapping source messages  into channel inputs of block length $(n+1)$, defined by
\begin{align}
{\cal E}_{[0,n]}^{FB}(\kappa) \tri & \Big\{g_i: {\cal M}_n \times {\mb B}^{i-1}  \longmapsto {\mb A}_i, \hso  a_0=g_0(w, b^{-1}), a_i=e_i(w, b^{i-1}),\hso  w\in {\cal M}_n,      i=0,1, \ldots, n:\nonumber \\
& \frac{1}{n+1} {\bf E}^g\Big(c_{0,n}(A^n,B^{n-1})\Big)\leq \kappa  \Big\}. \label{block-code-nf-non}
\end{align}
The codeword for any $w \in {\cal M}_n$  is $u_w\in{\mb A}^n$, $u_w=(g_0(w,b^{-1}), g_1(w, b^0),
,\dots,g_n(w, b^{n-1}))$, and ${\cal C}_n=(u_1,u_2,\dots,u_{{M}_n})$ is  the code for the message set ${\cal M}_n$. Note that in general, the code for Class A channels (and also Class B channels) depends on the initial data $B^{-1}=b^{-1}$. However, if  it can be shown that in  the limit, as $n \longrightarrow \infty$, the induced channel output process  has a unique invariant distribution then the code is independent of the initial data.  \\
(b)  Decoder measurable mappings $d_{0,n}:{\mb B}^n\longmapsto {\cal M}_n$, ${Y}^n= d_{0,n}(B^{n})$, such that the average
probability of decoding error satisfies 
\bea
{\bf P}_e^{(n)} \tri \frac{1}{M_n} \sum_{w \in {\cal M}_n} {\mb  P}^g \Big\{d_{0,n}(B^{n}) \neq w |  W=w\Big\}\equiv {\mathbb P}^g\Big\{d_{0,n}(B^n) \neq W \Big\} \leq \epsilon_n
\eea
where  $r_n\tri \frac{1}{n+1} \log M_n$ is  the coding rate or transmission rate (and the messages are uniformly distributed over ${\cal M}_n$). \\  
A rate $R$ is said to be an achievable rate, if there exists  a  code sequence satisfying
$\lim_{n\longrightarrow\infty} {\epsilon}_n=0$ and $\liminf_{n \longrightarrow\infty}\frac{1}{n+1}\log{{M}_n}\geq R$. The feedback capacity is defined by $C\tri \sup \{R: R \: \mbox{is achievable}\}$.
\end{definition}

\ \

With respect to the above definition of a code or variants of it,  direct and converse coding theorems are derived in  \cite{cover-pombra1989,kim2008,permuter-weissman-goldsmith2009,tatikonda-mitter2009,kim2010}. These are can be  separated into  those which treat  Gaussian channels with memory, and those which treat finite alphabet spaces. \\
The  coding theorems in  \cite{cover-pombra1989,kim2008,permuter-weissman-goldsmith2009,tatikonda-mitter2009,kim2010} are directly applicable to channels of Class A or B and transmission cost functions of Class A or B,   provided, the assumptions based on which these  are derived, are adopted, or they are  modified to account for additional generalities. For example, the coding theorems derived by Cover and Pombra \cite{cover-pombra1989} for scalar nonstationary nonergodic AGN channels with memory, are directly applicable to the G-LCM-A 
presented in Section~\ref{exa_CG-NCM-A.1}.  
For finite alphabet spaces $\{{\mathbb A_i}={\mathbb A}, {\mathbb B}_i={\mathbb B}:i=0, \ldots, n\}$, the  coding theorems derived by Kim \cite{kim2008}, for the class of stationary channels with feedback,   are directy applicable to NCM-A and NCM-B (without transmission cost), given in Definition~\ref{exa_A_D}, and they can be extended to include transmission cost constraints.  The coding theorem derived by Chen and Berger \cite{chen-berger2005} for the UMCO (i.e., $\{{\bf P}_{B_i|B_{i-1}, A_i}: i=0, \ldots, n\}$) with finite alphabet spaces, is directly applicable, while a transmission cost function of Class B with $K=1$ can be  easily incorporated. The  various  coding theorems derived by Permuter, Weissman, and Goldsmith in  \cite{permuter-weissman-goldsmith2009,kim2010} for finite alphabet spaces  (without transmission cost constraints), under the assumption of time-invariant deterministic feedback,  are directly applicable to channel distributions of Class A or B, and since their method is based on irreducibility of the channel distribution, they also extend to problems with time-invariant transmission cost functions of Class A or B. 

However, 
the converse part of the coding theorem is based on establishing a tight upper bound on any achievable rate, and  the characterization of FTFI capacity gives such a tight bound. For example, if the channel is memoryless the tight upper bound on any rate is given by the two-letter expression $C\tri \max_{{\bf P}_A}I(A; B) =  \lim_{n \longrightarrow \infty} \max_{{\bf P}_{A^n}} \frac{1}{n+1}I(A^n; B^n)$, and   the characterizations of FTFI capacity are the analogs for channels with memory. On the other hand, the direct part is often shown by using random coding arguments, which implies codes are generated independently according to the distribution, which maximizes mutual information if there is no feedback, and directed information if there is feedback. The optimal channel input distribution, which corresponds to the characterization of FTFI capacity is the one to be used in code generation.     For example, if the channel is memoryless codes are generated independently according to ${\bf P}_{A^n}^*(da^n)\tri \otimes_{i=0}^n {\bf P}_{A}^*(da_i)$, where ${\bf P}_A^*$ is the one which corresponds to $C$.  

For the converse to the coding theorem, it is sufficient to identify conditions for  existence of  the optimal channel input distribution corresponding to the characterizations of the FTFI capacity, and convergence of the  per unit time limiting version (provided condition (\ref{CI_Massey}) holds). \\
For the direct part of the coding theorem,   for any channel distribution  of Class A (resp. Class B), and transmission cost of Class A or B, with corresponding information density and transmission cost, evaluated at the optimal channel input distribution,   $\big\{\pi^*(da_i|b^{i-1}): i=0,1,\ldots, n\big\} \in  \overline{\cal P}_{[0,\infty]}^{A} \bigcap {\cal P}_{[0,n]}(\kappa)$ (resp.  $\sr{\circ}{\cal P}_{[0,\infty]}^{B.J} \bigcap {\cal P}_{[0,n]}(\kappa)$) (i.e., obtained in   Theorem~\ref{thm-ISR} (resp. Theorem~\ref{cor-ISR_C4})), it is sufficient to identify conditions for information  stability  in the sense of Dobrushin \cite{pinsker1964}.
 Information stability  implies the asymptotic equipartition property (AEP) of directed information holds, from which the direct part of the coding theorem follows by standard arguments. 

The  coding theorem stated below, is generic, in the sense that sufficient conditions are imposed, to ensure both the converse part and direct part of coding theorem hold.\\

%

\begin{theorem}(Coding theorem)\\
\label{lemma:DPI-cc}
Consider any channel distribution and transmission cost function of  Class A or B,  with corresponding characterizations of FTFI capacity, and optimal channel input conditional  distributions denoted by   $\big\{\pi_i^*(da_i| {\cal I}_i^P): i=0,1,\ldots, n\big\} \in  {\cal P}_{[0,n]}(\kappa)$, where  $\{{\cal I}_i^P: i=0, \ldots, n\}$ is the information structure of optimal channel input distribution.  \\
 Suppose the following two conditions hold.\\
i) Conditional independence  (\ref{CI_Massey}) holds;\\
ii) there exists an optimal channel input conditional distribution,  which achieves the characterization of the FTFI capacity, and its per unit time limit exist (if not replace it by $\liminf$).\\
Define the following.\\
For $\big\{\pi_i^*(da_i| {\cal I}_i^P): i=0,1,\ldots, n\big\} \in  {\cal P}_{[0,n]}(\kappa)$ (assuming condition ii)), the directed information density is called stable, if  
\begin{align}
 \lim_{n \longrightarrow \infty} {\bf P}^{\pi^*} \Big\{(A^n ,B^n) \in {\mb A}^n \times {\mb B^n}:\frac{1}{n+1} \Big|{\bf E}^{\pi^*} \big\{ {\bf i}^{\pi^*}(A^n, B^n)\big\} - {\bf i}^{\pi^*}(A^n, B^n) \Big| > \varepsilon \Big\} =0,  \label{IS-O_1}
\end{align}
where for channel distribution of Class A, and transmission cost of Class A or B, the directed information density is  ${\bf i}^{\pi^*}(A^n, B^n)\tri \sum_{i=0}^n \log\Big(  \frac{{\bf P}(\cdot|B^{i-1}, A_i)}{{\bf P}^{\pi^*}(\cdot|B^{i-1})}(B_i)\Big)$, $i=0, \ldots, n$ (and similarly for Clannels of Class B). \\
For  $\big\{\pi_i^*(da_i| {\cal I}_i^P): i=0,1,\ldots, n\big\} \in  {\cal P}_{[0,n]}(\kappa)$ (assuming condition ii)), the transmission cost  is called  stable, if  
  \begin{align}
 \lim_{n \longrightarrow \infty} {\bf P}^{\pi^*} \Big\{ (A^n ,B^n) \in {\mb A}^n \times {\mb B^n}: \frac{1}{n+1} \Big| {\bf E}^{\pi^*} \big\{ c_{0,n}(A^n, B^n)\big\} - c_{0,n}(A^n, B^n) \Big| > \varepsilon \Big\} =0. \label{IS-O_2}
\end{align}
Then the  following hold.\\
(a) (Converse) If conditions i), ii) hold, then any achievable rate $R$ of  codes with feedback  given in  Definition~\ref{ccode},   satisfies the following inequalities.
\begin{align} 
 R \leq & \liminf_{n \longrightarrow\infty}\frac{1}{n+1}\log{{M}_n} 
{\leq} \:  \liminf_{n \longrightarrow\infty}  \sup_{ \big\{g_i(\cdot, \cdot): i=0, \ldots, n\big\} \in {\cal E}_{[0,n]}^{FB}(\kappa)}  \frac{1}{n+1}  \sum_{i=0}^n I(A_i; B_i| B^{i-1})
 \label{CCIS_6a_BC}\\
\leq  \: & \liminf_{n \longrightarrow \infty} \sup_{ \big\{ \pi(a_i| {\cal I}_i^P): i=0,1,\ldots, n\big\} \in {\cal P}_{[0,n]}(\kappa)}  \frac{1}{n+1}\sum_{i=0}^n I(A_i; B_i|B^{i-1}) \equiv C_{A^\infty \rar B^\infty}^{FB}(\kappa) \label{CCIS_8_BC}  
\end{align}
(b) (Direct) If  conditions i), ii) hold and in addition \\
iii) the directed information density is stable,\\
iv) the  transmission cost is stable, \\
then  
any rate $R < C_{A^\infty \rar B^\infty}^{FB}(\kappa)$ is achievable.
\end{theorem}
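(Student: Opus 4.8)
The plan is to establish the converse in part (a) by combining Fano's inequality with Massey's directed-information bound, and to establish the direct part (b) by a random-coding argument whose error analysis is driven entirely by the information-stability hypotheses (iii), (iv), thereby avoiding any \'a priori stationarity or ergodicity assumption on the joint process $\{(A_i,B_i):i=0,1,\ldots\}$.

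For the converse, I would first fix an $(n,M_n,\epsilon_n)$ code with feedback as in Definition~\ref{ccode} and use that the messages $W$ are uniform on ${\cal M}_n$, so $\log M_n = H(W)$. Fano's inequality gives $H(W|B^n)\leq 1 + {\bf P}_e^{(n)}\log M_n$, hence $(1-{\bf P}_e^{(n)})\log M_n \leq I(W;B^n)+1$. The key step is to bound $I(W;B^n)$ by directed information: writing $I(W;B^n)=\sum_{i=0}^n I(W;B_i|B^{i-1})$ and noting that for feedback codes $A_i=g_i(W,B^{i-1})$, the chain rule yields $I(W;B_i|B^{i-1})\leq I(A^i;B_i|B^{i-1}) + I(W;B_i|B^{i-1},A^i)$; condition (\ref{CI_Massey}) forces the last term to vanish since $W\leftrightarrow (A^i,B^{i-1})\leftrightarrow B_i$. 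Summing gives $I(W;B^n)\leq I(A^n\rar B^n)$, so that $(1-{\bf P}_e^{(n)})\log M_n - 1 \leq \sup_{g\in {\cal E}_{[0,n]}^{FB}(\kappa)} \sum_{i=0}^n I(A_i;B_i|B^{i-1})$. Dividing by $n+1$, taking $\liminf$, and using $\epsilon_n\to 0$ produces the first two inequalities (\ref{CCIS_6a_BC}). The final inequality (\ref{CCIS_8_BC}) follows because every feedback encoder in ${\cal E}_{[0,n]}^{FB}(\kappa)$ induces a channel input conditional distribution lying in ${\cal P}_{[0,n]}(\kappa)$, so the supremum over encoders is dominated by the supremum over ${\cal P}_{[0,n]}(\kappa)$, which by condition ii) equals $C_{A^\infty\rar B^\infty}^{FB}(\kappa)$.

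For the direct part I would generate a random code by drawing $M_n=\lceil 2^{(n+1)R}\rceil$ codewords independently, where each codeword is realized through the optimal channel input conditional distribution $\{\pi_i^*(da_i|{\cal I}_i^P)\}$ via the information-lossless randomized strategies of Theorem~\ref{thm-AC-1} and Theorem~\ref{thm-AC-2}; this embeds feedback into the code generation while keeping the induced joint law equal to ${\bf P}^{\pi^*}$. The decoder is a threshold decoder that declares $\widehat{W}=w$ when ${\bf i}^{\pi^*}(u_w,B^n)/(n+1)$ exceeds $C_{A^\infty\rar B^\infty}^{FB}(\kappa)-\delta$ and no competing codeword does. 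Conditions (iii), (iv) are exactly the Dobrushin-type information-stability statements yielding the asymptotic equipartition property: by (iii) the density ${\bf i}^{\pi^*}(A^n,B^n)/(n+1)$ converges in probability to $C_{A^\infty\rar B^\infty}^{FB}(\kappa)$, and by (iv) the empirical cost $c_{0,n}(A^n,B^n)/(n+1)$ concentrates around its mean, which is $\leq\kappa$. I would then bound the probability of error by the sum of (a) the probability that the true codeword fails the threshold (vanishing by (iii)), (b) the union-bound probability that an independent competing codeword spuriously passes (vanishing because $2^{(n+1)R}$ is outmatched by the threshold count $2^{(n+1)(C-\delta)}$ whenever $R<C-\delta$), and (c) the probability that the chosen codeword violates the cost constraint (vanishing by (iv) after expurgating the vanishing fraction of codewords exceeding $\kappa$). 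Choosing $\delta$ with $R<C-\delta$ and letting $n\to\infty$ shows $R$ is achievable.

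The main obstacle is the error analysis of the direct part under feedback and a cost constraint without stationarity: one must verify that the randomized-strategy realization faithfully reproduces the law ${\bf P}^{\pi^*}$ along the feedback loop, so that the across-codeword independence needed for the union bound is preserved, while simultaneously reconciling the threshold decoding with the cost expurgation so the two error events are controlled jointly without sacrificing rate. The information-stability hypotheses (iii), (iv) are precisely the device that makes the non-ergodic AEP go through; establishing them for a concrete channel is itself nontrivial, but since they are assumed here, the remaining technical core is the combinatorial and probabilistic error bounding just described.
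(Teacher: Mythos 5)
Your proposal is correct and follows essentially the same route as the paper: the converse via Fano's inequality combined with the data-processing bound $I(W;B^n)\leq I(A^n\rar B^n)$ forced by condition (\ref{CI_Massey}), and the direct part via random coding where conditions iii), iv) supply the AEP for the directed information density (the paper simply cites Ihara's argument with mutual-information density replaced by directed-information density). The paper's own proof is a two-sentence sketch deferring all details to references; your write-up fills in exactly those details, including the feedback-aware code generation and the cost expurgation, without deviating from the intended argument.
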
 
\begin{proof} 
(a) Condition i) implies the well-known data processing inequality, while condition ii) implies  existence of the optimal channel input distribution and finiteness of the corresponding characterizations of the FTFI capacity and its per unit time limit. Hence, the statements of inequalities follow by applying  Fanon's inequality. The derivation is also found in many references, (i.e.,\cite{cover-pombra1989,kim2008,permuter-weissman-goldsmith2009,tatikonda-mitter2009,kim2010}). \\
(b) This is standard, because conditions ii)-iv) are sufficient to ensure the AEP holds, and hence   standard random coding arguments hold   (i.e., following  Ihara \cite{ihara1993}, by replacing the information density of mutual information by that of   directed information).  
\end{proof}

It is noted that alternative achievability theorems can be obtained by combining the achievability theorem  derived by Permuter, Weissman and Goldsmith \cite{permuter-weissman-goldsmith2009}, which is based on bounding the error of Maximum Likelihood (ML) decoding, and the characterizations of FTFI capacity and feedback capacity. \\
Finally, the following are  noted. 

(a) For the  TI-G-LCM-B.1, Theorem~\ref{DP-UMCO_IH} gives sufficient conditions, in terms of  the channel variables $\big\{C, D, R, Q, K_{V}\}$, expressed in terms of detectability and stabilizability, for $J^{B.1,*}\Big|_{s=s^*} = {C}_{A^\infty \rar B^\infty}^{FB, B.1} (\kappa)$,     defined by    (\ref{DP-UMCO_C13_IH_aaa}), to correspond to Feedback Capacity, irrespectively of whether the channel is stable or unstable.  

(b) For the  TI-G-LCM-B.J, similarly to Theorem~\ref{DP-UMCO_IH},   sufficient conditions can be obtained, for the corresponding solution of the dynamic programming, denoted by  $J^{B.J,*}\Big|_{s=s^*} = {C}_{A^\infty \rar B^\infty}^{FB, B.J} (\kappa)$  to correspond to Feedback Capacity.

(c) For Multidimensional Gaussian sources to be encoded and transmitted over any one of the channels,  G-LCM-A, G-LCM-B.1, G-LCM-B.J, coding  strategies can be  constructed, which   achieve the corresponding characterizations of the FTFI capacity, and Feedback capacity. 
This construction is a subject for further research.

\section{Conclusion}
In this second part of the two-part investigation, the information structures of optimal channel input conditional distributions of the first part, are applied to derive alternative characterizations of FTFI capacity, based on randomized information lossless strategies, driven by uniform RVs.  Their per unit time limiting versions are analyzed, without imposing \'a priori assumptions, which rule out the dual role of such strategies, to achieve the FTFI capacity characterizations and feedback capacity, to control the channel output process and to transmit new information through the channel.\\
The characterizations of FTFI capacity and feedback capacity are investigated for  application examples of MIMO Gaussian Linear Channel Models (G-LCMs) with memory. In such application examples, information lossless strategies decompose into a deterministic part, which corresponds to the  control process, and a random part, which corresponds to an innovations process.  Via this decomposition a separation principle is established;  the deterministic control part is shown to be directly  related to the role of optimal control strategies of Linear-Quadratic-Gaussian control theory, to control output processes, and, in general, to the feedback control theory of linear stochastic systems, while the random or innovations part is shown to be directly related to role of  encoders to achieve capacity, by  transmitting  new information over the channel.   Moreover, whether feedback increases capacity is shown to be directly linked to the role of the deterministic part of the information lossless randomized strategies, to control the channel output process. 


\begin{appendix}
\label{appendix}
\subsection{Proof of Theorem~\ref{thm-AC-1}.}
\label{appendix_thm-AC-1}
(a) By Theorem~\ref{thm-ISR}, (b),   the optimal channel input distributions belong to     $\overline{\cal P}_{[0,n]}^{A}=\{ \pi_i(da_i| b^{i-1})\equiv {\bf P}_{A_i| B^{i-1}}(a_i |b^{i-1}) : i=0,1, \ldots, n\}$, and  satisfy the transmission cost constraint. By Lemma~\ref{lemma-gs1989}, { CON.A.(1)} holds. Moreover, by the channel distribution, Assumption A.(iii), the property of optimal channel input distribution, and by virtue of  (\ref{LCM-A.1_5_new}),  { CON.A.(2)}, also holds.  Clearly,  ii) and iii) imply the processes $\{U_i: i=0, \ldots, n\}$ and $\{V_i: i=0, \ldots, n\}$ are independent. \\
(b) This is a direct consequence of (a), because the set of all channel input distribution $\overline{\cal P}_{[0,n]}^{A}$   is realized by  strategies $\{e_i^{A}(\cdot, \cdot): i=0,1, \ldots, n\}$. 
 \\
(c) To ensure no information is lost,   when the above randomized strategies $\{e_i^{A}(\cdot, \cdot): i=0,1, \ldots, n\}$ are used in the characterization of the FTFI capacity,  consider the  restricted class of randomized strategies defined by (\ref{alt_A.1_NCM}).
To show the set ${\cal E}_{[0,n]}^{IL-A} (\kappa)$ is information lossless, with respect to the characterization of FTFI capacity, recall that for 
channels of Class A, $I(A^i; B_i|B^{i-1})=I(A_i; B_i|B^{i-1}), i=0, \ldots, n,$ as defined by  (\ref{CIS_18}).  By an application of Theorem 3.7.1 in Pinsker~\cite{pinsker1964}  (and Corollary following it), it can be verified that 
the  following sequence of identities hold.
\begin{align}
I(A_i; B_i|B^{i-1})=I(A_i, U_i; B_i|B^{i-1})=& I(U_i;B_i|B^{i-1}), \hso i=0, \ldots, n \label{IL_A}\\
& \mbox{if and only if} \hso \{e_i^{A}(\cdot, \cdot): i=0,\ldots, n\} \in {\cal E}_{[0,n]}^{IL-A} (\kappa). \nonumber    
\end{align}
Hence, the class of randomized strategies ${\cal E}_{[0,n]}^{IL-A} (\kappa)$ is information lossless with respect to directed information, in the sense of identity (\ref{IL_A}). Further,  utilizing the definition of information lossless strategies ${\cal E}_{[0,n]}^{IL-A} (\kappa)$, the  alternative characterization of FTFI capacity is  obtained,  where (\ref{CMNon-A.1_10_new}) is due to the definition of ${\cal E}_{[0,n]}^{IL-A} (\kappa)$, (\ref{CMNon-A.1_10_new_o}) is by definition,   and (\ref{LCMNon-A.1_4_new}) is due to  (\ref{LCM-A.1_5_new}), that is, $U_i$ is independent of $B^{i-1}$ (i.e., a consequence of CON.A.(2).(ii)), for $i=0,1,\ldots, n$. This completes the prove.

\subsection{Proof of Theorem~\ref{G-LCM-A-CA}.}
\label{appendix_G-LCM-A-CA}
(a) By { Assumption A.1.(i)},  the conditional  distribution of the channel  is Gaussian, given by
\begin{align}
{\mb P}\Big\{B_i \leq b_i \Big| B^{i-1}=b^{i-1}, A^i=a^i\Big\}=& {\mb P}\Big\{V_{i}\leq b_i - \sum_{j=0}^{i-1} C_{i,j} b_{j} -D_{i,i} \; a_i\big)  \Big\},  \label{LCM-A.1_1} \\
\sim & N\Big(( \sum_{j=0}^{i-1}C_{i,j} b_j+ D_{i,i}  a_i ), K_{V_i}\Big),  \hso  i=0,1, \ldots, n. \label{CG-LCM-A.1_1} 
\end{align}
The transition probability distribution of $\big\{B_i: i=0,\ldots, n\big\}$ is given by
\begin{align}
{\mb P}\Big\{B_i \leq b_i \Big| B^{i-1}=b^{i-1}\Big\}=& \int_{{\mathbb A}_i} {\mb P}\Big\{V_{i}\leq b_i - \sum_{j=0}^{i-1} C_{i,j} b_{j}- D_{i,i} \; a_i  \Big\} \pi_i(da_i|b^{i-1}),  \hso  i=0,1, \ldots, n. \label{LCM-A.1_4} 
\end{align}
In view Assumption A.(iii), and properties of conditional entropy, then $H(B_i|B^{i-1}, A_i)= H(V_i|B^{i-1}, A_i)=H(V_i), i=0, \ldots, n$, and  directed information  is given by 
\begin{align}
I(A^n \rar B^n)=\sum_{i=0}^n \Big\{ H(B_i|B^{i-1})- H(B_i|B^{i-1}, A_i)\Big\}=  \sum_{i=0}^n H(B_i|B^{i-1}) -\sum_{i=0}^n H(V_i) . \label{CG-LCM-A.1_1_A}
\end{align}
Hence, the  characterization of FTFI Feedback Capacity is given by the following expression.
\begin{align}
{C}_{A^n \rar B^n}^{FB,A} (\kappa) 
 \tri &  \sup_{\big\{ {\pi}_i(da_i | b^{i-1}),  i=0,\ldots,n: \frac{1}{n+1} \sum_{i=0}^n  {\bf E} \big\{ \langle A_i, R_{i,i} A_i \rangle + \langle B^{i-1}, Q_{i}(i-1) B^{i-1}\rangle  \big\}\leq \kappa   \big\}    } H(B^{n})- H(V^n).    \label{CG-LCM_A.1_3_new}
\end{align}
From (\ref{CG-LCM-A.1_1_A}), it follows, that   the optimal channel input conditional distribution is Gaussian, with non-zero mean and conditional covariance which is independent of channel outputs. This will be verified using the characterization of FTFI capacity and the alternative equivalent alternative characterization.
  By  the entropy maximizing property of the Gaussian distribution the right hand side of  (\ref{CG-LCM-A.1_1_A}) is bounded above by the inequality $H(B^n) \leq H(B^{g,n})$, where $B^{g,n}\tri \{B_i^g: i=0,1, \ldots, n\}$ is jointly Gaussian distributed, and the average transmission cost constraint is satisfied. By (\ref{LCM-A.1}) the output process is jointly Gaussian if and only if $\big\{A_i, B_i, V_i: i=0, \ldots, n\big\}$ are jointly Gaussian.
 The upper bound is achieved if and only if   the channel input distribution is   Gaussian, denoted by  $\{ \pi_i^g(da_i|b^{i-1})\equiv {\bf P}_{A_i| B^{i-1}}^g(a_i| b^{i-1}) : i=0,1, \ldots, n\}$, having conditional mean which is a linear combination of $\big\{B_i: i=0, \ldots, n-1\big\}$, conditional covariance which is independent of the channel output process,  and the average transmission cost in (\ref{LCM-A.1}) is satisfied. Hence, the upper bound is achieved if and only if  the channel  output conditional distribution denoted by  $\{ \Pi_i^{\pi^g}(db_i|b^{i-1})\equiv {\bf P}_{B_i|B^{i-1}}^g(db_i|b^{i-1}): i=0,1, \ldots, n\}$ is also  Gaussian, with conditional mean which  is a linear combination of $\big\{B_i: i=0, \ldots, n-1\big\}$ and conditional covariance which is independent of the channel output process. Finally, by the linearity of the model (\ref{LCM-A.1}),  the upper bound is achieved if and only if  the channel input process is  Gaussian, denoted by $\big\{A_i^g: i=0, \ldots, n\big\}$.   Hence,   (\ref{CG-LCM_A.1_3}) is obtained.\\
(b) The above conclusion can be established via the alternative characterization given by (\ref{CMNon-A.1_10_new})-(\ref{LCMNon-A.1_4_new}), as follows. 
Since the optimal channel input distribution is $\big\{  {\bf P}_{A_i| B^{i-1}}(a_i| b^{i-1}) : i=0,1, \ldots, n\big\}$,  by Lemma~\ref{lemma-gs1989} there  exists a  measurable function $e_i^{A}: {\mathbb B}^{i-1} \times {\mathbb U}_i \longmapsto {\mathbb A}_i, {\mb U}_i \tri [0,1],  a_i= e_i^{A}(b^{i-1}, u_i),  i=0,1, \ldots, n$ such that 
\begin{align}
 {\bf P}_{U_i} \big( U_i : e_i^{A}(b^{i-1}, U_i)  \in da_i \big) = {\bf P}_{A_i| B^{i-1}}(da_i| b^{i-1}),  \hso  i=0,1, \ldots, n.\label{LCM-A.1_5}
 \end{align}
Substituting the randomized  strategy into the channel model (\ref{LCM-A.1}),  then 
\begin{align}
&B_i   = \sum_{j=0}^{i-1} C_{i,j} B_{j} + D_{i,i} \; e_i^{A}(B^{{i-1}}, U_i) + V_{i}, \hso i= 1, \ldots, n, \\
&{\cal E}_{[0,n]}^{IL-G-A} (\kappa) \tri \Big\{ e_i^{A}(b^{i-1}, u_i), i=0, \ldots, n: \hso \mbox{for a fixed $b^{i-1}$ the function } \;  \: e_i^{A}(b^{i-1}, \cdot)  \hso \mbox{is one-to-one} \nonumber \\
&\mbox{ and onto ${\mb A}_i$ for $i=0,\ldots, n$}, \hso  \frac{1}{n+1}\sum_{i=0}^n {\bf E}^{e^{A}} \big\{\langle e_i^{A}(B^{i-1}, U_i),R_{i,i} e_i^{A}(B^{i-1}, U_i)\rangle+ \langle B^{i-1}, Q_{i}(i-1) B^{i-1} \rangle \big\}    \leq \kappa         \Big\}. \label{LCM-A.1_6}
\end{align} 
By  the entropy maximizing property of the Gaussian distribution the right hand side of  (\ref{CG-LCM-A.1_1_A}) (with $\{a_i=e_i^A(b^{i-1},u_i): i=0, \ldots, n\}$) is bounded above by the inequality\footnote{The superscript indicates the distribution depends on the strategy $\{e_i^A(\cdot): i=0, \ldots, n\}$.} $H^{e^A}(B^n) \leq H^{e^A}(B^{g,n})$, where $B^{g,n}\tri \{B_i^g: i=0,1, \ldots, n\}$ is jointly Gaussian distributed.
However,  since  linear combination of any sequence of  RVs is Gaussian distributed  if and only if the sequence of RVs is also jointly Gaussian distributed, then necessarily the functions $\{e_i^{A}(\cdot,\cdot): i=0,1, \ldots, n\}$ are linear combinations of Gaussian RVs. Hence,
\bea
 a_i= e_i(b^{i-1}, u_i)\equiv \overline{e}_i^{A}(b^{i-1}, g_i(u_i)),\hso g_i: {\mathbb U}_i \longmapsto {\mathbb Z}\tri {\mathbb R}^q, \hso z_i=g_i(u_i),  \hso i=0,1, \ldots, n
\eea 
where $\overline{e}_i^A(b^{i-1}, z_i)$ is a linear combination of $(b^{i-1}, z_i)$ for $i=0, \ldots, n$, and such  $\big\{g_i(\cdot): i=0, \ldots, n\big\}$ always exist. Hence, the upper bound is achieved if and only if 
  $A^n = A^{g,n}\tri \{A_i^g: i=0,1, \ldots, n\}$ is  Gaussian distributed satisfying the average transmission cost constraint, and $\{Z_i=g_i(U_i): i=0,1, \ldots, n\}$ is a Gaussian sequence.  Hence, the alternative characterization of the FTFI capacity is given by (\ref{EXTR_A})-(\ref{LCM-A.1_8}), 
where the independence properties follow from   Assumption~A.(iii) and  the information structure of the maximizing channel input distribution, $\Big\{ {\bf P}_{A_i|A^{i-1}, B^{i-1}}(a_i| a^{i-1}, b^{i-1}) ={\bf P}_{A_i| B^{i-1}}^g(a_i| b^{i-1})-a.s. : i=0,1, \ldots, n\Big\}$, or CON.A.(2), and (\ref{G-lCM-A-CR}) is easily verified.

\subsection{Proof of Theorem~\ref{DP-UMCO}}
\label{appendix_C}
(a)  (\ref{LCM_B.1_3_C30}), (\ref{LCM_B.1_3_C40}), follows directly from the re-formulation of the problem.\\
 (b)  Clearly, (\ref{LCM_B.1_3_C50}) is the cost-to-go for (\ref{LCM_B.1_3_C30}). \\(c) The dynamic programming recursions 
follow directly from (\ref{LCM_B.1_3_C50}) \cite{vanschuppen2010,kumar-varayia1986}. \\
(d)-(e) The rest of the statements are obtained by solving the 
 dynamic programming equations, as done for  LQG stochastic optimal control problems \cite{vanschuppen2010}, with some modifications to account for the fact that the strategies are randomized (instead of deterministic). Let $C_n^{B.1}(b_{n-1})=-\langle b_{n-1}, sQ_{n,n-1}b_{n-1}\rangle +r(n) $,  $P(n)=s Q_{n,n-1}$, and $r(n)$ given by (\ref{DP-UMCO_C20}). It can be verified this is indeed the solution at the last stage of the dynamic programming recursions, i.e., (\ref{DP-UMCO_C10}), and that $g_n^{B.1,*}(b_{n-1})=0$. Then $P(n)=P^T(n)\geq 0$. Suppose for $j=i+1, i+2, \ldots, n$, $P(j)=P^T(j)\geq 0$, $C_j^{B.1}(b_{j-1})=-\langle b_{j-1}, P(j)b_{j-1}\rangle + r(j)$. It will be shown that $P(i)=P^T(i)\geq 0$, $C_i^{B.1}(b_{i-1})=-\langle b_{i-1}, P(i)b_{i-1}\rangle + r(i)$, as stated in (d), (e).\\ The following calculations follow directly from Assumptions B.1.(i) (i.e., ${\bf E}^{g^{B.1}}\Big\{Z_i\Big|B_{i-1}\Big\}=0, {\bf E}^{g^{B.1}}\Big\{V_i\Big|B_{i-1}\Big\}=0$, and $Z_i$ independent of $V_i$).
\begin{align}
&-s \Big\{\langle u_i, R_{i,i} u_i\rangle + \langle b_{i-1}, Q_{i,i-1} b_{i-1} \rangle\Big\} + {\bf E}^{g^{B.1}}\Big\{ C_{i+1}^{B.1}(B_{i}^g)\Big| B_{i-1}^g=b_{i-1}\Big\} \\
&=-s \Big\{\langle u_i, R_{i,i} u_i\rangle + \langle b_{i-1}, Q_{i,i-1} b_{i-1} \rangle\Big\} + {\bf E}^{g^{B.1}}\Big\{ C_{i+1}^{B.1}(C_{i,i-1} B_{i-1}^g+ D_{i,i} U_i^g+D_{i,i}Z_i+ V_i)\Big| B_{i-1}^g=b_{i-1}\Big\} \\
&=-\left[ \begin{array}{c} b_{i-1}\\ u_i \end{array} \right]^T  \left[ \begin{array}{cc} sQ_{i,i-1} & 0 \\ 0 &sR_{i,i}\end{array} \right]   \left[  \begin{array}{c} b_{i-1}\\ u_i \end{array} \right] +r(i+1) \nonumber \\
& - {\bf E}^{g^{B.1}}\Big\{\langle C_{i,i-1} B_{i-1}^g+ D_{i,i} U_i^g+D_{i,i}Z_i+ V_i, P(i+1) \Big(C_{i,i-1} B_{i-1}^g+ D_{i,i} U_i^g+D_{i,i}Z_i+ V_i\Big)\rangle   \Big| B_{i-1}=b_{i-1}\Big\}        \\
&= -\left[ \begin{array}{c} b_{i-1}\\ u_i \end{array} \right]^T  \left[ \begin{array}{cc} C_{i,i-1}^T P(i+1) C_{i,i-1} +sQ_{i,i-1} & C_{i,i-1}^T P(i+1) D_{i,i} \\ D_{i,i}^T P(i+1) C_{i,i-1} & D_{i,i}^T P(i+1) D_{i,i}+sR_{i,i}\end{array} \right]   \left[  \begin{array}{c} b_{i-1}\\ u_i \end{array} \right] +r(i+1)\nonumber \\
&- tr \Big(P(i+1) \Big[D_{i,i} K_{Z_i}D_{i,i}^T+ K_{V_i}\Big]\Big) \\
&= -\left[ \begin{array}{c} b_{i-1}\\ u_i \end{array} \right]^T  \left[ \begin{array}{cc} H_{11}(i) & H_{12}(i) \\ H_{12}^T(i) & H_{22}(i)\end{array} \right]   \left[  \begin{array}{c} b_{i-1}\\ u_i \end{array} \right] +r(i+1)
- tr \Big(P(i+1) \Big[D_{i,i} K_{Z_i}D_{i,i}^T+ K_{V_i}\Big]\Big) 
\end{align}
Note that 
\begin{align}
H_{11}(i)=H_{11}^T(i)\geq 0, \hso H_{22}(i)=H_{22}^T(i)= D_{i,i} P(i+1) D_{i,i}+sR_{i,i}\geq  s R_{i,i} > 0, \hso \mbox{if} \hso s>0. 
\end{align}
By the induction hypothesis and $R_{i,i} \in S_{++}^{q\times q}, Q_{i,i-1} \in S_{+}^{p\times p}$, the following hold.
\begin{align}
&  \sup_{  (u_i,K_{Z_i})\in {\mathbb R}^q\times S_+^{q\times q}  }   \Big\{ \frac{1}{2} \log \frac{ | D_{i,i} K_{Z_i} D_{i,i}^T+K_{V_i}|}{|K_{V_i}|}- tr\Big(s R_{i,i} K_{Z_i}\Big)  \nonumber \\
&- s \Big[ \langle u_i, R_{i,i} u_i\rangle + \langle b_{i-1}, Q_{i,i-1} b_{i-1} \rangle  \Big]   + {\bf E}^{g^{B.1}}\Big\{  C_{i+1}^{B.1}(B_{i}^g)       \Big| B_{i-1}^g=b_{i-1}\Big\}  \Big\}\\
&=  \sup_{  (u_i,K_{Z_i})\in {\mathbb R}^q\times S_+^{q\times q}  } \Big\{ \frac{1}{2} \log \frac{ | D_{i,i} K_{Z_i} D_{i,i}^T+K_{V_i}|}{|K_{V_i}|}-  tr\Big(s R_{i,i} K_{Z_i}\Big)- tr \Big(P(i+1) \Big[D_{i,i} K_{Z_i}D_{i,i}^T+ K_{V_i}\Big]\Big) \\
&-\left[ \begin{array}{c} b_{i-1}\\ u_i \end{array} \right]^T  \left[ \begin{array}{cc} H_{11}(i) & H_{12}(i) \\ H_{12}^T(i) & H_{22}(i)\end{array} \right]   \left[  \begin{array}{c} b_{i-1}\\ u_i \end{array} \right] +r(i+1)
 \Big\}\\
&= \sup_{ K_{Z_i}\in  S_+^{q\times q}  } \sup_{  u_i\in {\mathbb R}^q }\Big\{-\left[ \begin{array}{c} b_{i-1}\\ u_i+H_{22}^{-1}(i)H_{12}^T(i) b_{i-1}  \end{array} \right]^T  \left[ \begin{array}{cc} H_{11}(i)-H_{12}(i)H_{22}^{-1}(i)H_{12}^T(i) & 0 \\ 0 & H_{22}(i)\end{array} \right]  \nonumber \\
& \left[  \begin{array}{c} b_{i-1}\\ u_i+H_{22}^{-1}(i)H_{12}^T(i) b_{i-1}  \end{array} \right] +\frac{1}{2} \log \frac{ | D_{i,i} K_{Z_i} D_{i,i}^T+K_{V_i}|}{|K_{V_i}|}  - tr\Big(s R_{i,i} K_{Z_i}\Big)\Big]\nonumber \\
&- tr \Big(P(i+1) \Big[D_{i,i} K_{Z_i}D_{i,i}^T+ K_{V_i}\Big]\Big)+r(i+1)\Big\}
\end{align}
\begin{align}
&= \sup_{ K_{Z_i}\in  S_+^{q\times q}  }\Big\{- \langle b_{i-1}, [H_{11}(i)-H_{12}(i)H_{22}^{-1}(i)H_{12}^T(i) ]b_{i-1} \rangle +\frac{1}{2} \log \frac{ | D_{i,i} K_{Z_i} D_{i,i}^T+K_{V_i}|}{|K_{V_i}|}\nonumber \\
& - tr\Big(s R_{i,i} K_{Z_i}\Big)- tr \Big(P(i+1) \Big[D_{i,i} K_{Z_i}D_{i,i}^T+ K_{V_i}\Big]\Big)+r(i+1)\Big\}, \hso \mbox{if} \hso s>0, \hso \\
&\mbox{(because} \hso H_{22}(i) >0, \hso \mbox{and the optimal control is} \hso u_{i}= -H_{22}^{-1}(i) H_{12}^T(i) b_{i-1}), \\
&= - \langle b_{i-1}, [H_{11}(i)-H_{12}(i)H_{22}^{-1}(i)H_{12}^T(i) ]b_{i-1} \rangle + \sup_{ K_{Z_i}\in  S_+^{q\times q}  }\Big\{ \frac{1}{2} \log \frac{ | D_{i,i} K_{Z_i} D_{i,i}^T+K_{V_i}|}{|K_{V_i}|}\nonumber \\
& - tr\Big(s R_{i,i} K_{Z_i}\Big)- tr \Big(P(i+1) \Big[D_{i,i} K_{Z_i}D_{i,i}^T+ K_{V_i}\Big]\Big)+r(i+1)\Big\}\\
&=- \langle b_{i-1}, P(i) b_{i-1} \rangle + r(i), \hso \mbox{if} \hso (\ref{DP-UMCO_C12}), (\ref{DP-UMCO_C13}) \hso \mbox{hold.}
\end{align}
Note that  $P(\cdot)\tri H_{11}(\cdot)-H_{12}(\cdot)H_{22}^{-1}(i)H_{12}^T(\cdot)$ is precisely (\ref{DP-UMCO_C12}), and it is not affected by $\{K_{Z_i}: i=0, \ldots, n\}$, that is, the deterministic part of the strategy is separated from the random part of the strategy. For each $i=0, 1, \ldots, n-1$, define $g_i^{B.1,*}(\cdot)$ by 
\begin{align}
g_i^{B.1,*}(b_{i-1}) \tri &-H_{22}^{-1}(i)H_{12}^T(i)b_{i-1} 
=-\Big[D_{i,i}P(i+1) D_{i,i} +s R_{i,i}\Big]^{-1} D_{i,i}^T P(i+1)C_{i,i-1} b_{i-1}, \hso \mbox{then} \\
C_{i}^{B.1}(b_{i-1})= &  -\langle b_{i-1}, P(i) b_{i-1}\rangle + r(i), \hso \mbox{since $P(\cdot)$ does not depend on $K_{Z_i}$, then,}\\
=&-\langle b_{i-1}, P(i) b_{i-1}\rangle +s (n+1)\kappa \nonumber \\
&+ \sup_{ K_{Z_j}\in  S_+^{q\times q}: j=i, \ldots, n} \Big\{    \frac{1}{2}\sum_{j=i}^n \log \frac{ | D_{j,j} K_{Z_j} D_{j,j}^T+K_{V_j}|}{|K_{V_j}|}  -\sum_{j=i}^n  tr\Big(s\; R_{j,j} K_{Z_j}+P(j+1) \big[D_{j,j} K_{Z_j}D_{j,j}^T+ K_{V_j}\big]\Big) \Big\}.
\end{align}
Finally, since $P(\cdot)\tri H_{11}(\cdot)-H_{12}(\cdot)H_{22}^{-1}(i)H_{12}^T(\cdot)$,  the Riccati difference equation (\ref{DP-UMCO_C12_a}) is obtained. \\
 (f), (g) follow from the constraint and expression of cost-to-go. This completes the prove.

\subsection{Lyapunov  \& Riccati Equations of Gaussian Linear Stochastic Systems and LQG Theory}
\label{appendix_B}
In this section, some of the basic concepts of linear systems are introduced, and  fundamental theorems relating Lyapunov stability and Riccati equations to stability of linear systems are given. These are found in \cite{kumar-varayia1986,caines1988,vanschuppen2010}. \\ 
The open unit disc of the space of complex number ${\mathbb C}$, is defined by ${\mathbb D}_o \tri \big\{c \in {\mathbb C}: |c| <1\big\}$. The Spectrum of a matrix $A \in {\mathbb R}^{q \times q}$ (the set of all its eigenvalues), is denoted by $spec(A) \subset {\mathbb C}$. A  matrix $A \in {\mathbb R}^{q \times q}$ is called exponentially stable if all its eigenvalues are within the open unit disc, that is,  $spec(A) \subset {\mathbb D}_o$. \\
Consider the time-invariant representation of a finite-dimensional Gaussian system described by the following equations.  
\begin{align}
&X_{i+1}=A X_{i} + B W_{i}, \hso X_0=x_0, \hso i=1, \ldots, n, \label{LSSM_K1} \\
&Y_i =C X_i + N V_i, \hso i=0, \ldots, n \label{LSSM_K2}\\
& X_0 \in {\mathbb R}^q, \: X_0 \sim N(0,K_{X_0}), \hso W_i \sim N(0, K_W), W_i \in {\mb R}^k, \hso Y_i \in {\mathbb R}^p, \hso   V_i \sim N(0, K_V), V_i \in {\mathbb R}^m, i=0, \ldots, n \\
& \mbox{and $\{(W_i, V_i): i=0, \ldots, n\}$ mutual independent and  independent of $X_0$}. \label{LSSM_K3}
\end{align}
Answers to questions of convergence of covariance matrices and existence of  invariant distribution  of the joint process $\{(X_i, Y_i): i=0, \ldots, n\}$ (and its marginals), and convergence of conditional covariances   and existence of conditional invariant  distribution of minimum mean-square error estimates of  $\{X_i: i=0, \ldots, n\}$ from data $\{Y_i: i=0, 1, \ldots,\}$, governed by the Kalman filter recursions,  are directly related to certain properties of the matrices $\{A, B, C, N, K_W, K_V\}$.
These are defined below.  \\

\begin{definition}(Stabilizablity and Detectability)\\
\label{def_st-de}
Let  $(A, B, C) \in  {\mathbb R}^{q \times q} \times  {\mathbb R}^{q \times k} \times {\mb R}^{p\times q}$. \\
(a) The pair of matrices $(A, B)$ is called stabilizable of there exists an matrix $K \in {\mb R}^{k\times q}$ such that the eigenvalues of $A-B K$ lie in ${\mb D}_o$ (i.e., $spec\big(A-B K\big) \subset {\mb D}_o$). \\
(b) The pair of matrices $(C, A)$ is called detectable if $(A^T, C^T)$ is stabilizable, i.e., there exists an matrix $L \in {\mb R}^{q\times p}$ such that the eigenvalues of $A-L C$ lie in ${\mb D}_o$ (i.e., $spec\big(A-L C\big) \subset {\mb D}_o$).\\
(c) The pair of matrices $(A, B)$ is called controllable if 
\begin{align}
rank \Big({\cal C}\Big)=q, \hso {\cal C} \tri \left[ \begin{array}{cccc} B & A B & \ldots & A^{q-1} B \end{array} \right] \in {\mathbb R}^{q \times q k}.
\end{align}
(d) The pair of matrices $(C, A)$ is called observable of $(A^T, C^T)$ is controllable, i.e., if
\begin{align}
rank \Big({\cal O}\Big)=q, \hso {\cal O} \tri \left[ \begin{array}{l}  C \\ C A \\ \vdots \\ C A^{q-1}\end{array} \right] \in {\mathbb R}^{pq \times q}.
\end{align}
\end{definition}

Note that  $(A, B)$  controllable pair implies $(A, B)$  stabilizable pair, and $(C, A)$  observable  pair implies $(C, A)$  detectable pair. \\
The next theorem relates covariances of time-invariant finite dimensional Gaussian systems (\ref{LSSM_K1})-(\ref{LSSM_K1}) to Lyapunov equations. It is borrowed from \cite{vanschuppen2010}. \\

\begin{theorem} \cite{vanschuppen2010}(Properties of Lyapunov equations)\\
\label{lemma_vanschuppen}
Consider the covariance function $\Sigma: \{0,1,\ldots, n\} \longmapsto  {\mathbb R}^{q\times q}$ of the process $\{X_i: i=0, \ldots, n\}$ satisfying the recursion 
\begin{align}
\Sigma_i=A \Sigma_{i-1} A^T  +B  K_{W} B^T, \hso \Sigma_0=\mbox{given}, \hso i=1, \ldots, n. \label{MP_1_new_TV}
\end{align}
Consider the discrete Lyapunov equation for $\Sigma \in {\mb R}^{q \times q}$: 
\begin{align}
\Sigma=A  \Sigma A^T +B K_{W} B^T. \label{MP_1_new}
\end{align}
The following hold.
 \begin{description}
\item[(a)] If $A$ is an exponentially stable matrix then $\lim_{i\longrightarrow \infty } \Sigma_i =\Sigma$ exists and $\Sigma$ is a solution of the equation (\ref{MP_1_new}) (irrespectively of initial condition).

\item[(b)] If $A$ is an exponentially stable matrix then (\ref{MP_1_new}) has an unique solution, which satisfies $\Sigma=\Sigma^T\succeq  0$. 

\item[(c)] Let $r \in \{1,2,\ldots,\}$, $G\in {\mathbb R}^{q \times r}$ be such that   $B K_W B ^T= G G^T$.  Assume that $\{A,G\}$ is a  stabiizable  pair and there exists a $\Sigma \in {\mathbb R}^{q \times q}$ which satisfies
\begin{align}
\Sigma=A  \Sigma A^T +B K_{W} B^T, \hso \mbox{and} \hso \Sigma=\Sigma^T \succeq 0.
\end{align}
Then $A$ is an exponentially stable matrix.

\item[(d)] Let $\Sigma \in {\mathbb R}^{q \times q}$ be a solution of (\ref{MP_1_new}). Any two of the following three statements implies the third:
\begin{enumerate}
\item $A$ is an exponentially stable matrix ($spec(A) \subset {\mathbb D}_o$);

\item $\big( A,G\big)$ is a controllable pair ($Rank({\cal C})=q$);

\item $\Sigma \succ 0$.
\end{enumerate}
\end{description}
\end{theorem}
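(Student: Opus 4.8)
\textbf{Proof proposal for Theorem~\ref{lemma_vanschuppen}.}
The plan is to reduce all four parts to the explicit series solution of the recursion (\ref{MP_1_new_TV}) together with the Popov--Belevitch--Hautus (PBH) eigenvector characterizations of stabilizability and controllability implicit in Definition~\ref{def_st-de}. First I would unroll (\ref{MP_1_new_TV}) to obtain $\Sigma_i = A^i \Sigma_0 (A^T)^i + \sum_{j=0}^{i-1} A^j\, G G^T\, (A^T)^j$, writing $G G^T = B K_W B^T$ throughout so that the same algebraic object appears in every part. This single identity carries the whole argument: convergence and fixed-point properties give (a) and (b), an eigenvector sandwich of (\ref{MP_1_new}) gives (c), and positivity/degeneracy of the series gives (d).

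For (a), exponential stability $spec(A)\subset{\mathbb D}_o$ yields a spectral-radius bound $\|A^j\|\le c\,\rho^j$ with $\rho\in(0,1)$, so the initial-condition term $A^i\Sigma_0(A^T)^i$ vanishes and the series $\Sigma = \sum_{j\ge 0} A^j G G^T (A^T)^j$ converges absolutely, independently of $\Sigma_0$; reindexing the series verifies $\Sigma = A\Sigma A^T + G G^T$. For (b), existence is (a), symmetry and $\Sigma\succeq 0$ follow since $\Sigma$ is a convergent sum of symmetric positive semidefinite terms, and uniqueness follows because any two solutions differ by $\Delta$ with $\Delta = A\Delta A^T = A^i\Delta(A^T)^i \to 0$, forcing $\Delta=0$. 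These two parts are essentially bookkeeping once the series is in hand.

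The main obstacle is part (c), the converse stability statement, since there one must extract stability from a mere positive semidefinite solution rather than read it off a series. The approach I would take is an eigenvector argument: for $\lambda\in spec(A)$ pick a left eigenvector $w\neq 0$ with $w^*A=\lambda w^*$, and sandwich (\ref{MP_1_new}) to get $(1-|\lambda|^2)\,w^*\Sigma w = \|G^* w\|^2 \ge 0$. Since $\Sigma\succeq 0$ gives $w^*\Sigma w\ge 0$, any eigenvalue with $|\lambda|\ge 1$ (both the strict case and the boundary case $|\lambda|=1$) forces $G^* w=0$, i.e. $w^* G=0$. This exhibits an unstable left eigenvector of $A$ lying in the kernel of $G^T$, which contradicts the PBH stabilizability of $(A,G)$; hence $spec(A)\subset{\mathbb D}_o$. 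The delicate points here will be the careful placement of conjugate transposes (complex $\lambda,w$ for real $A$), matching the left-eigenvector form to the correct PBH condition, and making sure the $|\lambda|=1$ boundary is genuinely excluded by stabilizability and not merely by controllability.

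For (d) I would prove the three implications separately, each reusing the series $\Sigma=\sum_{j\ge 0}A^j G G^T(A^T)^j$ and Cayley--Hamilton to truncate infinite sums to the $q$ terms appearing in ${\cal C}$. Concretely: (1)+(2)$\Rightarrow$(3), because $v^*\Sigma v=\sum_j\|G^T(A^T)^j v\|^2=0$ would force $v^*[\,G\ AG\ \cdots\ A^{q-1}G\,]=0$, contradicting $\mathrm{rank}({\cal C})=q$, so $\Sigma\succ0$; (1)+(3)$\Rightarrow$(2) is the contrapositive, since a rank deficiency of ${\cal C}$ produces a direction $v$ annihilating the series, contradicting $\Sigma\succ0$; and (2)+(3)$\Rightarrow$(1) follows by noting controllability implies stabilizability and then invoking part (c) with $\Sigma\succ0\succeq0$. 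The only care needed in (d) is the clean use of Cayley--Hamilton to pass between the infinite series and the finite controllability matrix.
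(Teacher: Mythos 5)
Your proof is correct, but note that the paper itself gives no proof of this theorem: it is stated as a result quoted from \cite{vanschuppen2010} (Appendix~\ref{appendix_B} explicitly says these facts "are found in" the cited references), so there is no in-paper argument to compare against. Your route --- the unrolled series solution $\Sigma_i = A^i\Sigma_0(A^T)^i + \sum_{j=0}^{i-1}A^j G G^T (A^T)^j$ for parts (a), (b), (d), plus the left-eigenvector sandwich $(1-|\lambda|^2)\,w^*\Sigma w = \|G^T w\|^2$ for part (c) --- is the standard textbook proof, and all four parts go through: the vanishing of the initial-condition term and reindexing give (a); symmetry, positive semidefiniteness, and the contraction argument $\Delta = A^i\Delta(A^T)^i \to 0$ give (b); and the three implications in (d), using Cayley--Hamilton and part (b) to identify the given solution with the series, are all sound. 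One point of hygiene: you assert that the PBH characterization of stabilizability is "implicit in Definition~\ref{def_st-de}", but the paper defines stabilizability only through existence of a gain $K$ with $spec(A-GK)\subset {\mathbb D}_o$, and the equivalence with PBH is itself a nontrivial theorem. Fortunately the implication you actually need is the easy direction and follows in one line from the paper's definition: if $w^*A=\lambda w^*$ with $|\lambda|\ge 1$ and $w^*G=0$, then $w^*(A-GK)=\lambda w^*$ for every $K$, so $A-GK$ has an eigenvalue of modulus at least one, contradicting stabilizability; this closes part (c), including the boundary case $|\lambda|=1$, without importing the full PBH equivalence. With that substitution your argument is entirely self-contained relative to the paper's definitions.
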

Note that if the initial condition of (\ref{MP_1_new_TV}) is set to  $\Sigma_0= \Sigma$, where $\Sigma$ is a solution of (\ref{MP_1_new}), then $\Sigma_i=\Sigma, i=1, 2, \ldots, n$, that is, the solution of the discrete recursion (\ref{MP_1_new_TV}) is stationary.

Consider the problem of  estimating $\{X_i: i=0, \ldots,\}$ from $\{Y_i: i=0, 1, \ldots, \}$, for  the time-invariant finite dimensional Gaussian system (\ref{LSSM_K1})-(\ref{LSSM_K3}), with respect to the following criterion.
\bea
\inf_{g_i(\cdot): i=0, \ldots, n } {\bf E} \Big\{|| X_i- g_i(Y^{i-1})||_{{\mb R}^q}^2\Big\}, \hso \mbox{where $g_i(\cdot)$ is a measurable function of $y^{i-1}$}, \hso i=0, \ldots, n.
\eea
Then the optimal estimator exists, it is unique, and it is given by the conditional expectation $g_i^*(y^{i-1})={\bf E}\big\{ X_i| Y^{i-1}\big\}= \int x {\bf P}(dx|y^{i-1}), i=0, \ldots, n$. 
The conditional distribution  $\big\{ {\bf P}(dx|y^{i-1}): i=0, \ldots, n\big\}$ is finite dimensional, and it is  described by only two statistics, the conditional mean and the conditional covariance, defined by 
\bea
\widehat{X}_{i|i-1} \tri {\bf E}\Big\{ X_i| Y^{i-1}\Big\},  \hso  Q_{i|i-1} \tri {\bf E} \Big\{ \Big(X_i- \widehat{X}_{i|i-1}\Big)\Big(X_i- \widehat{X}_{i|i-1}\Big)^T\Big| Y^{i-1}\Big\}, \hso i=0, \ldots, n.
\eea
The  conditional covariance is independent of the data and it is equal to the unconditional covariance,
\bea
 Q_{i|i-1} = {\bf E} \Big\{ \Big(X_i- \widehat{X}_{i|i-1}\Big)\Big(X_i- \widehat{X}_{i|i-1}\Big)^T\Big\}\hso i=0, \ldots, n.
 \eea
Moreover, $\big\{\widehat{X}_{i|i-1}:i=0, \ldots, n\big\}$ satisfies a recursive equation known as the Kalman-filter equation, and $\big\{Q_{i|i-1}: i=0, \ldots, n\big\}$ satisfies a recursive equation, known as the filtering Riccati difference matrix  equation. \\
The properties of the Kalman-filter, such as, the convergence of the  covariance (of the error) and the existence of invariant conditional distribution are determined from the properties of Riccati difference and algebraic equations. \\
The following theorem is borrowed from \cite{vanschuppen2010}; it  summarizes properties of Riccati difference and algebraic equations. \\

\begin{theorem}\cite{vanschuppen2010}
(Properties of Riccati equations)\\ 
\label{RIC}
Assume $N K_V N^T\succ 0$.  Let $f: {\mb R}^{q\times q} \longmapsto {\mb R}^{q \times q}, G \in {\mb R}^{q \times q}$ be defined by
\begin{align}
f(Q) \tri A Q A^T + B K_W B^T -A Q C^T \big[C QC^T + N K_V N^T\big]^{-1} \big(A Q C^T\big)^T, \hst G G^T \tri B K_W B^T.
\end{align}
Let $F : {\mb R}^{q\times q} \longmapsto {\mb R}^{q \times p}, Q \longmapsto F(Q),$ and $A: {\mb R}^{p \times p} \longmapsto {\mb R}^{p \times p}, A \longmapsto A(Q)$ be defined by
\begin{align}
F(Q)\tri A Q C^T \big[C Q C^T + N K_V N^T\big]^{-1}, \hst A(Q)=A- F(Q) C
\end{align}
Define the discrete-time Riccati recursion for 
$Q: \{ 0,1, \ldots, n\} \longmapsto {\mathbb R}^{q\times q}$ by
\bea
Q_{i+1}=f(Q_i), \hso Q_0=\mbox{given}, \hso i=1, \ldots, n. 
\eea
and the algebraic Riccati equation for the matrix $Q \in {\mb R}^{q \times q}$:
\bea
Q=f(Q). \label{ARE_1}
\eea
The following hold.\\
(a) If $(C, A)$ is a detectable pair and $(A, G)$ is a stabilizable pair, then there exists a positive definite solution $Q \in {\mathbb R}^{q\times q}$ to the algebraic Riccati equation
\bea
Q=f(Q), \hso Q=Q^T\succeq 0. \label{RIC_1}
\eea
(b) If $(A, G)$ is a stabilizable pair then the algebraic Riccati equation (\ref{RIC_1}) has at most one solution.\\
(c) Under the assumptions of (a) the limit $\lim_{n \longrightarrow \infty} Q_i =Q$ exists and $Q$ is the positive definite solution of the algebraic Riccati equation (\ref{ARE_1}).\\
(d) If $(A, G)$ is a stabilizable pair and if there exists a positive definite solution $Q$ to the algebraic Riccati equation (\ref{ARE_1}), then $spec\big(A(Q)\big) \subset {\mb D}_o$. \\
(e) Consider the algebraic Riccati equation for $Q \in {\mb R}^{n \times n}$ given by (\ref{ARE_1}), 
with the conditions that $C Q C^T +N K_V N^T \succ 0$ and $spec\big(A(Q)\big) \subset {\mb D}_o$ (but without the condition that $Q=Q^T \succeq 0$). The  algebraic Riccati equation with these conditions has at most one solution $Q \in {\mb R}^{q\times q}$.\\
(f) Assume $(A, G)$ is a controllable pair and that there exists a $Q \in {\mb R}^{q \times q}$ such that $Q=f(Q)$ and $Q=Q^T \succeq 0$. Then $Q \succ 0$.
\end{theorem}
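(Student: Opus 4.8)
The plan is to reduce the whole statement to the duality between the filtering Riccati equation and the standard control Riccati equation, and then to exploit the monotonicity of the Riccati iteration together with Theorem~\ref{lemma_vanschuppen} on discrete Lyapunov (Stein) equations. First I would record the algebraic identity
\begin{align}
f(Q)=A(Q)\,Q\,A(Q)^T + B K_W B^T + F(Q)\,(N K_V N^T)\,F(Q)^T, \label{riccati_closed_loop}
\end{align}
obtained by completing the square, where $A(Q)=A-F(Q)C$ is the closed-loop matrix and $F(Q)=AQC^T[CQC^T+NK_VN^T]^{-1}$. This identity is the workhorse: it exhibits $f$ as order-preserving on the cone of symmetric positive semidefinite matrices, and it rewrites every fixed point $Q=f(Q)$ as a Stein equation $Q=A(Q)\,Q\,A(Q)^T+\Pi(Q)$ with $\Pi(Q)\succeq B K_W B^T = GG^T$. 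Transposing, the data $(A,C,G,NK_VN^T)$ correspond to a control Riccati equation for $(A^T,C^T,G)$, under which detectability of $(C,A)$ becomes stabilizability of $(A^T,C^T)$ and stabilizability of $(A,G)$ becomes detectability of $(G^T,A^T)$, placing all parts in one coherent framework.

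For existence (a) and convergence (c), I would run the recursion $Q_{i+1}=f(Q_i)$ from $Q_0=0$. Since $Q_1=f(0)=BK_WB^T\succeq 0=Q_0$, the order-preserving property gives $Q_i\preceq Q_{i+1}$ by induction, so $\{Q_i\}$ is monotone nondecreasing; detectability of $(C,A)$ supplies a fixed positive semidefinite upper bound, whence the sequence converges to a fixed point $Q\succeq 0$ of $f$. This proves (a), and once uniqueness is available, it proves the convergence claim (c) as well. For stability (d) I would feed \eqref{riccati_closed_loop} into Theorem~\ref{lemma_vanschuppen}(c): the existence of a positive definite solution together with stabilizability of $(A,G)$ (hence of the pair driving the Stein equation) forces $spec(A(Q))\subset {\mathbb D}_o$. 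Part (f) is the matching converse via Theorem~\ref{lemma_vanschuppen}(d): controllability of $(A,G)$, stability of $A(Q)$, and the Stein representation together force $Q\succ 0$.

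Uniqueness (b) and (e) would both come from a difference argument. Given two solutions $Q_1,Q_2$, subtracting their algebraic equations and simplifying with the closed-loop identity yields a homogeneous Stein equation $\Delta = A(Q_1)\,\Delta\,A(Q_2)^T$ for $\Delta=Q_1-Q_2$. In case (e) both $A(Q_1),A(Q_2)$ are exponentially stable by hypothesis, so iterating the equation drives $\Delta$ to zero. In case (b) I would first establish, from part (d) and stabilizability, that any positive semidefinite solution is stabilizing, thereby reducing to the stable situation and concluding $\Delta=0$. Throughout, I would keep the positive-definiteness hypotheses ($NK_VN^T\succ 0$, and the invertibility of $CQC^T+NK_VN^T$ along the iterates) aligned so that $F(Q)$ and $A(Q)$ are well defined at every stage.

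The hardest step will be the upper bound needed for existence in (a): converting detectability of $(C,A)$ into an explicit positive semidefinite matrix dominating every iterate $Q_i$. The clean route is to choose a gain $L$ with $spec(A-LC)\subset {\mathbb D}_o$, form the error covariance of the suboptimal observer built from $L$ (which satisfies a Stein equation with the exponentially stable matrix $A-LC$ and hence converges by Theorem~\ref{lemma_vanschuppen}(a)), and then prove by the optimality/monotonicity encoded in $f$ that this suboptimal covariance dominates $Q_i$ for all $i$. Getting that comparison inequality to hold uniformly in $i$, rather than merely at the fixed point, is the delicate part and is where the structure of \eqref{riccati_closed_loop} must be used most carefully.
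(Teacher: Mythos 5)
The paper contains no proof of Theorem~\ref{RIC} for you to be compared against: it is stated as a borrowed result from \cite{vanschuppen2010} and is only invoked (via the substitutions (\ref{sub_rel})) in the feedback-capacity analysis. So the only meaningful comparison is against the standard textbook development that the citation points to, and on that score your skeleton is the right one. The completion-of-squares identity $f(Q)=A(Q)\,Q\,A(Q)^T+GG^T+F(Q)\big(NK_VN^T\big)F(Q)^T$ is correct (a direct expansion confirms it); monotonicity and your "delicate" upper bound both come from the variational form $f(Q)=\min_{L}\big[(A-LC)\,Q\,(A-LC)^T+GG^T+L\,(NK_VN^T)\,L^T\big]$ — pick any injection gain $L$ with $spec(A-LC)\subset {\mb D}_o$ (detectability), and induction gives $Q_i\preceq \Sigma_i$ where $\Sigma_i$ is the Stein recursion for that fixed $L$, which converges by Theorem~\ref{lemma_vanschuppen}, (a); the difference identity $Q_1-Q_2=A(Q_1)\,(Q_1-Q_2)\,A(Q_2)^T$ then delivers (b) and (e) exactly as you say.

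Two genuine gaps remain. First, part (c) asserts convergence from an arbitrary given $Q_0$, while your monotone argument covers only $Q_0=0$; uniqueness of the fixed point does not upgrade this, since it says nothing about other orbits. You need an extra step, e.g.\ propagating $Q_i-Q$ through the difference identity and using that the closed-loop matrices converge to the stable $A(Q)$, or sandwiching the orbit of an arbitrary $Q_0\succeq 0$. Second, your treatments of (b) and (f) lean on part (d) to conclude a solution is stabilizing, but (d) as stated presupposes a positive \emph{definite} solution — in (f) that is precisely what you are trying to prove, and in (b) you only have $Q\succeq 0$, so as written the argument is circular or inapplicable. The repair must be made explicit: Theorem~\ref{lemma_vanschuppen}, (c) requires only $\Sigma\succeq 0$, so every PSD solution of (\ref{ARE_1}) is stabilizing, provided you also supply the PBH step you only gesture at, namely that stabilizability of $(A,G)$ passes to the pair driving the Stein equation: if $x^*A(Q)=\lambda x^*$, $x^*G=0$ and $x^*F(Q)(NK_VN^T)^{1/2}=0$, then $x^*F(Q)=0$ (since $NK_VN^T\succ 0$), hence $x^*A=\lambda x^*$, contradicting PBH when $|\lambda|\geq 1$. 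Alternatively, (f) admits a one-line direct proof that avoids (d) altogether: if $Qx=0$, the Stein form forces $G^Tx=0$, $F(Q)^Tx=0$ and $QA^Tx=0$, so $\ker Q$ is $A^T$-invariant and orthogonal to the columns of $G$, contradicting controllability of $(A,G)$.
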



\end{appendix}

\bibliographystyle{IEEEtran}
\bibliography{Bibliography_capacity}

\end{document}